\documentclass[12pt]{article}

%\usepackage{xr}
%\externaldocument{supp}

\RequirePackage[colorlinks,citecolor=blue,urlcolor=blue]{hyperref}

\usepackage{amsmath}
\usepackage{graphicx}
\usepackage{natbib}
\usepackage{url} % not crucial - just used below for the URL 
\usepackage{indentfirst}

\RequirePackage{amsthm,amsfonts,amssymb}
\usepackage[shortlabels]{enumitem}
\usepackage{booktabs,tabularx}
\RequirePackage{graphicx}

\makeatother
\def\|{\Vert}
\def\tilde {\widetilde}
\def\hat {\widehat}
\def\bar {\overline}

\def\mC{\mathcal C}

\def\mE{\mathcal E}
\def\mI{\mathcal I}
\def\mG{\mathcal G}
\def\mL{\mathcal L}
\def\mN{\mathcal N}
\def\mP{\mathcal P}
\def\mQ{\mathcal Q}
\def\mR{\mathbb  R}
\def\mS{\mathcal S}
\def\mV{\mathcal V}

\def\bB{\boldsymbol B}
\def\bC{\boldsymbol C}
\def\bK{\boldsymbol K}
\def\bt{\boldsymbol t}
\def\bu{\boldsymbol u}
\def\bY{\boldsymbol Y}
\def\bx{\boldsymbol x}
\def\balpha{\boldsymbol \alpha}

\def\var{\operatorname{var}}
\def\bias{\operatorname{bias}}

\def\tr{\operatorname{tr}}
\def\cov{\operatorname{cov}}
\def\cor{\operatorname{cor}}
\def\sign{\mbox{sign}}

\def\AMISE{\mbox{AMISE}}

\def\CV{\mbox{CV}}
\def\argmin{\operatorname*{argmin}}

\def\GPA{\textup{GPA}}
\def\PGPA{\textup{PGPA}}
\def\OS{\textup{OS}}
\def\MA{\textup{MA}}
\def\BE{\textup{BE}}
\def\KRR{\textup{KRR}}

\theoremstyle{plain}

\newtheorem{theorem}{Theorem}%[section]
\newtheorem{lemma}{Lemma}

%\pdfminorversion=4
% NOTE: To produce blinded version, replace "0" with "1" below.
\newcommand{\blind}{0}

% DON'T change margins - should be 1 inch all around.
\addtolength{\oddsidemargin}{-.5in}%
\addtolength{\evensidemargin}{-1in}%
\addtolength{\textwidth}{1in}%
\addtolength{\textheight}{1.7in}%
\addtolength{\topmargin}{-1in}%

\usepackage{xcolor}

\def\spacingset#1{\renewcommand{\baselinestretch}%
{#1}\small\normalsize} \spacingset{1}

\setlength{\bibsep}{-2pt}

% \usepackage{caption,setspace}
% \captionsetup[table]{font={stretch=0.7}}

\begin{document}

\date{}

%%%%%%%%%%%%%%%%%%%%%%%%%%%%%%%%%%%%%%%%%%%%%%%%%%%%%%%%%%%%%%%%%%%%%%%%%%%%%%

\if0\blind
{
  \title{\bf Grid Point Approximation for Distributed Nonparametric Smoothing and Prediction}
 \author{ Yuan Gao$^1$,  Rui Pan$^2$, %\thanks{ 
% 	Rui Pan (panrui\_cufe@126.com) is the corresponding author. 
%Yuan Gao's research is supported by the Postdoctoral Fellowship Program of CPSF (GZC20230111). 
%The research of Rui Pan is supported by the National Natural Science Foundation of China (No. 72471254, 72471255), the Program for Innovation Research, the Disciplinary Funds and the Emerging Interdisciplinary Project of Central University of Finance and Economics. 
%Feng Li is supported by the National Social Science Fund of China (22BTJ028). 
%Riquan Zhang's research is supported by the National Natural Science Foundation of China  (No. 12371272), and the Basic Research Project of Shanghai Science and Technology Commission (22JC1400800).
%Hansheng Wang's research is partially supported by National Natural Science Foundation of China (No. 12271012).
%}\hspace{.2cm}, 
   Feng Li$^1$,  Riquan Zhang$^3$, Hansheng Wang$^1$   \vspace{0.2cm} \\ 
    \small $^1$Guanghua School of Management, Peking University \\
     \small $^2$School of Statistics and Mathematics, Central University of Finance and Economics\\
     \small $^3$School of Statistics and Information, Shanghai University of International Business and Economics
}
  \maketitle
} \fi

\if1\blind
{
  \bigskip
  \bigskip
  \bigskip
  \begin{center}
    {\LARGE\bf Grid Point Approximation for Distributed Nonparametric Smoothing and Prediction}
\end{center}
  \medskip
} \fi

\bigskip
\begin{abstract}
Kernel smoothing is a widely used nonparametric method in modern statistical analysis.
The problem of efficiently conducting kernel smoothing for a massive dataset on a distributed system is a problem of great importance.
In this work, we find that the popularly used one-shot type estimator is highly inefficient for prediction purposes.
To this end, we propose a novel grid point approximation (GPA) method, which has the following advantages. 
First, the resulting GPA estimator is as statistically efficient as the global estimator under mild conditions.
Second, it requires no communication and is extremely efficient in terms of computation for prediction.
Third, it is applicable to the case where the data are not randomly distributed across different machines.
To select a suitable bandwidth, two novel bandwidth selectors are further developed and theoretically supported.
Extensive numerical studies are conducted to corroborate our theoretical findings. 
Two real data examples are also provided to demonstrate the usefulness of our GPA method.
\end{abstract}

\noindent%
{\it Keywords:}  Bandwidth selection, communication efficiency, divide-and-conquer, nonparametric kernel smoothing
\vfill

\newpage
\spacingset{1.65} % DON'T change the spacing!

\section{Introduction} \label{sec:intro}

Kernel smoothing refers to a general class of techniques for nonparametric estimation \citep{wand1994kernel}. Kernel smoothing is a powerful statistical tool and has been found to be extremely useful for density estimation \citep{parzen1962estimation,sheather2004density}, nonparametric and semiparametric regression \citep{fan1992design,fan1999statistical,xia2004efficient,li2008variable}, longitudinal data analysis \citep{fan2004new,sun2007estimation}, covariance modelling \citep{yin2010nonparametric}, and many other applications. The basic idea of nonparametric kernel smoothing is to conduct statistical estimation based on data from a sufficiently compact local region. Depending on the type of statistical method for local estimation, different kernel smoothing methods can be employed. The two most popularly used methods are the local constant method \citep{nadaraya1965non,watson1964smooth}, and the local linear regression method \citep{fan1996local,loader2006local}. However, the effectiveness of kernel smoothing methods depends on the effective size of the local region, which is well controlled by the bandwidth parameter \citep{jones1996brief}. If the bandwidth is too large, then the subsequent estimation bias could be substantial. In contrast, if the bandwidth is too small, the variability of the resulting estimator could be substantial. Consequently, the trade-off between estimation bias and variability in terms of bandwidth selection has been a critical research problem in the literature \citep{hardle1985optimal, fan1995data,xia2002asymptotic}.

Datasets of massive size are often encountered in modern statistical analysis. This leads to both opportunities and challenges. On the one hand, massive datasets lead to extremely large sample sizes. This gives nonparametric kernel smoothing powerful support to combat the curse of dimensionality and, thus, the ability to capture more flexible function shapes. On the other hand, large sample sizes also pose serious challenges to computation. Often, the amount of data is too large to be loaded into the computer memory simultaneously. Consequently, the full dataset must be divided into many nonoverlapping subsets with much reduced sample sizes. Subsequently, subsets can be processed on different computers in parallel. The results are then reported to a central computer for final assembly, which leads to a more powerful final estimator. This is an effective method for handling massive datasets and is referred to as divide-and-conquer \citep{gao2022review}.

In fact, divide-and-conquer is an extremely useful technique for the analysis of massive datasets. For parametric model estimation, the one-shot averaging approach has been frequently adopted \citep{zhang2013communication,volgushev2019distributed}. The basic idea is to assemble a global estimator by averaging the local estimators. The advantage of the one-shot approach is that the communication cost is minimal since only one round of communication between the central and local machines is needed. However, the disadvantage is also apparent; the statistical efficiency may not be optimal unless a sufficient amount of data are randomly assigned to local machines \citep{wang2017efficient, jordan2019communication, wang2021distributed}.
As an alternative, some iterative approaches have been developed that implement multiple rounds of communication. One important iterative approach is the one-step method. 
For instance, \cite{huang2019distributed} propose a one-step approach that utilizes a single Newton-Raphson update with one additional round of communication. The idea of one-step updating is also adopted for quantile regression in a recent work of \cite{pan2022note}. 
For high-dimensional response data, \cite{hector2020doubly} and \cite{hector2021distributed} propose one-step meta-estimators under the scheme of distributed computation.
However, many one-step methods require computing and transmitting matrices (e.g., the Hessian matrix) to every local machine. To reduce the communication cost, \cite{jordan2019communication} develop a communication-efficient surrogate likelihood framework, where the local Hessian matrix is used for updating the parameter, as opposed to the whole sample matrix. Consequently, the communication cost due to the transmission of matrices is reduced. This technique has been used in many iterative methods; for example, see 
\cite{wang2019distributed} and \cite{fan2021communication}.

Despite their practical usefulness for massive data analysis, most existing divide-and-conquer methods focus on distributed methods for various parametric models. Nonparametric and/or semiparametric models are extremely useful in practice but are much less well developed for application to massive datasets. Successful applications include financial econometrics \citep{fan2000prospects}, longitudinal data analysis \citep{wu2006nonparametric}, functional data analysis \citep{ferraty2006nonparametric}, and many others. For a good summary, we refer to \cite{hardle2004nonparametric}. Consequently, there is a natural and practical need to develop distributed methods for nonparametric models. However, research in this area has been very limited. One possible reason is that developing distributed estimation methods for nonparametric models is considerably more challenging than that for parametric models. The key difference is that the target parameter of parametric models is of finite dimension. In contrast, the parameter in nonparametric models is often of infinite dimension (e.g., a nonparametric smooth curve). The statistically and computationally efficient implementation of distributed estimation for nonparametric models is a problem of great interest \citep{zhang2015divide,shang2017computational,chang2017divide}.

Other researchers have studied this problem and there have been two main branches of research.
This first is the reproducing kernel Hilbert space (RKHS) based method, where the associated estimator is obtained by solving a regularized least squares problem in a given RKHS.
To construct the distributed estimator, the one-shot approach is commonly used; see, for example, \cite{zhang2015divide}, \cite{lin2017distributed}, and \cite{shang2017computational}.
The second is the kernel smoothing method. In this area, two important pioneering works have been published in the literature. The first is \cite{li2013statistical}, who developed a divide-and-conquer type approach for nonparametric kernel density estimation. Under the assumption that a globally optimal bandwidth $O(N^{-1/5})$ can be used where $N$ is the total sample size, the authors show that the resulting density estimator can be as good as the optimal estimator. 
The second work is developed by \cite{chang2017divide}, who study the problem of distributed nonparametric regression by kernel smoothing. 
The authors propose a one-shot type estimation method under the assumption that the data are randomly distributed across local machines. They investigate the theoretical properties and find that some restrictive conditions must be imposed on the number of local machines to achieve the optimal convergence rate.
To this end, they provide two improved algorithms based on different bandwidth selection procedures.
In this paper, we restrict our attention to nonparametric kernel smoothing.

These two pioneering works provide some important contributions to the existing literature. First, to the best of our knowledge, they are among the first group of researchers to study the distributed computation problem for nonparametric kernel smoothing. Second, they present the fundamental theoretical framework for distributed nonparametric kernel smoothing. Their research inspires a number of interesting topics for future study. First, in many cases, data are not randomly distributed across local machines \citep{zhu2021least}. Thus, developing a statistically efficient estimator not depending on randomly distributed data is important. Second, as pointed out by \cite{li2013statistical}, for optimal estimation accuracy, the globally optimal bandwidth should be used. However, another challenging problem is how to practically estimate this bandwidth parameter for a massive dataset. To solve this problem, \cite{li2013statistical} propose a heuristic bandwidth selection rule, but its theoretical properties are not provided. \cite{chang2017divide} also suggest two bandwidth selection procedures. However, they require a whole sample-based cross-validation step. This is practically extremely difficult to implement, especially when the massive dataset is split between many different local machines.

Lastly, we want to emphasize that nonparametric prediction is a problem with real-time applications but also poses great challenges, especially for massive datasets. On one hand, many real-time applications require extensive predictions. For example, air ticket or travel insurance agencies need to provide customers with real-time predictions regarding flight delays for each departure. These predictions are crucial as they are influenced by the daily variability in weather conditions. Similarly, a web-based agency specializing in used car sales might experience a surge in inquiries, with millions of requests for price prediction on second-hand cars with specific conditions within minutes during promotion events. To our best knowledge, existing techniques cannot effectively address the aforementioned practical demands on real-time predictions. 
On the other hand, nonparametric prediction by kernel smoothing can be computationally expensive because it requires searching through the entire training dataset for each new data point \citep{kpotufe2010curse}.
The RKHS based method also encounters a similar computational issue. Specifically, one needs to invert an $N\times N$ Gram matrix, where $N$ is the size of the training dataset \citep{zhang2005learning}. 
This can be computationally expensive or even infeasible when $N$ is very large. Therefore, conducting nonparametric prediction in a distributed way is an important issue. 
Recent distributed computing methods based on the one-shot approach can help reduce the computational burden; see, for example, \cite{zhang2015divide}, \cite{lin2017distributed}, and \cite{chang2017divide}. 
However, these methods need to send every newly observed data point to each local machine to compute the local estimate. Subsequently, these local estimates are sent back to the central machine to form the final predictor.
This procedure not only incurs substantial communication costs but also requires that all local machines are running normally throughout the entire prediction phase \citep{lv2022discussion}. Then, how to make predictions on a distributed system in a more efficient way becomes a problem of great interest.

The above discussion is the inspiration for this work. Specifically, we propose a novel distributed estimation and prediction method for nonparametric regression. For convenience, we refer to it as a method of grid point approximation (GPA) and the resulting estimators are referred to as the GPA estimators. 
To implement the GPA method, the following key steps are needed. First, pre-specify some grid points and calculate a number of local moment statistics at these points on each local machine. The idea of grid points selection is in spirit of binning techniques, which are proposed by \cite{fan1994fast} and \cite{wand1994fast}. This technique is particularly useful in accelerating the computation of kernel estimators. Second, transmit these local moment statistics to the central machine so that the global kernel estimators for these grid points can be exactly constructed. These global estimators are essential for computing the GPA estimator. Last, on the central machine, linear or higher order interpolation is conducted to estimate or predict a given observation by using the global estimators computed on the grid points.

The newly proposed method has the following important features. First, our method imposes no requirement for the distribution of data across local machines. Any type of data distribution (e.g., homogeneous or heterogeneous distribution) is permitted. Second, our method does not require a minimal sample size for local machines. The sample sizes on different local machines can be arbitrarily large or small. Third, the newly proposed distributed GPA method is extremely efficient in communication in the sense that prediction can be conducted on the central machine with zero communication cost.  We also want to emphasize that an industry standard for handling these intensive predictions is to cache or persist the prediction model in memory for fast prediction and reduced latency. In-memory caching allows the system to handle a high volume of queries efficiently by reducing the load on the raw database. With such an infrastructure, when a user queries the price of a specific car, the system quickly retrieves the necessary model parameters from the cache to provide an instant price estimate. Our GPA methods can be easily cached in the system for intensive prediction tasks, such as users frequently querying the estimated price of a used car. 
We show theoretically and numerically that the GPA estimators can be statistically as efficient as the globally optimal estimator. We also develop two novel bandwidth selectors for the GPA estimator with rigorous theoretical support.
Finally, we expand the GPA method by incorporating a higher order polynomial interpolation and extending it to the multivariate scenario.

The rest of the article is organized as follows. In Section 2, we introduce our newly proposed GPA method, including the estimation, prediction, and bandwidth selection procedures. 
In Section 3, we discuss two extensions of the GPA method.
In Section 4, we present numerical studies to support our theoretical findings.
Section 5 includes real data analysis for illustrative purposes.  
The article is concluded in Section 6. All the technical proofs and extra numerical studies are included in the appendices.

\section{The Main Methodology}
\label{sec:meth}

\subsection{Model and Notations}

Let $(X_i, Y_i)$ be the observation collected from the $i$-th subject with $ 1\le i \le N $, where $N$ represents the entire sample size. Furthermore, $ Y_i \in \mR $ is the response of interest, and $ X_i = (X_{i1},\dots,X_{ip})^\top\in \mR^p $ is the associated $p$-dimensional covariate.
To flexibly model the regression relationship between $Y_i$ and $X_i$, a nonparametric regression model can be assumed as follows \citep{wand1994kernel,fan1996local}:
\begin{align*}
	Y_i = \mu(X_i) + \varepsilon_i,
\end{align*}
where $ \varepsilon_i $ is the independent random noise term with mean $0$ and variance $\sigma^2$,  and $ \mu(\cdot) $ is an unknown but sufficiently smooth function.
For simplicity, we first assume that $X_i$ is a random and univariate predictor (i.e., $p=1$) and is compactly supported on $[0,1]$. Furthermore, we also assume that the random noise $\varepsilon_i$ is independent of $X_i$.

The objective is to estimate $\mu(\cdot)$ nonparametrically. To this end, the standard nonparametric kernel smoothing method is frequently used.
Specifically, let $K(\cdot)$ be a kernel function. It is typically taken to be a continuous probability density function symmetric about $0$.
Next, define $K_h(u) = K(u/h)/h$ with some positive bandwidth $h>0$.
Then, for an arbitrarily fixed point $x\in (0,1)$, we can estimate $\mu(x)$ by minimizing the following locally weighted least squares objective function:
\begin{align}\label{eq:NW_est}
	\hat\mu(x) = \argmin_{a\in\mR} \sum_{i=1}^N (Y_i - a)^2 K_h(X_i - x) = \frac{\sum_{i =1}^N K_h(X_i - x)  Y_i }{\sum_{i =1}^N K_h(X_i - x)}.
\end{align}
The resulting estimator $\hat \mu(x)$ is often referred to the local constant estimator, also known as the Nadaraya-Watson (NW) estimator \citep{wand1994kernel,fan1996local,li2007nonparametric}.
Under appropriate regularity conditions, we know that 
\begin{align*}
	\bias\big\{\hat \mu (x)\big\} = E\big\{\hat\mu(x)\big\} -\mu(x) = B(x)h^2 + o(h^2),\  
	\var\big\{\hat \mu (x)\big\} = \frac{V(x)}{Nh} + o\left( \frac{1}{Nh} \right),
\end{align*}
where
\begin{align}\label{eq:B&V}
	B(x) = \frac{\kappa_{2}}{2} \left\{\ddot\mu(x) + 2\dot\mu(x)\frac{\dot f(x)}{f(x)} \right\}, \ V(x)=\frac{\nu_0 \sigma^2}{f(x)},
\end{align}
$f(\cdot)$ is the probability density function of $X_i$, $\kappa_r = \int u^r K(u) du$ and $\nu_r  =\int  u^r K^2(u) du $ with $r\ge 0$ are constants related to $K(\cdot)$; see \cite{fan1996local} and \cite{li2007nonparametric}. Throughout this article, we refer to $\hat\mu(x)$ in \eqref{eq:NW_est} as the \textit{global estimator} to emphasize the fact that it is obtained from the entire dataset.

\subsection{Divide-and-Conquer Methods} 
\label{subsec:global&OS}

Despite the popularity of the nonparametric kernel smoothing method in the literature, its development for massive datasets occurred only recently. Meanwhile, the idea of divide-and-conquer has been extremely useful for distributed estimation. It is thus of great interest to explore the possibility of developing divide-and-conquer type methods for nonparametric kernel smoothing. Specifically, let $\mS = \{1,\dots,N\}$ be the collection of sample indexes. Assume that the whole sample can be divided into $M$ nonoverlapping subsets such that $ \mS = \cup_{m=1}^M \mS_m $ with $ \mS_{m_1} \cap \mS_{m_2}  = \emptyset$ for any $ m_1 \ne m_2$, where $\mS_m$ refers to the index set of the sample on the $m$-th local machine. For simplicity, we further assume that all the subsets are equally sized with $|\mS_m|=n$. We immediately know that $N=Mn$. 
Then there are two divide-and-conquer strategies to construct distributed kernel smoothing estimators. Their details are given as follows.

\textsc{Strategy 1}. The first one is the one-shot type distributed estimator developed in \citep{chang2017divide}. Specifically, consider the local NW estimator on the $m$-th machine, that is, $\hat\mu_m(x) = \sum_{i \in \mS_m} K_h(X_i - x)  Y_i  \Big/ \sum_{i\in \mS_m} K_h(X_i - x)$. Then, by averaging all the $M$ local NW estimators, the resulting one-shot estimator can be constructed as
$\hat \mu_\textup{OS}(x) = M^{-1}\sum_{m=1}^M \hat \mu_m (x)$. Under appropriate regularity conditions, we can show that if data
are randomly assigned to local machine and local sample size $n$ is sufficiently large, then the one-shot estimator $\hat \mu_\textup{OS}(x)$ can be as efficient as the global estimator $\hat \mu(x)$. The technical details are given in Appendix A.1. %\ref{append:thm:global&OS}.

\textsc{Strategy 2}. We can see that the global NW estimator in \eqref{eq:NW_est} can be written as 
\begin{align} \label{eq:nw_dist}
	\hat \mu(x) =\frac{\sum_{i=1}^N K_h(X_i - x)  Y_i }{\sum_{i=1}^N K_h(X_i - x)}= \frac{\sum_{m=1}^M\sum_{i \in \mS_m} K_h(X_i - x)  Y_i }{\sum_{m=1}^M\sum_{i \in \mS_m} K_h(X_i - x)}.
\end{align}
This immediately suggests a more direct way for distributed computing of the global estimator $\hat \mu(x)$. 
Specifically, let local machines compute and transfer to the central machine $ \sum_{i\in \mS_m}K_h(X_i - x)  Y_i$ and $\sum_{i\in \mS_m} K_h(X_i - x)$ with $1 \le m\le M$.
Next, the central machine assembles these quantities according to \eqref{eq:nw_dist} to obtain the final estimator.
We should emphasize that this estimator is exactly the global estimator but computed in a distributed way.
It solely depends on the whole dataset and has nothing to do with the local sample sizes or the local sample distributions. Consequently, it should have the best statistical efficiency in terms of mean squared error (MSE). A slightly more elaborated asymptotic analysis of $\hat \mu(x)$ can also be found in Appendix A.1.%ref{append:thm:global&OS}.

Under appropriate regularity conditions, we can prove theoretically that both types of the distributed estimators can achieve comparable performance.
However, we find that they face serious challenges for prediction in both computation and communication.
Consider for example $N^*$ new observations $\{X_i^*:\ 1\le i\le N^* \}$ with only feature $X_i^*$s observed.
In order to make a prediction for the unobserved response $Y_i^*$, we need to first broadcast the corresponding feature $X_i^*$ to every local machine. 
Then the $m$-th local machine needs to compute some specific local statistics with respect to $X_i^*$ based on the local sample $\mS_m$ for each $1\le m\le M$.
Specifically, $\hat \mu_m(X_i^*)$ should be computed according to \textsc{Strategy 1}, or $ \sum_{j\in \mS_m}K_h(X_j - x)  Y_i$ and $\sum_{j\in \mS_m} K_h(X_j - X_i^*)$ should be computed according to \textsc{Strategy 2}.
Subsequently, the central machine collects all those local statistics and then aggregates them to obtain $\hat\mu_\textup{OS}(X_i^*)$ or $\hat\mu(X_i^*)$ according to different strategies as the predicted value for $Y_i^*$.
To predict all $N^*$ observations, the above distributed computation process must be replicated $N^*$ times. 
This leads to a huge time cost for both computation and communication.
Thus, it is of great interest to develop a more computationally and communicationally efficient prediction method for a distributed system.

\subsection{Distributed Grid Point Approximation}
\label{subsec:GPA}

To solve this prediction problem, we develop here a novel method called grid point approximation (GPA).
Specifically, let $J$ be a prespecified positive integer, which determines the number of grid points.
With a given $J$, we define the $j$-th grid point as $x_j^*  = j/J$ with $0\le j\le J$. 
Obviously, the distance between two adjacent grid points should be $\Delta = 1/J$.
We then estimate the global estimator $\hat\mu(x)$ according to \eqref{eq:nw_dist} on the grid points $\{x_j^*:\ 0\le j\le J\}$ only.
This is a training process having nothing to do with prediction.
Consequently, it can be conducted in the model training stage before prediction occurs.
Once we have obtained the estimates on each grid point, prediction can be implemented using linear interpolation.
Specifically, suppose $x\in [x_j^* , x_{j+1}^*]$ for some $j$, we can predict $\mu(x)$ by
\begin{align} \label{eq:interpolation}
	\hat\mu_\textup{GPA}(x) = & \frac{x_{j+1}^* - x}{\Delta} \hat \mu(x_j^* ) + \frac{ x- x_j^*}{\Delta} \hat \mu(x_{j+1}^*).
\end{align}
We refer to $ \hat\mu_\textup{GPA}(x)$ as the GPA estimator. 
It is noteworthy that the GPA estimator can be computed on the central machine without incurring any communication cost.
In addition, only simple interpolations are needed for making predictions.
Consequently, the computational cost is also minimal.
Intuitively, the number of grid points $J$ should not be too large; otherwise, the computation and communication cost required for the grid point estimation could be large.
On the other hand, $J$ should not be too small. Otherwise, the bias due to linear interpolation could be substantial.
Therefore, understanding how the number of grid points $J$, together with the bandwidth $h$, affects the statistical efficiency of the GPA estimator becomes a key issue.

To study the statistical properties of the GPA estimator, we need to assume the following technical conditions.
\begin{enumerate}[(C1)] 
	\item (\textsc{Smoothness Condition}) The probability density function $f(\cdot)$ is sufficiently smooth so that its second-order derivative $\ddot f(x) = d^2 f(x) / dx^2$ is continuous. 
	The mean function $\mu(\cdot)$ is sufficiently smooth so that its second-order derivative $\ddot \mu (x)= d^2 \mu(x) / dx^2$ is continuous. \label{cond:smoothness}
	\item (\textsc{Kernel Function}) The kernel function $K(\cdot)$ is a continuous probability density function and symmetric about $0$ with compact support $[-1,1]$. \label{cond:kernel}
	\item (\textsc{Optimal Bandwidth}) Assume $h = CN^{-1/5}$ for some positive constant $C$.\label{cond:bandwidth}
\end{enumerate}
\noindent 
The smoothness conditions in \ref{cond:smoothness} are standard and have been used in previous studies; for example, see \cite{fan1992design}, \cite{fan1996study}, and \cite{li2007nonparametric}. 
A kernel function that satisfies condition \ref{cond:kernel} is often referred to as a second-order kernel \citep{fan1992bias, ullah1999nonparametric,wand1994kernel}. 
This type of kernel functions are usually used to construct kernel estimators for twice continuously differentiable functions. 
Last, condition \ref{cond:bandwidth} directly assumes the bandwidth should be of the optimal convergence rate \citep{ullah1999nonparametric}. 
By this bandwidth, the optimal convergence rate achieved by a global NW estimator in terms of the asymptotic mean squared error is given by $O(1/(Nh)) = O(N^{-1/5})$ under the conditions \ref{cond:smoothness} and \ref{cond:kernel}.
We then have the following theorem.

\begin{theorem}\label{thm:GPA_asymptotic}
	Assume conditions \ref{cond:smoothness}--\ref{cond:bandwidth} hold and $f(x)>0$. Further assume that $Jh\to \infty$ as $N\to \infty$. Then we have $\hat\mu_\textup{GPA}(x) - \mu(x) = \tilde Q_0 + \tilde Q_1 + \tilde Q_2 + \tilde \mQ$,
	where $\tilde Q_0$ is a non-stochastic term satisfying $\tilde Q_0 = O(1/J^2)$, $ \sqrt{Nh}\left\{\tilde  Q_1 - B(x)h^2\right\}\to_d \mN\Big(0, V(x) \Big)$, $E\tilde Q_2 = O\left(h/N+h^4\right)$, $\var(\tilde Q_2)=O\left\{1/(Nh)^2\right\}$, and $\tilde \mQ = O_p\left\{ 1/ (Nh)^{3/2} \right\}$.
	In addition to that, we have
	\begin{align*}
		\sqrt{Nh}\Big\{\hat\mu_\textup{GPA}(x) - \mu(x) - B(x)h^2\Big\}\to_d \mN\Big(0, V(x) \Big).
	\end{align*}
\end{theorem}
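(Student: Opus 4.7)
The plan is to decompose $\hat\mu_\textup{GPA}(x)-\mu(x)$ into a deterministic interpolation error, a linear-in-noise stochastic term matching the global NW asymptotics, and genuinely smaller remainders, and then to observe that the condition $Jh\to\infty$ forces neighbouring grid-point estimators to be so strongly correlated that their convex combination reproduces the single-point variance. Writing $\lambda = (x_{j+1}^*-x)/\Delta\in[0,1]$ so that $\hat\mu_\textup{GPA}(x) = \lambda\hat\mu(x_j^*) + (1-\lambda)\hat\mu(x_{j+1}^*)$, I would add and subtract $\lambda\mu(x_j^*) + (1-\lambda)\mu(x_{j+1}^*)$: this identifies $\tilde Q_0 = \lambda\mu(x_j^*)+(1-\lambda)\mu(x_{j+1}^*)-\mu(x)$, which by a second-order Taylor expansion of $\mu$ around $x$ (using \ref{cond:smoothness}) equals $\tfrac12\ddot\mu(\xi)(x-x_j^*)(x-x_{j+1}^*) = O(J^{-2})$.

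For each grid point write $\hat\mu(x_j^*) = \hat g_j/\hat f_j$ with $\hat g_j = N^{-1}\sum_i K_h(X_i-x_j^*)Y_i$ and $\hat f_j = N^{-1}\sum_i K_h(X_i-x_j^*)$, and let $U_j = \hat g_j - \mu(x_j^*)\hat f_j$. Expanding $1/\hat f_j$ to second order around $f(x_j^*)$ yields
\[
\hat\mu(x_j^*) - \mu(x_j^*) \;=\; \frac{U_j}{f(x_j^*)} \;-\; \frac{U_j\{\hat f_j - f(x_j^*)\}}{f(x_j^*)^2} \;+\; R_j,
\]
with $R_j$ cubic in the deviations $\hat f_j-f(x_j^*)$ and $U_j$. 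The $\lambda$-weighted sum over $j,j+1$ then naturally defines $\tilde Q_1$ (the leading linear piece), $\tilde Q_2$ (the quadratic coupling between $U$ and $\hat f-f$), and $\tilde\mQ = \lambda R_j + (1-\lambda)R_{j+1}$.

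The critical step is the asymptotic analysis of $\tilde Q_1$. Standard kernel calculations under \ref{cond:smoothness}--\ref{cond:kernel} give $E\{U_j/f(x_j^*)\} = B(x_j^*)h^2 + o(h^2)$ and $\var\{U_j/f(x_j^*)\} = V(x_j^*)/(Nh) + o\{(Nh)^{-1}\}$, and continuity of $B(\cdot)$ and $V(\cdot)$ makes these collapse to $B(x)h^2$ and $V(x)/(Nh)$. The delicate point is the covariance: since $Jh\to\infty$ forces $\Delta/h\to 0$, the supports of $K_h(\cdot-x_j^*)$ and $K_h(\cdot-x_{j+1}^*)$ almost entirely coincide, so a change of variable $u=x+hv$ in $N^{-1}\int K_h(u-x_j^*)K_h(u-x_{j+1}^*)\sigma^2 f(u)\,du$ gives $\cov(U_j,U_{j+1})/\{f(x_j^*)f(x_{j+1}^*)\} = V(x)/(Nh) + o\{(Nh)^{-1}\}$. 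Combined with the identity $\lambda^2+(1-\lambda)^2+2\lambda(1-\lambda) = 1$, this yields $\var(\tilde Q_1) = V(x)/(Nh) + o\{(Nh)^{-1}\}$, and Lyapunov's CLT for the underlying triangular array of i.i.d.\ kernel contributions (whose third moments are finite by the compact support in \ref{cond:kernel}) produces $\sqrt{Nh}\{\tilde Q_1-B(x)h^2\}\to_d\mN(0,V(x))$.

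The remaining bounds are routine bookkeeping: splitting $E[U_j\{\hat f_j - f(x_j^*)\}]$ into same-index and distinct-index contributions yields $O(h/N + h^4)$ (the same-index term is $O(h/N)$ after a change of variables and using $\int vK(v)^2dv = 0$, while the distinct-index term is a product of two $O(h^2)$ biases divided by $N$); Cauchy--Schwarz together with $\var(U_j),\var(\hat f_j) = O\{(Nh)^{-1}\}$ gives $\var(\tilde Q_2) = O\{(Nh)^{-2}\}$; and Markov's inequality applied to the cubic term delivers $\tilde\mQ = O_p\{(Nh)^{-3/2}\}$. Under \ref{cond:bandwidth}, the assumption $Jh\to\infty$ is precisely what ensures $\tilde Q_0 = o(h^2)$, while $\tilde Q_2$ and $\tilde\mQ$ are automatically $o_p\{(Nh)^{-1/2}\}$, so Slutsky's theorem delivers the stated CLT. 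I expect the covariance reduction in the third step to be the main technical obstacle, since one must carefully show that the overlap integral converges to the single-point variance and that the residual error truly vanishes after dividing by $f(x_j^*)f(x_{j+1}^*)$; the rest is an exercise in applying standard tools for the ratio form of kernel estimators.
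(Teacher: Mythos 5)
Your proposal is correct and follows the same decomposition as the paper: the interpolation error $\tilde Q_0$ handled by the linear-interpolation error formula, the linear term $\tilde Q_1$, the quadratic term $\tilde Q_2$, and the cubic remainder $\tilde\mQ$, with the last two inherited termwise from the expansion of the global estimator at each grid point (your $U_j/f(x_j^*)$ is algebraically identical to the paper's $Q_1(x_j^*)$, and your quadratic coupling coincides with its $Q_2(x_j^*)$). The one place where you genuinely diverge is the CLT for $\tilde Q_1$. You compute $\var(\tilde Q_1)$ directly via the cross-covariance $\cov(U_j,U_{j+1})$ between the two grid points, use $\lambda^2+(1-\lambda)^2+2\lambda(1-\lambda)=1$, and apply a Lyapunov/Lindeberg CLT to the combined triangular array. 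The paper instead compares each $Q_1(x_k^*)$ to $Q_1(x)$ at the target point, shows $\cov\{Q_1(x_k^*),Q_1(x)\}=\var\{Q_1(x)\}\{1+o(1)\}$ (its Lemma S.3), and then invokes two small abstract lemmas (a positive-semidefiniteness argument on the $3\times 3$ correlation matrix, and a lemma showing that a convex combination of variables each asymptotically perfectly correlated with $Q_1(x)$ differs from $Q_1(x)$ by $o_p(\sqrt{\var Q_1(x)})$), after which Slutsky transfers the already-established CLT for $Q_1(x)$. Both routes rest on the same overlap-integral computation (the kernels centered at points within $\Delta=o(h)$ of each other nearly coincide); yours is more direct but requires re-verifying a CLT condition for the combined array, while the paper's reuses the single-point CLT at the cost of the auxiliary correlation lemmas. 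Your bookkeeping for $E\tilde Q_2=O(h/N+h^4)$ (via $\int vK^2(v)\,dv=0$) and for $\var(\tilde Q_2)$, $\tilde\mQ$ matches the paper's Lemma S.1 computations; note only that the variance bound for $\tilde Q_2$ really needs fourth-moment bounds on $\hat f_j-f(x_j^*)$ and $U_j$ rather than Cauchy--Schwarz on variances alone, which is exactly what the paper supplies.
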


The proof of Theorem \ref{thm:GPA_asymptotic} can be found in Appendix A.2.  %\ref{append:thm:GPA_asymptotic}.
Theorem \ref{thm:GPA_asymptotic} implies that as long as $1/J = o(h)$, the GPA estimator $\hat\mu_\textup{GPA}(x)$ should have the same asymptotic distribution as the global estimator $\hat\mu(x)$. Recall that $h =CN^{-1/5}$ for some constant $C>0$.
This means that the number of grid points should satisfy $J / N^{1/5}\to \infty$.
This seems to be an extremely mild condition. 
To clarify the idea, consider, for example, $N=10^{10}$ (i.e., 10 billion). Then, as long as $J\gg 10^2=100$, the GPA estimator could perform as well as the global estimator. 
Further discussion on the case when the covariate $X_i$ does not have compact support can be found in Appendix C.% \ref{append:non_compact}.

\subsection{Bandwidth Selection} \label{subsec:bandwidth}

Next, we consider the problem of bandwidth selection. 
To achieve the best asymptotic efficiency, the optimal bandwidth should be used.
By condition (C3), we know that $h=CN^{-1/5}$ with some constant $C>0$.
Then, how to optimally choose the positive constant $C$ becomes the problem of interest.
For this problem, cross-validation (CV) has been frequently used.
Specifically, let $\hat \mu^{(-i)} (x)$ be the global estimator obtained without the $i$-th sample. We use this estimator to conduct an out-sample prediction for the $i$-th sample as $\hat \mu^{(-i)}(X_i)$.
Its prediction error is then given by $\big\{Y_i - \hat\mu^{(-i)}(X_i)\big\}^2$. Averaging these values gives the CV score as 
\begin{align}\label{eq:CV_score}
	\CV(h) = \frac{1}{N}\sum_{i=1}^N \Big\{Y_i - \hat \mu^{(-i)}(X_i) \Big\}^2 w(X_i),
\end{align}
where $w(\cdot)\ge 0$ is a prespecified weight function that trims out-of-boundary observations \citep{racine2004nonparametric}.
Then, an empirically optimal bandwidth is obtained as $\hat h =\argmin_h \CV(h)$.
Note that $\CV(h)$ can be viewed as an empirical alternative to the theoretical asymptotic mean integrated squared error (AMISE) as
\begin{align*}
	\AMISE( h) = \int \bigg\{ B^2(x)h^4 + \frac{V(x)}{Nh}\bigg\} w(x)f(x)dx= \bar B h^4 + \frac{\bar V }{Nh},
\end{align*}
where $\bar B = \int B^2(x) w(x) f(x) dx$, $\bar V = \int V(x) w(x) f(x) dx$.
Then, the asymptotically optimal bandwidth can be defined as
\begin{align}\label{eq:bandwidth_opt}
	h_\textup{opt} = \argmin_h \AMISE(h) = C_\textup{opt}N^{-1/5},
\end{align}
where $C_\textup{opt} = \big\{\bar V / (4 \bar B) \big\}^{1/5}$. 
As demonstrated by \cite{racine2004nonparametric}, we have $\CV(h) = \AMISE(h)\{1+o_p(1)\} + C_1$ under appropriate regularity conditions, where $C_1$ is a constant independent of $h$. To see this, we first write $\CV(h) = \sum_{i=1}^N \{\mu(X_i) - \hat \mu^{(-i)}(X_i)\}^2 w(X_i)/N + \sum_{i=1}^N \{\mu(X_i) - \hat \mu^{(-i)}(X_i)\}\varepsilon_i  w(X_i)/N + \sum_{i=1}^N \varepsilon_i^2 w(X_i) /N= \Omega_1 + \Omega_2 + \Omega_3,$ where $\Omega_1$ is an empirical approximation of $\AMISE(h)$, $\Omega_2$ is a negligible term compared with $\Omega_1$, and $\Omega_3$ is a term independent of the bandwidth $h$ and thus can be approximately treated as a constant. Consequently, we should have $\CV(h) = \AMISE(h)\{1+o_p(1)\} + C_1$, where $C_1$ is some constant independent of $h$.
Consequently, we should reasonably expect $\hat h$ to be a consistent estimator of $h_\textup{opt}$. 
In fact, by Theorem 2.2 in \cite{racine2004nonparametric}, we know that $\hat h / h_\textup{opt} \to_p 1$ as $N\to \infty$ under appropriate conditions.

Unfortunately, such a straightforward method cannot be directly implemented for massive data analysis because the associated computation cost is too high. 
Therefore, we propose two bandwidth selectors which can be computed in a distributed manner.
The first one is the one-shot type bandwidth selector, which is inspired by the idea of one-shot estimation.
To be more specific, let $\hat h_m = \argmin_h \CV_{m}(h)$ be the optimal bandwidth selected on the $m$-th local machine by the typically used CV method. Next, these selected bandwidths are passed to the central machine and averaged to yield a more stable value $\bar h = M^{-1}\sum_{m=1}^M \hat h_m$.
Unfortunately, this bandwidth is not a good estimator for $h_\textup{opt} = C_\textup{opt}N^{-1/5}$. 
This is because $\hat h_m$ is computed on the local machine with $n=N/M$ samples only. 
Hence, the local bandwidth selector $\hat h_m$ should be an estimator for $C_\textup{opt}n^{-1/5}$, other than $C_\textup{opt}N^{-1/5}$.
Consequently, the average of them $\bar h$ is not a good estimator for $C_\textup{opt}N^{-1/5}$ either.
The problem can be easily fixed by rescaling.
This leads to the final one-shot bandwidth selector as $\hat h_\textup{OS} = (N/n)^{-1/5}\bar h =  M^{-1/5} \bar h$.

However, similar to the one-shot estimator, the one-shot bandwidth selector $\hat h_\textup{OS}$ also requires that the entire sample is randomly distributed across different local machines. Otherwise, the distribution of the subsamples on the local machine can be substantially different from that of the whole sample. In this case, the ``optimal bandwidth" computed on a local machine could be a seriously biased estimator for the globally optimal bandwidth as defined in \eqref{eq:bandwidth_opt} even after appropriate re-scaling. To address this issue, we propose a pilot sample-based bandwidth selector, which is particularly useful for nonrandomly distributed samples \citep{pan2022note}.
Specifically, let $n_0$ be the prespecified pilot sample size.
Then, an index set $\mP_m$ is sampled from $\mS_m$ by simple random sampling without replacement such that $|\mP_m|=n_0 / M $ for each $1\le m\le M$. Recall that $\mS_m$ is the index set of the sample on the $m$-th local machine.
We refer to $\mP = \cup_{m=1}^M \mP_m$ as the pilot sample.
Next, these observations in $\mP$ are transferred from the local machines to the driver machine, where the final results are assembled.
Let $\hat h_0 = \argmin_h\CV_{\mP}(h)$ be the optimal bandwidth selected by the CV method based on the pilot sample $\mP$. 
Similarly, we rescale it to obtain the pilot sample-based bandwidth selector as $\hat h_\textup{PLT} = (N/n_0)^{-1/5} \hat h_0$.
The theoretical properties of the two bandwidth selectors are summarized in the following theorem.
\begin{theorem}\label{thm:bandwidth}
	Assume conditions \ref{cond:smoothness}--\ref{cond:kernel} and (C4)-(C5) in Appendix A.3 hold. % \ref{cond:cand_bandwidth}--\ref{cond:weight_fun} in Appendix \ref{append:thm:bandwidth} hold.
	Then, we have $\hat h_\textup{OS}/ h_\textup{opt} \to_p 1$ as $n\to \infty$ and $\hat h_\textup{PLT} / h_\textup{opt} \to_p 1 $ as $n_0\to \infty$.
\end{theorem}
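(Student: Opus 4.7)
The plan is to piggyback on the classical CV-bandwidth consistency result (Theorem 2.2 of Racine and Li, 2004) by applying it to appropriately chosen sub-datasets, and then track how the deterministic rescaling factors $M^{-1/5}$ and $(N/n_0)^{-1/5}$ convert a ``local'' or ``pilot'' optimal bandwidth into the globally optimal bandwidth $h_\textup{opt} = C_\textup{opt} N^{-1/5}$ defined in \eqref{eq:bandwidth_opt}.

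I would first dispatch $\hat h_\textup{PLT}$, since it is the cleaner case. Because $\mP = \cup_{m=1}^M \mP_m$ is built by simple random sampling without replacement of $n_0/M$ indices from each $\mS_m$, the pilot sample is, to leading order, a simple random sample of size $n_0$ from the global empirical distribution, and hence, as $n_0\to\infty$, from the population law of $(X,Y)$. This conclusion is independent of whether the original data are randomly distributed across local machines, which is why the pilot method is the natural device for the non-random case. Under the regularity in (C4)-(C5), Racine and Li's theorem then applies to $\CV_{\mP}(h)$ and yields $\hat h_0/(C_\textup{opt} n_0^{-1/5}) \to_p 1$. A single line of algebra finishes the job:
\begin{align*}
\frac{\hat h_\textup{PLT}}{h_\textup{opt}} = \frac{(N/n_0)^{-1/5}\,\hat h_0}{C_\textup{opt} N^{-1/5}} = \frac{\hat h_0}{C_\textup{opt} n_0^{-1/5}} \to_p 1.
\end{align*}

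For $\hat h_\textup{OS}$, I would use the random-distribution hypothesis built into (C4) to conclude that each local dataset $\{(X_i,Y_i):i\in\mS_m\}$ is itself an i.i.d.\ sample of size $n$ from the same population distribution. Applying Racine-Li on each machine separately then gives $\hat h_m/(C_\textup{opt} n^{-1/5}) \to_p 1$ for every $m = 1,\dots,M$. I would transfer this pointwise convergence to the sample mean $\bar h = M^{-1}\sum_m \hat h_m$ via a union-bound argument on the event $\{\max_m |\hat h_m/(C_\textup{opt} n^{-1/5}) - 1| > \varepsilon\}$, whose probability is bounded by $M$ times a vanishing local tail. Provided $M$ is allowed to grow with $n$ at no more than the rate permitted by (C5), this yields $\bar h/(C_\textup{opt} n^{-1/5}) \to_p 1$, and using $N=Mn$ gives
\begin{align*}
\frac{\hat h_\textup{OS}}{h_\textup{opt}} = \frac{M^{-1/5}\bar h}{C_\textup{opt}(Mn)^{-1/5}} = \frac{\bar h}{C_\textup{opt} n^{-1/5}} \to_p 1.
\end{align*}

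The only genuine obstacle I anticipate is this last averaging step for $\hat h_\textup{OS}$: pointwise convergence in probability of each $\hat h_m$ does not by itself control their average unless one has either a uniform-in-$m$ tail bound on $\hat h_m/(C_\textup{opt} n^{-1/5}) - 1$ or a restriction on how fast $M$ may diverge. The theorem is stated under $n\to\infty$ rather than $N\to\infty$, which is consistent with (C4)-(C5) both forcing the random allocation across machines and capping the growth rate of $M$ relative to $n$; once those are in place the argument reduces to Racine-Li consistency plus bookkeeping of the $N^{-1/5}$ scaling, and no further nontrivial probabilistic input is required.
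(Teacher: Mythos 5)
Your overall strategy --- reduce to Racine--Li-type CV consistency on a sub-dataset of the appropriate size, then undo the deterministic $(\cdot)^{-1/5}$ rescaling --- is the same as the paper's, and your treatment of $\hat h_\textup{PLT}$ matches the argument the paper omits as ``similar but less involved.'' (The paper does not cite Racine--Li as a black box for the local step; it re-derives the decomposition $\CV_{m,1}(h) = T_1 + T_2 + 2T_3$ and shows $\CV_m(h) = \{\bar B h^4 + \bar V/(nh)\}\{1+o_p(1)\}$ by their Lemmas 1--4, but this is the same content.)

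The genuine gap is in how you pass from the per-machine consistency $\hat h_m/(C_\textup{opt} n^{-1/5}) \to_p 1$ to the same statement for the average $\bar h$. You propose a union bound over $m$, which requires a quantitative tail bound on $P\{|\hat h_m/(C_\textup{opt} n^{-1/5}) - 1| > \varepsilon\}$ that convergence in probability alone does not supply, plus a cap on the growth rate of $M$; you then attribute both to conditions (C4)--(C5). Neither condition says this: (C4) only restricts the candidate bandwidth set to $H_n = [C_H^{-1} n^{-1/5},\, C_H n^{-1/5}]$, and (C5) only concerns the weight function. So as written your argument for $\hat h_\textup{OS}$ does not close. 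The paper's device for this step is different and sidesteps the issue entirely: because both $\hat h_m$ and $h_{\textup{opt},m} = C_\textup{opt} n^{-1/5}$ lie in the compact set $H_n$, the normalized deviation satisfies $|(\hat h_m - h_{\textup{opt},m})/h_{\textup{opt},m}| \le (C_H + C_\textup{opt})/C_\textup{opt}$ deterministically; convergence in probability of a uniformly bounded sequence implies convergence in $L^1$ by bounded convergence, so $E|(\hat h_m - h_{\textup{opt},m})/h_{\textup{opt},m}| \to 0$, and by the triangle inequality the same bound controls $E|(\bar h - h_{\textup{opt},m})/h_{\textup{opt},m}|$ uniformly in $M$, after which Markov's inequality finishes. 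No tail rate, no union bound, and no restriction on $M$ is needed. Replacing your union-bound step with this $L^1$ argument closes the proof.
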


The proof of Theorem \ref{thm:bandwidth} is given in Appendix A.3. %\ref{append:thm:bandwidth}. 
By Theorem \ref{thm:bandwidth}, we know that the two bandwidth selectors should be ratio consistent for the optimal bandwidth $\hat h_\textup{opt}$, as long as the local sample size $n$ or the pilot sample size $n_0$ diverges to infinity as $N\to \infty$.
Our exhaustive simulation experiments further confirm their excellent finite sample performance.

\section{Some Extensions}

In this section, we consider two extensions of the GPA method. They are, respectively, high order polynomial interpolation based GPA and multivariate GPA.

\subsection{High Order Polynomial Interpolation}
\label{subsec:GPA_nu}

One way to further improve the proposed GPA method is to replace the linear interpolation method with higher order polynomial interpolation.
Specifically, recall that $\hat\mu(x_j^*)$ is the NW estimator on the $j$-th grid point. 
For an arbitrary $x\in (0,1)$, we first find its $\nu+1$ nearest grid points.
Without loss of generality, we assume that the $\nu+1$ grid points are $x_{j}^*,\dots, x_{j+\nu}^*$ for some $j$.
We then estimate $\mu(x)$ by the $\nu$-th order polynomial interpolation as
\begin{align*}
	\hat\mu_{\textup{PGPA},\nu}(x) = \sum_{k =j}^{j+\nu} q_k(x) \hat \mu(x_j^*),
\end{align*}
where $q_{k}(x)=\prod_{i=j, i\ne k}^{j+\nu}(x-x_i^*)/(x_k^*-x_i^*) $ for $j\le k\le j+\nu$ are the corresponding Lagrange interpolation coefficients \citep{suli2003introduction}. 
To investigate the theoretical properties of $\hat\mu_{\textup{PGPA},\nu}(x)$, we assume the following technical conditions.
\begin{enumerate}[label=(A\arabic*)]
	\item (\textsc{Smoothness Condition}) The probability density function $f(\cdot)$ is sufficiently smooth so that its $(\nu+1)$-th order derivative $ f^{(\nu+1)}(\cdot)$ is continuous. The mean function $\mu(\cdot)$ is sufficiently smooth so that its $(\nu+1)$-th order derivative $\mu^{(\nu+1)} (\cdot)$ is continuous. \label{cond:smoothness_nu}
	
	\item (\textsc{Higher Order Kernel}) Assume that $K(\cdot)$ is a $(\nu+1)$-th order continuous kernel function with compact support $[-1,1]$. That is $K(\cdot)$ satisfies (i) $\kappa_0=\int K(u)du=1$, (ii) $\kappa_r = \int u^r K(u) du = 0$ for $1\le r\le \nu$, and (iii) $\int |u^{\nu+1} K(u)| du < \infty$.\label{cond:kernel_nu}
	
	\item (\textsc{Optimal Bandwidth}) Assume $h = CN^{-1/(2\nu+3)}$ for some positive constant $C$.\label{cond:bandwidth_nu}
\end{enumerate}
{
	The smoothness conditions in \ref{cond:smoothness_nu} are needed for $\nu$-th order polynomial interpolation. Note that condition \ref{cond:smoothness_nu} is the same as condition \ref{cond:smoothness} if $\nu=1$. Condition \ref{cond:kernel_nu} introduces the higher order kernel function, which is useful to further reduce the bias of the kernel estimator \citep{fan1992bias, ullah1999nonparametric}. The construction of higher order kernel functions can be found in \cite{fan1992bias} and \cite{hansen2005exact}. 
	Last, condition \ref{cond:bandwidth_nu} assumes that the bandwidth should be of the compatibly optimal rate when a higher order kernel is used \citep{ullah1999nonparametric}.
	By this bandwidth, the optimal convergence rate achieved by a global NW estimator in terms of the asymptotic mean squared error is given by $O(1/(Nh)) = O(N^{-2(\nu+1) / (2\nu+3)})$ under the conditions \ref{cond:smoothness_nu} and \ref{cond:kernel_nu}.
	The theoretical properties of the $\nu$-th order polynomial interpolation based GPA estimator $\hat\mu_{\textup{PGPA},\nu}(x)$ are then summarized in the following theorem.
}

\begin{theorem}\label{thm:GPA_nu}
	Assume conditions \ref{cond:smoothness_nu}--\ref{cond:bandwidth_nu} hold and $f(x)>0$. Further, assume that $Jh\to\infty$ as $N\to\infty$.
	Then, we have $\hat\mu_{\textup{PGPA},\nu}(x) - \mu(x) = \tilde Q_{\nu,0} + \tilde Q_{\nu,1}  + \tilde \mQ_{\nu}$,
	where $\tilde Q_0$ is a non-stochastic term satisfying $\tilde Q_{\nu,0} = O(1/J^{\nu+1})$, $ \sqrt{Nh}\Big\{\tilde  Q_{\nu,1} - B_{\nu}(x)h^{\nu+1}\Big\} \to_d \mN\Big(0, V(x) \Big)$, and $\tilde \mQ_{\nu} = o_p(1/\sqrt{Nh})$.
	Here, $B_\nu(\cdot)$ is defined in \textup{(S.11)} in Appendix A.4, 
%	\eqref{eq:B_nu} in Appendix \ref{append:thm:GPA_nu}, 
	and $V(\cdot)$ is the same as that in \eqref{eq:B&V}.
	In addition to that, we have
	\begin{align*}
		\sqrt{Nh}\Big\{\hat\mu_{\textup{PGPA},\nu}(x) - \mu(x) - B_{\nu}(x)h^{\nu+1}\Big\}\to_d \mN\Big(0, V(x) \Big).
	\end{align*}
\end{theorem}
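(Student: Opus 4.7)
The plan is to mirror the proof of Theorem~\ref{thm:GPA_asymptotic}, swapping linear interpolation for Lagrange interpolation of order $\nu$ and the second-order kernel for the $(\nu+1)$-th order kernel of condition~\ref{cond:kernel_nu}. I would start from the identity
\[
\hat\mu_{\textup{PGPA},\nu}(x)-\mu(x)=\underbrace{\Big[\sum_{k=j}^{j+\nu}q_k(x)\mu(x_k^*)-\mu(x)\Big]}_{=:\,\tilde Q_{\nu,0}}\;+\;\sum_{k=j}^{j+\nu}q_k(x)\big[\hat\mu(x_k^*)-\mu(x_k^*)\big].
\]
The first piece is the deterministic Lagrange interpolation error for $\mu$ at $x$; by the classical Lagrange remainder formula together with condition~\ref{cond:smoothness_nu}, it satisfies $|\tilde Q_{\nu,0}|\le C\,\|\mu^{(\nu+1)}\|_\infty\,\Delta^{\nu+1}=O(1/J^{\nu+1})$.

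For the stochastic piece I would invoke the standard higher-order-kernel expansion of the NW estimator at each grid point,
\[
\hat\mu(x_k^*)-\mu(x_k^*)=B_\nu(x_k^*)\,h^{\nu+1}+S(x_k^*)+R(x_k^*),
\]
with $S(y):=\{Nhf(y)\}^{-1}\sum_{i=1}^N K_h(X_i-y)\varepsilon_i$ the leading mean-zero stochastic term (variance $V(y)/(Nh)$) and $R(y)$ absorbing the $o(h^{\nu+1})$ bias residual and the $O_p(\{Nh\}^{-3/2})$ fluctuation of the NW denominator. Because Lagrange weights reproduce constants, $\sum_k q_k(x)=1$, and $B_\nu$ is continuous under condition~\ref{cond:smoothness_nu}, uniform continuity on the shrinking stencil gives $\sum_k q_k(x)B_\nu(x_k^*)=B_\nu(x)+o(1)$, so the bias contribution is $B_\nu(x)h^{\nu+1}+o(h^{\nu+1})$. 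I would then set $\tilde Q_{\nu,1}:=B_\nu(x)h^{\nu+1}+S(x)$ and push the rest into $\tilde\mQ_\nu$.

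The crux is to verify $\sum_k q_k(x)S(x_k^*)-S(x)=o_p(1/\sqrt{Nh})$. Since $Jh\to\infty$ forces $\Delta=o(h)$, all $\nu+1$ stencil points lie within an $o(h)$-neighborhood of $x$; writing
\[
\sum_k q_k(x)S(x_k^*)-S(x)=\frac{1}{Nh}\sum_{i=1}^N \varepsilon_i\left[\sum_k q_k(x)\frac{K_h(X_i-x_k^*)}{f(x_k^*)}-\frac{K_h(X_i-x)}{f(x)}\right],
\]
the bracket is exactly the Lagrange interpolation error for $y\mapsto K_h(X_i-y)/f(y)$ at $y=x$. A Taylor expansion about $y=x$ combined with the reproducibility of polynomials of degree $\le \nu$ shows the first $\nu$ terms vanish, leaving a pointwise remainder of order $\Delta^{\nu+1}/h^{\nu+2}$ supported on $\{|X_i-x|\le 2h\}$. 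A conditional second-moment calculation using the independence of the $\varepsilon_i$'s, $\var(\varepsilon_i)=\sigma^2$, and $P(|X_1-x|\le 2h)=O(h)$ yields the required variance bound; combined with the straightforward bound $\sum_k q_k(x)R(x_k^*)=o_p(1/\sqrt{Nh})$, this delivers $\tilde\mQ_\nu=o_p(1/\sqrt{Nh})$.

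Finally, the classical CLT applied to $\sqrt{Nh}\,S(x)$ gives $\mN(0,V(x))$, hence $\sqrt{Nh}\{\tilde Q_{\nu,1}-B_\nu(x)h^{\nu+1}\}\to_d \mN(0,V(x))$. The overall limit follows by Slutsky, using that $\sqrt{Nh}\,\tilde Q_{\nu,0}=O(\sqrt{Nh}/J^{\nu+1})=o(1)$: condition~\ref{cond:bandwidth_nu} gives $h^{\nu+1}\sqrt{Nh}=1$, and $Jh\to\infty$ implies $1/J^{\nu+1}=o(h^{\nu+1})=o(1/\sqrt{Nh})$. The hardest step will be rigorously justifying the order-$(\nu+1)$ Taylor expansion of $K_h(X_i-\cdot)/f(\cdot)$: condition~\ref{cond:kernel_nu} only requires continuity of $K$, so the argument likely needs either a mild smoothness strengthening on $K$ or an alternative treatment via the joint asymptotic normality of $(\hat\mu(x_j^*),\ldots,\hat\mu(x_{j+\nu}^*))$ combined with a delta-method-style argument on the Lagrange functional.
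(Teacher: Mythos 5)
Your decomposition and overall architecture match the paper's: the same splitting into the deterministic Lagrange interpolation error $\tilde Q_{\nu,0}$ (handled by the remainder formula, giving $O(1/J^{\nu+1})$), a weighted sum of grid-point fluctuations, and a negligible remainder; the same use of $\sum_k q_k(x)=1$; and the same Slutsky conclusion with $\sqrt{Nh}/J^{\nu+1}=o(1)$. The divergence is in the crux step, and there your route has a real gap that you yourself flag: expanding $y\mapsto K_h(X_i-y)/f(y)$ to order $\nu+1$ about $y=x$ requires $K\in C^{\nu+1}$, whereas condition \ref{cond:kernel_nu} only assumes $K$ is continuous, so the ``reproducibility of polynomials kills the first $\nu$ terms'' argument is not available under the stated hypotheses. (It is also more than is needed: since $\sum_k q_k(x)=1$ and the weights are bounded, a zeroth-order bound via the modulus of continuity of $K$ on the shrinking stencil already makes the bracket $o(1)$ uniformly on $\{|X_i-x|\le Ch\}$, which is enough for the second-moment bound.)

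The paper takes a different and assumption-free route for this step. It never linearizes the kernel in the evaluation point. Instead it works with $Q_{\nu,1}(y)=-\{\mu(y)/f(y)\}\{\hat f(y)-f(y)\}+\{1/f(y)\}\{\hat g(y)-g(y)\}$ and shows by direct covariance computations (dominated convergence applied to $E[K((X_i-x_k^*)/h)K((X_i-x)/h)]$, using only continuity of $K$ and $|x_k^*-x|/h\to 0$) that $\var\{Q_{\nu,1}(x_k^*)\}$, $\var\{Q_{\nu,1}(x)\}$ and $\cov\{Q_{\nu,1}(x_k^*),Q_{\nu,1}(x)\}$ all agree to first order, i.e.\ the grid-point statistics are asymptotically perfectly correlated with the statistic at $x$. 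An abstract $L^2$ lemma (Lemma \ref{lemma:difference}: if finitely many sequences have matching variances, correlation tending to one with a reference sequence, and weights summing to one, then the weighted combination differs from the reference by $o_p$ of its standard deviation) then yields $\tilde Q_{\nu,1}(x)-Q_{\nu,1}(x)=o_p(1/\sqrt{Nh})$, after which the CLT for $Q_{\nu,1}(x)$ (Cram\'er--Wold plus the delta method) finishes the proof. Your fallback suggestion of joint asymptotic normality of the grid-point estimators plus a delta-method argument would also work, but the paper's correlation argument is lighter: it needs no joint limit law and no smoothness of $K$ beyond continuity. To repair your write-up with minimal change, replace the order-$(\nu+1)$ Taylor step either by the modulus-of-continuity bound or by the covariance/correlation argument just described.
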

The proof of Theorem \ref{thm:GPA_nu} is given in Appendix A.4. %\ref{append:thm:GPA_nu}. 
By Theorem \ref{thm:GPA_nu}, we find that, as long as $Jh\to \infty$ as $N\to\infty$, the $\nu$-th order polynomial interpolation based GPA estimator $\hat\mu_{\textup{PGPA},\nu}(x)$ can achieve the same statistical efficiency as the global estimator. Recall that $h = CN^{-1/(2\nu+3)}$ for some constant $C>0$. This implies that the number of grid points should satisfy $J \gg N^{1/(2\nu+3)}$. This is in line with the results in Theorem \ref{thm:GPA_asymptotic} for linear interpolation based GPA estimator $\hat\mu_{\textup{GPA}}(x)$ with $\nu=1$. 
Furthermore, we can see that fewer number of grid points are required as long as $\nu>1$. This suggests that higher order polynomial interpolation leads to further reduced communication cost without any sacrifice of asymptotic statistical efficiency. 
However, this benefit comes at the cost of imposing more stringent smoothness conditions on the mean function $\mu(\cdot)$ and density function $f(\cdot)$. 
In practice, we often do not know the smoothness of the unknown functions $\mu(\cdot)$ and $f(\cdot)$. To select an appropriate interpolation order $\nu$, we can randomly select a validation set $\mV$ from the whole dataset. 
% For example, the validation set can be obtained in the same way as the pilot sample used in bandwidth selection. 
Then the performance for different interpolation orders $1\le \nu \le \nu_{\max}$ can be evaluated on $\mV$, where $\nu_{\max}\ge 1$ is a prespecified integer. 
Specifically, for each candidate order $\nu$, we first choose a $(\nu+1)$-th order kernel function that satisfies condition \ref{cond:kernel_nu}. Subsequently, the $\nu$-th order polynomial interpolation based GPA estimator $\hat\mu_{\textup{PGPA},\nu}$ can be used to make prediction for the validation set $\mV$. Finally, we can choose the order $\nu$ minimizing the prediction error on $\mV$.

\subsection{Multivariate GPA}

We now consider how to extend the GPA estimator from the situation with a univariate predictor (i.e. $p=1$) to a more general situation with a multivariate predictor (i.e., $p>1$). 
We assume that $X_i=(X_{i1},\dots,X_{ip})^\top$ is a random vector compactly supported on $[0,1]^p \subset\mR^p$. 
Then for any fix point $\bx=(x_1,\dots,x_p)^\top \in [0,1]^p $, the multivariate NW estimator can be defined as \citep{li2007nonparametric}: 
\begin{align}\label{eq:nw_multi}
	\hat\mu(\bx) = \frac{\sum_{i =1}^N \bK_h(X_i - \bx)  Y_i }{\sum_{i =1}^N \bK_h(X_i - \bx)} = \frac{\sum_{m=1}^M\sum_{i \in \mS_m} \bK_h(X_i - \bx)  Y_i }{\sum_{m=1}^M\sum_{i \in \mS_m} \bK_h(X_i - \bx)},
\end{align}
where $\bK_h(X_i- \bx) = h^{-p}\prod_{s=1}^p K\big((X_{is} - x_s)/h\big)$, and $K(\cdot)$ is a univariate kernel function satisfying condition \ref{cond:kernel}. 
Similarly, let $J$ be a prespecified positive integer. 
We then design the grid points set as $\mG= \{\bx_j^* = (j_1/J,\dots, j_p/J)^\top \in \mR^p: 0\le j_1,\dots,j_p\le J\}$. This leads to a total of $|\mG|=(J+1)^p$ grid points. 
We can then compute the global estimator $\hat\mu(\bx_j^*)$ according to \eqref{eq:nw_multi} for each grid point $\bx_j^* \in \mG$ in a distributed way. 
For any $\bx=(x_1,\dots,x_p)^\top \in (0,1)^p $, we can find its $p+1$ nearest linearly independent grid points $\bx_{j_1}^*, \dots, \bx_{j_{p+1}}^* \in \mG$.
Let $\mC(\bx) = \big\{\sum_{k=1}^{p+1} w_k\bx_{j_k}^* : \sum_{k=1}^{p+1} w_k = 1 \textup{ and } w_k \ge 0 \textup{ for each } k \big\} $ be the $p$-simplex with these $p+1$ grid points as vertices. We can verify that $\bx \in \mC(\bx)$. 
Subsequently, we can predict $\mu(\bx)$ by multivariate linear interpolation as 
\begin{align*}
	\hat\mu_{\textup{MGPA},p}(\bx) = \sum_{k=1}^{p+1} q_k(\bx) \hat \mu(\bx_{j_k}^*),
\end{align*}
where $q_k(\bx),\, 1\le k\le p+1$ are interpolation coefficients \citep{waldron1998error}. 
To investigate the theoretical properties of multivariate GPA estimator $\hat\mu_{\textup{MGPA},p}(\bx)$, we assume the following technical conditions.
\begin{enumerate}[label=(A\arabic*)]
	\setcounter{enumi}{3}
	\item (\textsc{Smoothness Condition}) The probability density function $f(\cdot)$ is sufficiently smooth so that its second-order derivative $\ddot f(\bx)= \partial^2 f(\bx) / (\partial \bx\partial \bx^\top) \in \mR^{p\times p}$ is continuous. \label{cond:smoothness_p}
	
	The mean function $\mu(\cdot)$ is sufficiently smooth so that its its second-order derivative $\ddot \mu(\bx)= \partial^2 \mu(\bx) / (\partial \bx\partial \bx^\top)\in \mR^{p\times p}$ is continuous. 
	\item (\textsc{Optimal Bandwidth}) Assume $h = C N^{-1/(p+4)}$ for some positive constant $C$.\label{cond:bandwidth_p}
\end{enumerate}
The condition \ref{cond:smoothness_p} is a multivariate version of the univariate smoothness condition \ref{cond:smoothness}. This condition is needed for multivariate linear interpolation.
Condition \ref{cond:bandwidth_p} assumes that the bandwidth should be of the compatible optimal rate for the multivariate NW estimator \citep{hardle2004nonparametric,li2007nonparametric}. 
By this bandwidth, the optimal convergence rate achieved by a global multivariate NW estimator in terms of the asymptotic mean squared error is given by $O(1/(Nh^p)) = O(N^{-4/(p+4)})$ under conditions \ref{cond:kernel} and \ref{cond:smoothness_p}. 
The theoretical properties of the multivariate GPA estimator $\hat\mu_{\textup{MGPA},p}(\bx)$ are then summarized in the following theorem. 

\begin{theorem} \label{thm:GPA_multi}
	Assume conditions \ref{cond:kernel}, \ref{cond:smoothness_p} and \ref{cond:bandwidth_p} hold and $f(\bx)>0$. 
	Further, assume that $Jh\to\infty$ as $N\to\infty$. 
	Then we have $\hat\mu_{\textup{MGPA},p}(\bx) - \mu(\bx) = \tilde Q_{p,0} + \tilde Q_{p,1}  + \tilde \mQ_p$, where $\tilde Q_{p,0} $ is a non-stochastic term satisfying $\tilde Q_{p,0} = O(1/J^2)$, $\sqrt{Nh^{p}}\big\{ \tilde Q_{p,1} - B_p(\bx)h^2 \big\} \to_d \mN\Big(0,V_p(\bx)\Big)$, 
	and $\tilde \mQ_p = o_p(1 / \sqrt{Nh^p})$. 
	Here, $B_p(\cdot)$ and $V_p(\cdot)$ are defined in \textup{(S.13)} in Appendix A.5. 
	%\eqref{eq:B&V_p} in Appendix \ref{append:thm:GPA_multi}.
	In addition to that, we have
	\begin{align*}
		\sqrt{Nh^p}\Big\{\hat\mu_{\textup{MGPA},p}(\bx) - \mu(\bx) - B_p(\bx)h^2\Big\}\to_d \mN\Big(0, V_p(\bx) \Big).
	\end{align*}
\end{theorem}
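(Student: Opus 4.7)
The plan is to mirror the univariate argument of Theorem \ref{thm:GPA_asymptotic} (and the higher order version in Theorem \ref{thm:GPA_nu}) and reduce the multivariate problem to (i) a purely deterministic interpolation error and (ii) a linear combination of multivariate NW estimators at the $p+1$ grid vertices surrounding $\bx$. The starting identity uses the defining barycentric properties of the Lagrange simplicial interpolation coefficients, namely $\sum_{k=1}^{p+1} q_k(\bx)=1$ and $\sum_{k=1}^{p+1} q_k(\bx)\bx_{j_k}^*=\bx$. Using these I would write
\begin{align*}
\hat\mu_{\textup{MGPA},p}(\bx)-\mu(\bx)=\underbrace{\sum_{k=1}^{p+1} q_k(\bx)\bigl\{\mu(\bx_{j_k}^*)-\mu(\bx)\bigr\}}_{=:\tilde Q_{p,0}}+\sum_{k=1}^{p+1} q_k(\bx)\bigl\{\hat\mu(\bx_{j_k}^*)-\mu(\bx_{j_k}^*)\bigr\}.
\end{align*}

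For $\tilde Q_{p,0}$ I would Taylor expand $\mu$ around $\bx$ to second order (using \ref{cond:smoothness_p}); the constant and linear pieces vanish by the two barycentric identities, leaving a quadratic remainder of order $\max_k\|\bx_{j_k}^*-\bx\|^2=O(1/J^2)$ since the simplex has diameter at most $O(1/J)$. Recognizing the geometry of the simplicial grid and that the $q_k(\bx)$ are uniformly bounded (as in \citet{waldron1998error}) keeps constants under control.

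For the stochastic sum I would invoke the multivariate NW expansion at each vertex: under \ref{cond:kernel}, \ref{cond:smoothness_p}, \ref{cond:bandwidth_p} and $f(\bx)>0$,
\begin{align*}
\hat\mu(\bx_{j_k}^*)-\mu(\bx_{j_k}^*)=B_p(\bx_{j_k}^*)h^2+W_k(\bx_{j_k}^*)+R_k,
\end{align*}
where $W_k$ is the leading centered stochastic term (a normalized kernel-weighted average of the noise) and $R_k$ is of smaller order. I would define $\tilde Q_{p,1}=\sum_k q_k(\bx)\{B_p(\bx_{j_k}^*)h^2+W_k(\bx_{j_k}^*)\}$ and put everything else into $\tilde\mQ_p$. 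Continuity of $B_p(\cdot)$ (through \ref{cond:smoothness_p}) gives $\sum_k q_k(\bx)B_p(\bx_{j_k}^*)=B_p(\bx)+O(1/J)$, so the deterministic part of $\tilde Q_{p,1}$ is $B_p(\bx)h^2$ up to a remainder $O(h^2/J)$ that is absorbed into $\tilde\mQ_p$, since $\sqrt{Nh^p}\cdot h^2/J=O(1/J)\to 0$ under $Jh\to\infty$ and \ref{cond:bandwidth_p}.

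The main obstacle is the stochastic piece $\sum_k q_k(\bx) W_k(\bx_{j_k}^*)$, because the $W_k$ are correlated and evaluated at distinct points. Here the key is that $\|\bx_{j_k}^*-\bx\|\le O(1/J)=o(h)$; by the uniform continuity of the compactly supported kernel $K$ in \ref{cond:kernel}, a Taylor-style expansion in the kernel argument yields $\bK_h(X_i-\bx_{j_k}^*)=\bK_h(X_i-\bx)+O_p\!\bigl((Jh)^{-1}h^{-p}\bigr)$ uniformly over the finitely many vertices. Substituting this into the numerator and denominator of each $W_k$ and using $\sum_k q_k(\bx)=1$ shows $\sum_k q_k(\bx)W_k(\bx_{j_k}^*)=W^\star(\bx)+o_p\!\bigl((Nh^p)^{-1/2}\bigr)$, where $W^\star(\bx)$ is the single variance term at $\bx$ satisfying $\sqrt{Nh^p}\,W^\star(\bx)\to_d \mN(0,V_p(\bx))$ by a standard multivariate Lindeberg CLT on the kernel-weighted noise sum. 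Finally, collecting all lower-order pieces — $\sum_k q_k(\bx)R_k$, the $O(h^2/J)$ bias error, and the kernel-continuity error — into $\tilde\mQ_p$ and checking each is $o_p(1/\sqrt{Nh^p})$ gives the stated decomposition. The concluding CLT then follows since $\sqrt{Nh^p}\tilde Q_{p,0}=O(\sqrt{Nh^p}/J^2)=o(1)$ precisely because $Jh\to\infty$ together with $h\asymp N^{-1/(p+4)}$ forces $J\gg N^{1/(p+4)}$.
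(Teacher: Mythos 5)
Your decomposition is the same as the paper's: interpolation error of $\mu$ plus an affine combination (weights summing to one) of the vertex-wise NW fluctuations, with the CLT transferred from the single point $\bx$ to the combination. For $\tilde Q_{p,0}$ you Taylor-expand directly using the barycentric identities where the paper simply cites the error bound of Waldron (1998, Theorem 3.1); that is the same content. The genuine difference is in the stochastic term. The paper never touches the kernel pathwise: it computes $\cov\{Q_{p,1}(\bx_{j_k}^*),Q_{p,1}(\bx)\}=\var\{Q_{p,1}(\bx)\}\{1+o(1)\}$ via dominated convergence (Lemma S.3 style), and then invokes a generic lemma (Lemma S.5, built on the correlation Lemma S.4) saying that an affine combination of variables each asymptotically perfectly correlated with $Q_{p,1}(\bx)$ and with matching first two moments differs from it by $o_p(\sqrt{\var})$. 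You instead substitute $\bK_h(X_i-\bx_{j_k}^*)\approx\bK_h(X_i-\bx)$ term by term. Both routes work, but yours as written over-claims regularity: the bound $\bK_h(X_i-\bx_{j_k}^*)=\bK_h(X_i-\bx)+O_p\bigl((Jh)^{-1}h^{-p}\bigr)$ is a Lipschitz/Taylor statement about $K$, whereas condition (C2) only assumes $K$ is continuous (uniform continuity gives $o(1)$ with no rate). The step is repairable without extra assumptions: bound the variance of the difference by $\sigma^2(Nh^{2p})^{-1}E\{\Delta\bK\}^2$ and note $E\{\Delta\bK\}^2=h^p\cdot o(1)$ by dominated convergence since the argument shift $\|\bx-\bx_{j_k}^*\|/h=O(1/(Jh))\to 0$, which yields exactly the needed $o_p(1/\sqrt{Nh^p})$ — and is, in effect, the computation hiding inside the paper's covariance lemma. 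The same remark applies to your claim $B_p(\bx_{j_k}^*)-B_p(\bx)=O(1/J)$, which would need a third derivative; continuity gives only $o(1)$, but $o(h^2)$ already suffices since $\sqrt{Nh^{p+4}}=O(1)$. With these two rate claims weakened to the $o(\cdot)$ statements that continuity actually delivers, your argument is a valid alternative to the paper's correlation-lemma route.
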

The proof of Theorem \ref{thm:GPA_multi} is given in Appendix A.5. %\ref{append:thm:GPA_multi}. 
By Theorem \ref{thm:GPA_multi}, we know that the global statistical efficiency can be achieved if $Jh\to \infty$ as $N\to\infty$. Note that $h = C N^{-1/(p+4)}$ for some constant $C>0$. This implies that the number of grid points should satisfy $|\mG| = (J+1)^4 \gg N^{p/(p+4)}$. This is also in line with the results in Theorem \ref{thm:GPA_asymptotic} for univariate linear interpolation based GPA estimator $\hat\mu_{\textup{GPA}}(x)$ with $p=1$.

\section{Numerical Studies}

In this section, we conduct a number of numerical studies to demonstrate the performance of our proposed GPA methods.
We first investigate the performance of the different distributed kernel estimators in terms of the prediction accuracy and the time cost.
Next, we verify the finite performance of the two bandwidth selectors.
Finally, we compare our GPA method with other competing methods (i.e., the basis expansion method and the RKHS based method).
Specifically, we generate $X_i$ according to a probability density function $f(x)$ defined on $[0,1]$. Two different distributions are considered. They are, respectively, the uniform distribution $\text{Unif}(0,1)$ and the Beta distribution $\text{Beta}(2,3)$.
Once $X_i$ is obtained, $Y_i$ is generated according to $Y_i = \mu(X_i) + \varepsilon_i$ with $\varepsilon_i$ simulated from the standard normal distribution.
Here, two different mean functions are considered. They are, respectively,
$\mu_1(x) = 4(x-0.5) + 2\exp\{-128(x-0.5)^2 \}$, and 
$\mu_2(x) = \sin\{8(x-0.5)\} + 2\exp\{-128(x-0.5)^2 \}$.
The two mean functions are revised from \cite{fan1996local}. 
Additional numerical experiments, in comparison to other methods, can be found in Appendix E of the supplementary materials.

\subsection{Different Distributed Kernel Estimators}
\label{subsec:dist_KE}
Once the full training sample $\{(X_i,Y_i):\ 1\le i\le N \}$ is simulated, we consider two different local sample allocation strategies as follows.

    \textsc{Strategy 1. (Random Partition)}  We randomly and evenly divide the whole sample $\mS$ into $M$ disjoint subsets $\mS_m\ (1\le m\le M)$ such that the data distribution on different machines are the same. This is an ideal scenario for one-shot type estimators.
    
    \textsc{Strategy 2. (Nonrandom Partition)} We assign the samples to different machines according to the covariate value. Specifically, let $X_{(1) } \le X_{(2) } \le \cdots\le X_{(N) } $ be the order statistic of $X_i$s. Thus, $(i)$ represents the index of the $i$-th smallest sample point for each $1\le i\le N$. We then assign $\mS_m = \{(i): (m-1)n+1 \le i \le mn \}$ to the $m$-th local machine, where $n = N / M$ is the local sample size. In this case, the data distributions on different machines are very different.

Subsequently, different estimators are computed, including the one-shot estimator $\hat \mu_\textup{OS}$, the GPA estimator $\hat \mu_\textup{GPA}$, and the global estimator $\hat \mu$ with a distributed implementation described in Section \ref{subsec:global&OS}.
For all the estimators, the Epanechnikov kernel $K(u) = (3/4) (1-u^2)I(|u|\le 1)$ is adopted, where $I(\cdot)$ is the indicator function. The asymptotically optimal bandwidth $h_{\textup{opt}}$ defined in \eqref{eq:bandwidth_opt} is used for all estimators. 
For a comprehensive evaluation, three sample sizes ($N = 1\times 10^4, 2\times 10^4, 5\times 10^4$) are studied.
For the GPA estimator $\hat\mu_\textup{GPA}$, a total of $J + 1= [h_\textup{opt}^{-1}\log\log N]+1$ grid points are equally spaced on the interval $[0,1]$, where $[r]$ denotes the integer part of $r\in\mR$. 
To assess the prediction performance of each estimator, we independently generate a testing sample set $\big\{ \big(X_i^*,\mu(X_i^*)\big): 1\le i\le N^*\big\}$ according to the same probability density function $f(x)$ and mean function $\mu(x)$ as the training sample. The testing sample size is set to be $N^*=N/2$ and the number of machine is $M=50$.
Let $\hat \mu^{(b)}$ be one particular estimator (e.g., the global estimator) obtained in the $b$-th simulation iteration with $1\le b\le B = 100$. We then compute the root mean squared error (RMSE) as $\text{RMSE}(\hat\mu^{(b)}) = \sqrt{ (N^*)^{-1}\sum_{i=1}^{N^*} \big\{\hat\mu^{(b)}(X_i^*) - \mu(X_i^*) \big\}^2 }$.
In addition, both the computation and communication times are recorded. 
The averaged RMSE values, computation times ($t_\textup{cp}$, in seconds), and communication times ($t_\textup{cm}$, in seconds) over $B=100$ replications are presented in Tables \ref{tab:random} and \ref{tab:nonrandom} for the two different local sample allocation strategies, respectively.

\begin{table}[htbp]
	\caption{The averaged RMSE values, computation times and communication times for three different estimators by the \textsc{Random Partition} strategy in four different settings.}
	\label{tab:random}
	\centering
	\begin{tabular}{cccc|ccc|ccc} %{ >{\centering\arraybackslash}p{0.15\textwidth} *{5}{>{\centering\arraybackslash}p{0.12\textwidth}} }
		\toprule
		 & \multicolumn{3}{c|}{RMSE} & \multicolumn{3}{c|}{$t_\textup{cp}$ (sec.)} & \multicolumn{3}{c}{$t_\textup{cm}$ (sec.)} \\
        Est.&  $\hat \mu$ & $\hat \mu_\textup{OS}$ & $\hat \mu_\textup{GPA}$  &  $\hat \mu$ & $\hat \mu_\textup{OS}$ & $\hat \mu_\textup{GPA}$ &  $\hat \mu$ & $\hat \mu_\textup{OS}$ & $\hat \mu_\textup{GPA}$\\
		\hline
		\multicolumn{1}{c}{$N\ (\times 10^4)$} 
		&\multicolumn{8}{c}{\small \textsc{Setting 1}: $\mu = \mu_1,\ f\sim \text{Unif}(0,1)$} \\
		$1$& 0.046 & 0.048 & 0.046 & 0.108 & 0.097 & 0.004 & 0.177 & 0.174 & 0.129 \\  
		$2$& 0.034 & 0.034 & 0.034 & 0.431 & 0.453 & 0.005 & 0.276 & 0.264 & 0.118 \\  
		$5$& 0.024 & 0.024 & 0.024 & 2.743 & 2.831 & 0.013 & 0.585 & 0.622 & 0.122 \\  
		\hline
		
		\multicolumn{1}{c}{ } 
		&\multicolumn{8}{c}{\small \textsc{Setting 2}: $\mu = \mu_1,\ f\sim \text{Beta}(2,3)$} \\
		$1$& 0.047 & 0.049 & 0.048 & 0.114 & 0.075 & 0.003 & 0.186 & 0.179 & 0.126 \\   
		$2$& 0.035 & 0.037 & 0.035 & 0.440 & 0.465 & 0.006 & 0.272 & 0.276 & 0.120 \\   
		$5$& 0.025 & 0.025 & 0.025 & 2.821 & 2.846 & 0.014 & 0.591 & 0.645 & 0.122 \\  
		\hline
		
		\multicolumn{1}{c}{ } 
		&\multicolumn{8}{c}{\small \textsc{Setting 3}: $\mu = \mu_2,\ f\sim \text{Unif}(0,1)$} \\
		$1$& 0.048 & 0.049 & 0.048 & 0.107 & 0.075 & 0.003 & 0.182 & 0.178 & 0.128 \\   
		$2$& 0.035 & 0.036 & 0.035 & 0.440 & 0.458 & 0.006 & 0.268 & 0.254 & 0.120 \\   
		$5$& 0.025 & 0.025 & 0.025 & 2.706 & 2.632 & 0.012 & 0.629 & 0.573 & 0.122 \\ 
		\hline
		
		\multicolumn{1}{c}{ } 
		&\multicolumn{8}{c}{\small \textsc{Setting 4}: $\mu = \mu_2,\ f\sim \text{Unif}(2,3)$} \\
		$1$& 0.048 & 0.050 & 0.048 & 0.110 & 0.083 & 0.004 & 0.182 & 0.176 & 0.127 \\ 
		$2$& 0.036 & 0.038 & 0.036 & 0.456 & 0.470 & 0.006 & 0.265 & 0.262 & 0.122 \\ 
		$5$& 0.025 & 0.026 & 0.025 & 2.699 & 2.845 & 0.014 & 0.604 & 0.583 & 0.122 \\ 
		\bottomrule
	\end{tabular}
\end{table}

\begin{table}[htbp]
	\caption{The averaged RMSE values, computation times and communication times for three different estimators by the \textsc{Nonrandom Partition} strategy in four different settings.}
	\label{tab:nonrandom}
	\centering
	\begin{tabular}{cccc|ccc|ccc} %{ >{\centering\arraybackslash}p{0.15\textwidth} *{5}{>{\centering\arraybackslash}p{0.12\textwidth}} }
		\toprule
		 & \multicolumn{3}{c|}{RMSE} & \multicolumn{3}{c|}{$t_\textup{cp}$ (sec.)} & \multicolumn{3}{c}{$t_\textup{cm}$ (sec.)} \\
        Est.&  $\hat \mu$ & $\hat \mu_\textup{OS}$ & $\hat \mu_\textup{GPA}$  &  $\hat \mu$ & $\hat \mu_\textup{OS}$ & $\hat \mu_\textup{GPA}$ &  $\hat \mu$ & $\hat \mu_\textup{OS}$ & $\hat \mu_\textup{GPA}$\\
		\hline
		\multicolumn{1}{c}{$N\ (\times 10^4)$} 
		&\multicolumn{8}{c}{\small \textsc{Setting 1}: $\mu = \mu_1,\ f\sim \text{Unif}(0,1)$} \\
		$1$& 0.044 & NA & 0.045 & 0.090 & NA & 0.004 & 0.170 & NA & 0.126 \\  
		$2$& 0.034 & NA & 0.034 & 0.435 & NA & 0.006 & 0.260 & NA & 0.111 \\  
		$5$& 0.024 & NA & 0.024 & 2.749 & NA & 0.012 & 0.590 & NA & 0.111 \\ 
		\hline
		
		\multicolumn{1}{c}{ } 
		&\multicolumn{8}{c}{\small \textsc{Setting 2}: $\mu = \mu_1,\ f\sim \text{Beta}(2,3)$} \\
		$1$& 0.045 & NA & 0.046 & 0.087 & NA & 0.004 & 0.177 & NA & 0.129 \\ 
		$2$& 0.036 & NA & 0.036 & 0.448 & NA & 0.006 & 0.267 & NA & 0.112 \\ 
		$5$& 0.025 & NA & 0.025 & 2.734 & NA & 0.014 & 0.620 & NA & 0.110 \\ 
		\hline
		
		\multicolumn{1}{c}{ } 
		&\multicolumn{8}{c}{\small \textsc{Setting 3}: $\mu = \mu_2,\ f\sim \text{Unif}(0,1)$} \\
		$1$& 0.046 & NA & 0.046 & 0.088 & NA & 0.004 & 0.176 & NA & 0.127 \\ 
		$2$& 0.035 & NA & 0.035 & 0.434 & NA & 0.005 & 0.267 & NA & 0.116 \\ 
		$5$& 0.025 & NA & 0.025 & 2.738 & NA & 0.012 & 0.558 & NA & 0.116 \\ 
		\hline
		
		\multicolumn{1}{c}{ } 
		&\multicolumn{8}{c}{\small \textsc{Setting 4}: $\mu = \mu_2,\ f\sim \text{Unif}(2,3)$} \\
		$1$& 0.047 & NA & 0.047 & 0.089 & NA & 0.004 & 0.172 & NA & 0.129 \\  
		$2$& 0.036 & NA & 0.036 & 0.448 & NA & 0.006 & 0.273 & NA & 0.114 \\  
		$5$& 0.026 & NA & 0.026 & 2.710 & NA & 0.016 & 0.676 & NA & 0.113 \\
		\bottomrule
	\end{tabular}
\end{table}

Table \ref{tab:random} reports the detailed simulation results for the random partition strategy. By Table \ref{tab:random} we obtain the following interesting observations. 
First, we find that the RMSE values steadily decrease as the training sample size $N$ increases for all estimators. This is expected because all the estimators are consistent for the mean function $\mu(\cdot)$. In addition, the RMSE values of the one-shot estimator are slightly larger than those of the global estimator when whole sample size $N=1\times 10^4$. This is expected because the one-shot estimator generally requires sufficient local samples, however, the local sample size is only $n=N/M=200$ in this case.
In contrast, the GPA estimator performs almost identically to the global estimator in terms of the RMSE values. This is expected because according to Theorem \ref{thm:GPA_asymptotic}, the GPA estimator should be asymptotically as efficient as the global estimator. This is a desirable property independent of local sample size.
Second, regarding computation time (i.e., $t_\textup{cp}$), it can be observed that the computation times for the global and the one-shot estimators are comparable, and substantially higher than those of the GPA estimator.
This is particularly true when the sample size $N$ is large. For example, when $N=5\times 10^4$, the computation time for the GPA estimator takes no more than 0.02 seconds, while the other two estimators require over 2 seconds.
Lastly, regarding communication time (i.e., $t_\textup{cm}$), we can find that the global and the one-shot estimators have similar pattern: as the sample size increases, the communication times grow from about 0.18 seconds to about 0.6 seconds. 
In contrast, the communication times for the GPA estimator remain around 0.12 seconds. 
This is because the GPA estimator only transmits statistics related to a small amount of grid points during the training phase, while the other two estimators need to transmit a large number of testing observations. 

Table \ref{tab:nonrandom} reports the results for the nonrandom partition strategy. 
By Table \ref{tab:nonrandom} we find that NA values are reported for the one-shot estimator $\hat \mu_{\OS}$. This is because the support of $X_i$s on the local machine may not cover all $X_i^*$s in the testing sample in this case. Consequently, these machines cannot make predictions for these testing observations, and thus $\hat \mu_{\OS}$ is not well defined. In contrast, both the global and the GPA estimators can still make predictions for testing samples. Other results regarding RMSE values, computation and communication times are similar to those in Table \ref{tab:random}.

To further demonstrate the practical feasibility of the GPA method on massive datasets, we present another simulation study. 
For this study, we set $N = 10^8, 10^9, 10^{10}$. 
Note that $N= 10^{10}$ is 10 billion! Simulating a dataset of this size is extremely challenging for a single computer. As a result, we turn to a powerful distributed system for simulation. The distributed system used here is a Spark-on-YARN cluster with one master node and six computing nodes. Each computing node has 32 cores and 64 GB of RAM.
Thus, we have a total of 192 cores for computation. We generate the data according to Setting 1 described above but with testing sample size $N^* = 10,000$. Once the data are simulated, they are placed on a Hadoop file system (HDFS) so that the data are stored distributed on different local computers. We then randomly split the data into a total of $M = 128$ partitions. 
The advantage of the GPA method is that the distributed computation only needs to be conducted on gird points with $J = [h_\textup{opt}^{-1}\log\log N]$ having nothing to do with $N^* = 10,000$.
Next, the prediction of the testing sample is implemented on the master machine. Table \ref{tab:massive} reports the simulation results. 
The total time required required by the Spark system increases from 0.09 hours with $N=10^8$ to 2.38 hours with $N = 10^{10}$.
This is a single time consumption for our GPA methods. 
Once the grid point estimation is accomplished, the prediction for $N^* = 10,000$ testing samples is computed on the master machine, which takes no more than 1 second. However, for the global and the one-shot estimators, the distributed computation needs to be conducted $N^*$ times. When $N^*$ is large, this is extremely time consuming.

\begin{table}[htbp]
	\caption{The total time cost (in hours) for computing all grid points on the Spark system. The RMSE values of the GPA estimator are also reported. 
	}
	\label{tab:massive}
	\centering
	\begin{tabular}{ *{2}{>{\centering\arraybackslash}p{0.1\textwidth}}>{\centering\arraybackslash}p{0.15\textwidth} *{2}{>{\centering\arraybackslash}p{0.3\textwidth}} }
		\toprule
		$N$ & $J$ & Time & RMSE ($\times 10^{-2}$)\\
		\hline
		$10^8$    & 509  & 0.09 & 0.139 \\
		$10^9$    & 838  & 0.24 & 0.051 \\
		$10^{10}$ & 1375 & 2.38 & 0.020 \\
		\bottomrule
	\end{tabular}
\end{table}

\subsection{The Two Bandwidth Selectors}

In this subsection, we discuss the bandwidth selection problem. The models are the same as above. Three sample sizes ($N = 1\times 10^4, 2\times 10^4, 5\times 10^4$) are considered, and the number of local machines is fixed to be $M = 50$. We estimate $\hat h_{\textup{OS}}$ and $\hat h_\textup{PLT}$ via cross-validation described in Section \ref{subsec:bandwidth}. 
For $\hat h_\textup{PLT}$, the pilot sample sizes are chosen to be $n_0 =(1000,1500,3000)$ according to different full sample sizes. 
Denote $\hat h^{(b)}$ as the estimated bandwidth in the $b$-th replication for $\hat h_{\textup{OS}}$ or $\hat h_\textup{PLT}$. We then compute the mean relative absolute error (MRAE) of $\hat h$ as $\text{MRAE}(\hat h) =  B^{-1}\sum_{b=1}^B|\hat h^{(b)}-h_\textup{opt}|/h_\textup{opt}$. The results are presented in Table \ref{tab:bandwidth}.
By Table \ref{tab:bandwidth}, we find that the MRAE values steadily decrease as $N$ increases for both bandwidth selectors. This consists with Theorem \ref{thm:bandwidth} that both two bandwidth selectors are ratio consistent of the asymptotically optimal bandwidth.
The MRAE values of $\hat h_{\textup{OS}}$ are smaller than those of $\hat h_\textup{PLT}$.
This is expected since $\hat h_{\textup{OS}}$ is computed based on the full sample, while $\hat h_\textup{PLT}$ is computed based on the pilot sample with a much smaller size.
However, the drawback of the $\hat h_{\textup{OS}}$ is that it requires a homogeneous data distribution on different workers. 
In contrast, $\hat h_\textup{PLT}$ is free of such an assumption.

\begin{table}[h]
	\caption{The mean relative absolute error (MRAE) of the two bandwidth selectors in four different settings.}
	\label{tab:bandwidth}
	\centering
	\begin{tabular}{ >{\centering\arraybackslash}p{0.1\textwidth}| *{3}{>{\centering\arraybackslash}p{0.115\textwidth}}|*{3}{>{\centering\arraybackslash}p{0.115\textwidth}} }
		\toprule
		$N\  (\times 10^4)$& $1$ & $2$ & $5$ & $1$ & $2$ & $5$ \\
		\hline
		Band.&\multicolumn{3}{c|}{\small \textsc{Setting 1}: $\mu = \mu_1,\ f\sim \text{Unif}(0,1)$} 
		&\multicolumn{3}{c}{\small \textsc{Setting 2}: $\mu = \mu_1,\ f\sim \text{Beta}(2,3)$}\\
		$\hat h_\textup{OS}$ & 0.099 & 0.044 & 0.024 & 0.060 & 0.036 & 0.023 \\
		$\hat h_\textup{PLT}$& 0.186 & 0.159 & 0.153 & 0.172 & 0.155 & 0.146 \\
		\hline
		&\multicolumn{3}{c|}{\small \textsc{Setting 3}: $\mu = \mu_2,\ f\sim \text{Unif}(0,1)$} 
		&\multicolumn{3}{c}{\small\textsc{Setting} 4: $\mu = \mu_2,\ f\sim \text{Beta}(2,3)$}\\
		$\hat h_\textup{OS}$ & 0.067 & 0.034 & 0.021 & 0.044 & 0.030 & 0.022\\
		$\hat h_\textup{PLT}$& 0.173 & 0.153 & 0.147 & 0.164 & 0.146 & 0.141 \\
		\bottomrule
	\end{tabular}
\end{table}

To assess the performance of the different bandwidths, we compute the one-shot estimator $\hat\mu_\textup{OS}$ and the GPA estimator $\hat\mu_\textup{GPA}$ based on both $\hat h_{\textup{OS}}$ and $\hat h_\textup{PLT}$.
The RMSE values of these estimators are then computed based on an independently generated testing sample of size $N^* = 1,000$.
For comparison, we also compute the RMSE values of the global estimator $\hat\mu$ based on the asymptotically optimal bandwidth $h_\textup{opt}$.
The RMSE values averaged over $B=100$ replications are presented in Table \ref{tab:RMSE_OS_PLT}.
As seen in Table \ref{tab:RMSE_OS_PLT}, the RMSE values decrease as the sample size $N$ increases for all estimators. 
In addition, all the estimators exhibit performance comparable to that of the global estimator when $N = 5\times 10^4$.
This confirms the usefulness of the two bandwidth selectors.
We can also see that the $\hat h_{\textup{OS}}$-based estimators perform better than the $\hat h_\textup{PLT}$-based estimators in terms of RMSE values.
This is expected because according to the results in Table \ref{tab:bandwidth}, $\hat h_{\textup{OS}}$ is closer to the asymptotically optimal bandwidth. 
Furthermore, the GPA estimator generally performs better than the one-shot estimator in terms of the RMSE values.

\begin{table}[htbp]
	\caption{The averaged RMSE values of the global estimator, the one-shot estimator, and the GPA estimator with different bandwidths in four different settings.}
	\label{tab:RMSE_OS_PLT}
	\centering
	\begin{tabular}{ >{\centering\arraybackslash}p{0.15\textwidth} *{5}{>{\centering\arraybackslash}p{0.12\textwidth}} }
		\toprule
		Est. & $\hat \mu$ & $\hat \mu_\textup{OS}$ & $\hat \mu_\textup{GPA}$ & 
		$\hat \mu_\textup{OS}$ & $\hat \mu_{\textup{GPA}}$  \\
		
		Band. & $h_\textup{opt}$ & $\hat h_\textup{OS}$  & $\hat h_\textup{OS}$  &  $\hat h_\textup{PLT}$& $\hat h_\textup{PLT}$\\
		\hline
		\multicolumn{1}{c}{$N\ (\times 10^4)$} 
		&\multicolumn{5}{c}{\small \textsc{Setting 1}: $\mu = \mu_1,\ f\sim \text{Unif}(0,1)$} \\
		$1$&  0.046 & 0.048 & 0.048  & 0.051 & 0.049  \\
		$2$&  0.034 & 0.035 & 0.034  & 0.036 & 0.036  \\
		$5$&  0.024 & 0.024 & 0.024  & 0.025 & 0.025  \\
		\hline
		
		\multicolumn{1}{c}{ } 
		&\multicolumn{5}{c}{\small \textsc{Setting 2}: $\mu = \mu_1,\ f\sim \text{Beta}(2,3)$} \\
		$1$&  0.047 & 0.049 & 0.048  & 0.051 & 0.050  \\
		$2$&  0.036 & 0.037 & 0.036  & 0.038 & 0.037  \\
		$5$&  0.025 & 0.025 & 0.025  & 0.026 & 0.026  \\
		\hline
		
		\multicolumn{1}{c}{ } 
		&\multicolumn{5}{c}{\small \textsc{Setting 3}: $\mu = \mu_2,\ f\sim \text{Unif}(0,1)$} \\
		$1$&  0.048 & 0.050 & 0.049  & 0.052 & 0.051 \\
		$2$&  0.035 & 0.036 & 0.036  & 0.037 & 0.037 \\
		$5$&  0.025 & 0.025 & 0.025  & 0.026 & 0.026 \\
		\hline
		
		\multicolumn{1}{c}{ } 
		&\multicolumn{5}{c}{\small \textsc{Setting 4}: $\mu = \mu_2,\ f\sim \text{Unif}(2,3)$} \\
		$1$&  0.048 & 0.050 & 0.049  & 0.052 & 0.050  \\
		$2$&  0.037 & 0.038 & 0.037  & 0.039 & 0.038  \\
		$5$&  0.025 & 0.026 & 0.025  & 0.026 & 0.026  \\
		\bottomrule
	\end{tabular}
\end{table}

\section{Real Data Analysis}
\label{sec: application}

\subsection{Airline Dataset}

We first apply our method to the US airline dataset (\url{http://stat-computing.org/dataexpo/2009}). 
It consists of flight arrival and departure details for all commercial flights within the US from October 1987 to April 2008. The dataset contains over 123 million records, occupying 12 GB of space. To demonstrate the usefulness of our proposed methodology, we choose the arrival delay (\textsf{ArrDelay}, recorded in minutes) of the flight as the response. However, the response \textsf{ArrDelay} has a very heavy-tailed distribution and contains negative values (early arrival). Therefore, we transform the response via a modified logarithmic transformation, $\sign(y)\cdot\log (1 + |y|)$, preserving both sign and scale information. We next choose the scheduled departure time (\textsf{CRSDepTime}) as the predictor.
After data cleaning, a total of $120,947,440$ observations are retained. 
Among them, $N=120,790,000$ observations are randomly selected as the training sample set, $n_\textup{val}=10,000$ observations are randomly selected as the validation set $\mV$, and the remaining $N^* = 147,440$ observations form the testing sample set. 
A total of $M=50$ machines are used. 
We first select an appropriate interpolation order $\nu\ (1\le \nu \le \nu_{\max}=4)$ using the procedures described at the end of Section \ref{subsec:GPA_nu}.
Specifically, for each order $\nu$, we compute the root mean prediction error (RMPE) of the $\nu$-th order polynomial interpolation based GPA estimator $\hat\mu_{\GPA, \nu}$ on the validation set $\mV$ as $\textup{RMPE}(\hat\mu_{\GPA, \nu}) =  \sqrt{ (n_\textup{val})^{-1}\sum_{i\in\mV} \big\{\hat\mu_{\GPA, \nu}(X_i) - Y_i\big\}^2 }$. Here, the pilot sample-based bandwidth selector $\hat h_\textup{PLT}$ is used with a pilot sample of size $n_0 = 10,000$, and $J$ is set to be $J = [24\hat h_\textup{PLT}^{-1} \log \log N  ] $ accordingly. 
We find that the linear interpolation based GPA estimator $\hat\mu_{\GPA,1}$ (i.e., $\hat\mu_{\GPA}$) shows the smallest RMPE value. The corresponding bandwidth is $\hat h_\textup{PLT} =  0.549$, and a total of $J+1=164$ grid points are equally spaced within the interval $[0, 24]$.
We then use $\hat\mu_{\GPA}$ to predict the testing sample set. The training and prediction of the GPA estimator take approximately 25 seconds.
For comparison, we also use the one-shot estimator and the global estimator with distributed implementation to make prediction for the testing sample. 
% However, neither completes the prediction within 2 hours. 
Both methods show RMPE values close to the GPA estimator, but each takes more than 3.5 hours to complete the prediction.
Additionally, implementing $\hat\mu_{\GPA,\nu}$ with different $\nu\ (1\le \nu \le 4)$ on the testing sample confirms that the linear interpolation based GPA estimator $\hat\mu_{\GPA,1}$ (i.e., $\hat\mu_{\GPA}$) still yields the smallest RMPE value.
For illustration, we plot the prediction results of $\hat\mu_{\GPA}$ as a curve in Figure \ref{fig:airline}. 
As shown in Figure \ref{fig:airline}, the curve is not very smooth. This partially explains why previous procedures select the linear interpolation based GPA estimator. Since according to theorem \ref{thm:GPA_nu}, higher order interpolation requires stringent smoothness conditions on $\mu$ and $f$. 
By Figure \ref{fig:airline}, we can also observe that the early flights (before 7:00) show fewer arrival delays. In fact, flights departing during this time tend to arrive earlier than scheduled. This is partly because airports are relatively less crowded in the morning. However, delays start to increase after 7:00 and peak around 20:00.

\begin{figure}[htbp]
\centering
\includegraphics[width = 0.9\textwidth]{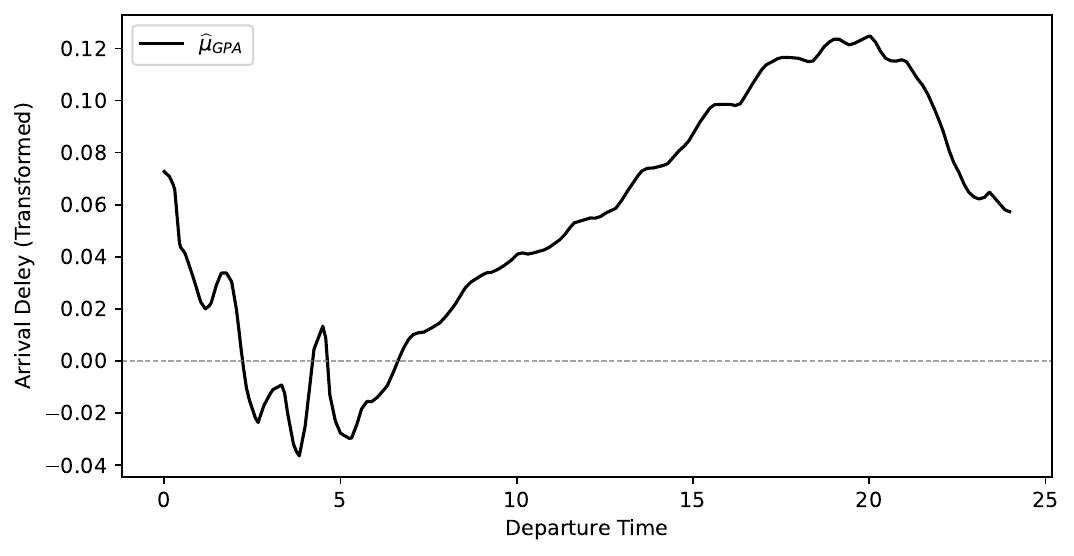}
\caption{The estimated curve of the transformed arrival delay vs. departure time.}
\label{fig:airline}
\end{figure}

\begin{figure}[htbp]
\centering
\includegraphics[width = 0.9\textwidth]{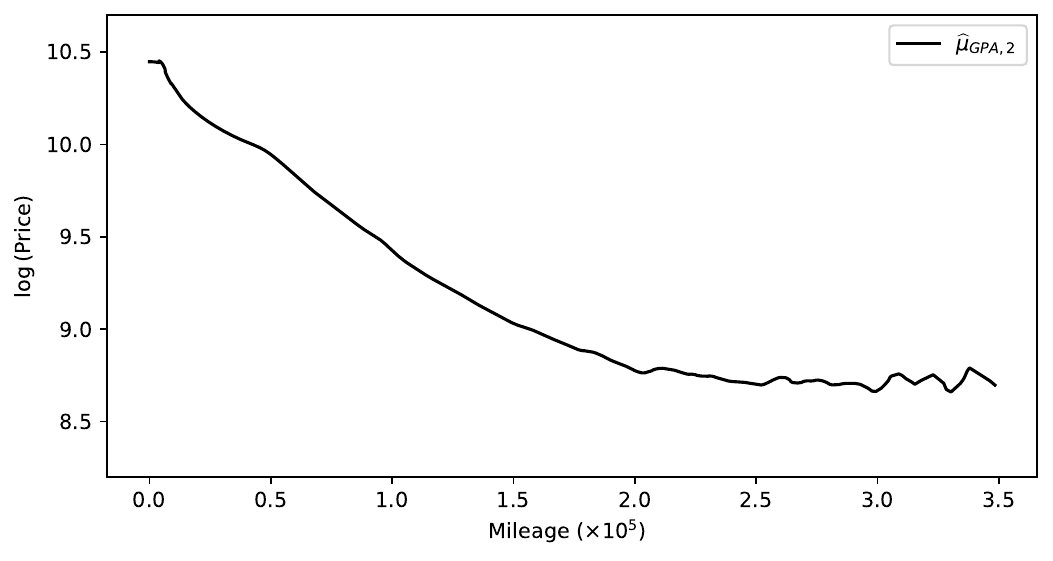}
\caption{The estimated curve of the log-transformed price vs. mileages (in miles).}
\label{fig:usedcar}
\end{figure}

\subsection{Used Cars Dataset}
{
We next apply our method to the US used cars dataset (\url{https://www.kaggle.com/datasets/ananaymital/us-used-cars-dataset}). 
The whole dataset contains details of 3 millions real-world used cars, taking over 9 GB of space. 
We choose the price of a car as the response of interest, and the mileage (i.e., the miles ($\times 10^5$) traveled of a car since manufactured) as the predictor. 
Our preceding analysis suggests that the distribution of the response variable is highly right-skewed. Therefore, we apply a logarithmic transformation to it. 
After data cleaning, a total of $2,610,715$ observations are retained. 
Among them, $N=2,490,000$ observations are randomly selected as the training sample set, $n_\textup{val}=10,000$ observations are randomly selected as the validation set $\mV$, and the remaining $N^* = 110,715$ observations are used as the testing sample set. 
A total of $M=50$ machines are used. 
We use the same procedures as the previous example to select the interpolation order and the bandwidth. 
In this example, we find that the quadratic interpolation based GPA estimator $\hat\mu_{\GPA,2}$ shows the smallest RMPE value. The corresponding bandwidth is $\hat h_\textup{PLT} =  0.069$, and a total of $J+1 = [3.5\hat h_\textup{PLT}^{-1} \log \log N  ]+1 = 137 $ grid points are equally spaced within the interval $[0, 3.5]$. 
We then use $\hat\mu_{\GPA,2}$ to predict the testing observations. The training and prediction of the GPA estimator $\hat\mu_{\GPA,2}$ take about 4 seconds. 
For comparison, we also use the one-shot estimator and the global estimator with the same bandwidth and kernel function to predict the testing observations. 
While both estimators yield RMPE values similar to $\hat\mu_{\GPA,2}$, their prediction times exceed 65 seconds.
In addition, on the prediction sample, the GPA estimator $\hat\mu_{\GPA,\nu}$ with $\nu=2$ still yields the smallest RMPE value compared to the other interpolation orders.
For illustration, we plot the prediction results of $\hat\mu_{\GPA,2}$ as a curve in Figure \ref{fig:usedcar}.
By Figure \ref{fig:usedcar}, we can observe that the curve is relatively smooth. This partially explains why quadratic interpolation beats the linear interpolation. 
In addition, we find that the price of used cars almost monotonically decreases with mileage increasing. It is noted that when the mileage is less than $2\times 10^5$ miles, the log-transformed price drops very quickly as the mileage increases. However, when the mileage exceeds $2\times 10^5$ miles, the impact of mileage on the log-transformed price becomes no longer substantial.
}

\section{Concluding Remarks}

In this paper, we study the problem of nonparametric smoothing on a distributed system. 
We first investigate the popularly used one-shot type estimator and the distributed implementation of the global estimator. 
However, we find that the two estimators are highly inefficient for prediction in terms of both computation and communication costs.
This drives us to develop a novel GPA estimator, which requires no further communication when making predictions.
Subsequently, the statistical properties of the GPA estimator are carefully investigated.
We find that the GPA estimator has the same statistical efficiency as the global estimator under very mild conditions.
In addition, two novel bandwidth selectors are developed for the distributed estimators, and their theoretical properties are then investigated.
To further extend the GPA method, we also explore the application of a higher-order polynomial interpolated GPA estimator and a multivariate GPA estimator.
Finally, these theoretical results are carefully validated through extensive numerical studies and real data analysis.

To conclude the article, we provide several directions for future studies. 
First, we investigate the nonparametric estimation problem on a centralized distributed system. It is worthwhile to explore whether the GPA method can be extended to a decentralized distributed system.
Second, our numerical studies show that the one-shot bandwidth selector exhibits excellent performance.
It is interesting to compare it with the global bandwidth selector in terms of the finite sample performance.
Last, many semiparametric models are popularly used in modern statistical inference, including the single index model, partially linear model, varying coefficient model, and many others.
It would be very interesting to apply our GPA method to these models.

\bibliographystyle{apalike}
\bibliography{ref.bib}

\clearpage

\iftrue{

\setcounter{table}{0}
\renewcommand{\thetable}{S.\arabic{table}}

\setcounter{equation}{0}
\renewcommand{\theequation}{S.\arabic{equation}}

\setcounter{theorem}{0}
\renewcommand{\thetheorem}{S.\arabic{theorem}}

\setcounter{lemma}{0}
\renewcommand{\thelemma}{S.\arabic{lemma}}

\section*{Appendix}

\begin{appendix}

\section{Proofs of the Theorems}

% \subsection{Proof of Theorem \ref{thm:global&OS}}
\subsection{Properties of the Global and the One-Shot Estimators}
\label{append:thm:global&OS}

{
For the sake of completeness, we present the properties of the global and the one-shot estimators in the following theorem.
 \begin{theorem} \label{thm:global&OS}
	Assume conditions \ref{cond:smoothness}--\ref{cond:bandwidth} hold and $f(x)>0$. Then, we have
	\begin{enumerate}[(a)]
		\item $\hat \mu(x) - \mu(x) = Q_1 + Q_2 + \mQ$, where $\sqrt{Nh}\Big\{Q_1 - B(x)h^2\Big\} \to_d \mN\Big(0, V(x)\Big)$, $\var(Q_2)=O\left\{1/(Nh)^2\right\}$, $\mQ = O_p\left\{ 1/ (Nh)^{3/2} \right\}$, and 
		\begin{align*}
			E(Q_2) = -\left[ \frac{\nu_2\big\{\ddot\mu(x) f(x) + 2\dot\mu(x)\dot f(x) \big\}}{2f^2(x)}\right]\left(\frac{h}{N}\right)+o\left(\frac{h}{N}\right) + O(h^4).
		\end{align*}
		
		\item Further assume that $nh/ \log M \to\infty$ as $N\to \infty$ and $|Y_i|<C_Y$ for some constant $C_Y>0$, then, we have $\hat\mu_\textup{OS}(x) - \mu(x) = Q_1 + \bar Q_2 + \bar \mQ$, where $\var(\bar Q_2)=O\left\{1/(Nnh^2)\right\}$, $\bar\mQ = O_p\left\{ 1/ (nh)^{3/2} \right\}$, and
		\begin{align*}
			&E(\bar Q_2) = -\left[ \frac{\nu_2\big\{\ddot\mu(x) f(x) + 2\dot\mu(x)\dot f(x) \big\} } {2f^2(x)}\right]\left(\frac{h}{n}\right)+o\left(\frac{h}{n}\right)+ O(h^4).
		\end{align*}
	\end{enumerate}
\end{theorem}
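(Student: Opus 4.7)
The proof plan is to analyze both estimators via the classical numerator/denominator decomposition of the Nadaraya--Watson estimator followed by a geometric expansion of the reciprocal of the denominator. Set $\hat f(x) = N^{-1}\sum_{i=1}^N K_h(X_i-x)$ and $\psi_i(x) = K_h(X_i-x)\{Y_i-\mu(x)\}$, so that $\hat\mu(x)-\mu(x) = \{N^{-1}\sum_i \psi_i(x)\}/\hat f(x)$. Writing $\hat f(x) = f(x)\{1+(\hat f(x)-f(x))/f(x)\}$ and expanding $1/\hat f(x)$ as a geometric series yields $\hat\mu(x)-\mu(x) = Q_1 + Q_2 + \mQ$, where $Q_1 = \{N^{-1}\sum_i \psi_i(x)\}/f(x)$ is the linear term, $Q_2 = -Q_1\cdot\{\hat f(x)-f(x)\}/f(x)$ is the product term, and $\mQ$ collects the cubic and higher-order remainder.

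For $Q_1$, a change of variables $u=(X_i-x)/h$ together with second-order Taylor expansions of $\mu$ and $f$ at $x$ give $E\psi_i(x)=B(x)h^2 f(x)+o(h^2)$ and $\var\psi_i(x)=\nu_0\sigma^2 f(x)/h+o(1/h)$. The Lindeberg--Feller CLT for the i.i.d.\ sum then delivers $\sqrt{Nh}\{Q_1-B(x)h^2\}\to_d \mN(0,V(x))$. For $E(Q_2)$, decompose $\hat f-f=(\hat f-E\hat f)+(E\hat f-f)$. Only diagonal terms $i=j$ survive in $E[\sum_i\psi_i\sum_j\{K_h(X_j-x)-EK_h\}]$, and the integral $\int u^2 K^2(u)\,du=\nu_2$ combined with the Taylor coefficients of $\mu f$ produces the leading $-\nu_2\{\ddot\mu(x)f(x)+2\dot\mu(x)\dot f(x)\}/\{2f^2(x)\}\cdot(h/N)$ term; the $O(h^4)$ piece arises from the product $EQ_1\cdot(E\hat f-f)/f$, since both factors are $O(h^2)$. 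The bound $\var(Q_2)=O\{(Nh)^{-2}\}$ follows from a fourth-moment calculation in which the $O(N^2)$ dominant diagonal pairs contribute $E[\psi_i^2(x)K_h^2(X_i-x)]/N^2 = O(h^{-2}/N^2)$. Finally, $\mQ=O_p\{(Nh)^{-3/2}\}$ follows from the rates $|\hat f(x)-f(x)|=O_p(1/\sqrt{Nh})$ and $|Q_1|=O_p(1/\sqrt{Nh})$, together with $\hat f(x)\to_p f(x)>0$, which validates the geometric expansion on an event of probability tending to one.

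For part (b), under random allocation across machines, $\hat\mu_\textup{OS}(x)-\mu(x)=M^{-1}\sum_{m=1}^M\{Q_{1,m}+Q_{2,m}+\mQ_m\}$, where part (a) applied to each local sample of size $n$ furnishes the decomposition $\hat\mu_m(x)-\mu(x)=Q_{1,m}+Q_{2,m}+\mQ_m$. A key algebraic identity makes the leading piece collapse to the global one: $M^{-1}\sum_m Q_{1,m} = M^{-1}\sum_m n^{-1}\sum_{i\in\mS_m}\psi_i(x)/f(x) = N^{-1}\sum_{i=1}^N\psi_i(x)/f(x)=Q_1$, so the asymptotic distribution of the linear term of $\hat\mu_\textup{OS}(x)$ coincides with that of $\hat\mu(x)$. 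Next, $\bar Q_2=M^{-1}\sum_m Q_{2,m}$ inherits $E(\bar Q_2)=E(Q_{2,m})=-\nu_2\{\ddot\mu(x)f(x)+2\dot\mu(x)\dot f(x)\}/\{2f^2(x)\}\cdot(h/n)+o(h/n)+O(h^4)$ from the local version of (a), and mutual independence across machines gives $\var(\bar Q_2)=M^{-1}\var(Q_{2,m})=M^{-1}\cdot O\{(nh)^{-2}\}=O\{1/(Nnh^2)\}$.

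The main obstacle will be controlling the aggregated remainder $\bar\mQ = M^{-1}\sum_m \mQ_m$. Although each $\mQ_m=O_p\{(nh)^{-3/2}\}$ individually, preserving this rate after aggregation over $M$ machines requires uniform control, and this is precisely where the assumptions $|Y_i|\le C_Y$ and $nh/\log M\to\infty$ are used. Boundedness of $Y_i$ permits Bernstein-type exponential inequalities for $\hat f_m(x)-f(x)$ and the local numerator analogue, and a union bound over the $M$ machines yields $\sup_m|\hat f_m(x)-f(x)| = O_p\{\sqrt{\log M/(nh)}\}=o_p(1)$. This validates the geometric expansion of $1/\hat f_m(x)$ uniformly in $m$ with probability tending to one, and hence yields the claimed rate $\bar\mQ = O_p\{(nh)^{-3/2}\}$.
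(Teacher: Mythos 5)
Your decomposition via the centered numerator $\psi_i(x)=K_h(X_i-x)\{Y_i-\mu(x)\}$ together with a geometric expansion of $1/\hat f(x)$ yields exactly the same $Q_1$ and $Q_2$ as the paper's second-order Taylor expansion of $\hat g/\hat f$ about $(f,g)$, and your treatment of the CLT for $Q_1$, the moment calculations for $E(Q_2)$, $\var(Q_2)$, and $\bar Q_2$, and the Bernstein-plus-union-bound control of $\bar\mQ$ across the $M$ machines in part (b) all coincide with the paper's argument (its Lemmas S.1 and S.2), so this is essentially the same proof. One minor imprecision worth fixing: the dominant contribution to $\var(Q_2)$ comes from the $O(N^2)$ pairs of \emph{distinct} indices, each contributing $E\{\psi_i^2(x)\}\, E\big[\{K_h(X_k-x)-EK_h(X_k-x)\}^2\big]/N^4=O(h^{-2}/N^4)$, rather than the single-index joint moment $E\{\psi_i^2(x)K_h^2(X_i-x)\}/N^2$ as written, which read literally is $O\{1/(N^2h^3)\}$ (the fully diagonal terms have multiplicity $N$, not $N^2$, and are negligible since $Nh\to\infty$); the claimed bound $O\{(Nh)^{-2}\}$ is nonetheless correct.
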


By Theorem \ref{thm:global&OS}(a), we know that $Q_2$ and $\mQ$ are of the same order $o_p(1/\sqrt{Nh})$.
Consequently, $\hat \mu(x) - \mu(x)$ and $Q_1$ should share the same asymptotic distribution as
\begin{align*}
	\sqrt{Nh}\Big\{ \hat \mu(x) - \mu(x) - B(x)h^2\Big\} \to_d \mN\Big(0, V(x)\Big).
\end{align*}
This is in line with the classical results on the NW estimator \citep[e.g.,][ Theorem 2.2]{li2007nonparametric}.
However, we try to make further contributions to the literature by providing a slightly more elaborated second-order asymptotic analysis of $Q_2$.
By Theorem \ref{thm:global&OS}(b), we require that {\color{red} $nh/\log M\rightarrow \infty$} as $N\rightarrow\infty$ to ensure the consistency of the one-shot estimator. Recall that $h = CN^{-1/5}$ with some constant $C>0$. This implies that the local sample size $n$ should be no smaller than $N^{1/5}\log M$. Furthermore, we know that $\bar Q_2 = O_p(h/n+ 1/\sqrt{Nnh^2}) = o_p(1 / \sqrt{Nh})$ as long as $nh\to \infty$.
Furthermore, if $\bar \mQ = o_p(1 / \sqrt{Nh})$, the one-shot estimator $\hat\mu_\textup{OS}(x)$ should be asymptotically as efficient as the global estimator $\hat \mu(x)$.
In fact, this condition can be satisfied provided $n\gg N^{7/15} $.
We note that this is a relatively mild condition. 
In contrast, for a distributed parametric problem, we typically require that $n\gg N^{1/2}>N^{7/15}$; for example, see \cite{zhang2013communication} and \cite{huang2019distributed}.

}

\begin{proof}

	We use $C_k$ with $ k\ge 0$ to denote positive constants, whose values might change from occurrence to occurrence. 
	In the following, we prove theorem conclusions (a) and (b) subsequently.
	
	\noindent\textbf{Proof of }(a). For some fixed point $x\in(0,1)$, in a slight abuse of notation, we define
	\begin{align*}
		\hat f = \hat f(x) = \frac{1}{Nh} \sum_{i=1}^N K\left(\frac{X_i - x}{h} \right), \quad \hat g = \hat g(x) =\frac{1}{Nh}  \sum_{i=1}^N K\left(\frac{X_i - x}{h} \right)Y_i.
	\end{align*}
	It is easy to prove that $\hat f$ and $\hat g$ are consistent estimators of $f$ and $g=\mu f$ respectively, under the conditions \ref{cond:smoothness}--\ref{cond:bandwidth}. 
	Then it follows from Taylor expansion that $\hat \mu - \mu =$
	\begin{align*}
		\frac{\hat g}{\hat f} - \frac{g}{f} 
		=& \left\{- \frac{\mu}{f}(\hat f - f) +  \frac{1}{f} (\hat g - g)\right\} + \left\{\frac{\mu}{f^2} (\hat f - f)^2 - \frac{1}{f^2} (\hat f - f)(\hat g - g)\right\} \\
		&+ \left\{-\frac{\tilde g }{\tilde f^4} (\hat f - f)^3 + \frac{1}{\tilde f ^3 }(\hat f - f)^2(\hat g - g) \right\}\\
		=& \Big\{Q_{1}\Big\} + \Big\{Q_{2}\Big\} + \Big\{\mQ\Big\},
	\end{align*}
	where $\tilde f $ is between $\hat f$ and $f$, and $\tilde g$ is between $\hat g$ and $g$. We then consider $Q_{1}$, $Q_{2}$ and $\mQ$ respectively.

	We first derive the limit distribution of the term $Q_1$. 
	In fact, by Cram\'er-Wold device and Lyapunov's central limit theorem, we can establish that 
	\begin{align} \label{eq:f&g}
		\sqrt{Nh}\left(\hat f - f - B_f h^2 ,\  \hat g - g - B_g h^2\right) \to_d \mN(0, \Sigma),
	\end{align}
	under the assumed conditions.
	Here, $\Sigma = [\sigma_{ff}, \sigma_{fg}; \sigma_{fg}, \sigma_{gg}] \in \mR^{2\times 2}$ with $\sigma_{ff} = \nu_0 f$, $\sigma_{gg} = (\mu^2 + \sigma^2) \nu_0 f$ and $\sigma_{fg} = \mu \nu_0 f$, and $B_f = \kappa_2 \ddot f / 2$ and $B_g = \kappa_2 (\ddot \mu f + \mu \ddot f + 2 \dot\mu \dot f) / 2$.
	Then, by Delta method and \eqref{eq:f&g}, we have 
	\begin{align}\label{eq:Q1_normal}
		\sqrt{Nh}\Big\{Q_1(x)-B(x)h^2 \Big\} \to_d \mN\Big(0, V(x)\Big),
	\end{align}
	where $B = (-\mu / f)B_f + (1/f)B_g  = \kappa_2 (\ddot \mu + 2 \dot \mu \dot f  / f) / 2$, and $V = (\mu^2 / f^2) \sigma_{ff} + (1/f^2)\sigma_{fg} - 2(\mu /f^2) \sigma_{fg} =\nu_0 \sigma^2  / f $ are the same as those defined in \eqref{eq:B&V}.
	
	We next compute the expectation and variance of $Q_2$.
	Similar to the proofs of \eqref{eq:Efm}, \eqref{eq:varfm} and \eqref{eq:Efmgm} in Lemma \ref{lemma:fmgm}, we can obtain the expression of $E(\hat f - f)^2$ and $E(\hat f -f)(\hat g - g)$. 
	Then we should have $EQ_2 =$
	\begin{align*}
		\frac{\mu}{f^2} E(\hat f - f)^2 -\frac{1}{f^2} E(\hat f -f)(\hat g - g) = -\left\{ \frac{\nu_2(2\dot\mu\dot f + \ddot\mu f) } {2f^2}\right\}\frac{h}{N} + O(h^4) +o\left(\frac{h}{N}\right).
	\end{align*}
	Similarly, from the proof of \eqref{eq:varfm2}, \eqref{eq:varfmgm} and \eqref{eq:covfm2fmgm} in Lemma \ref{lemma:fmgm}, we can obtain the expression of
	$\var\left\{(\hat f- f)^2 \right\}$, $\var\Big\{ ( \hat f - f)(\hat g - g)  \Big\}$, and $\cov\left\{(\hat f - f)^2, ( \hat f - f)(\hat g - g)\right\}$.
	Thus, we have 
	\begin{align*}
		\var( Q_2)=&\frac{\mu^2}{f^4}\var\left\{(\hat f- f)^2 \right\} +   \frac{1}{f^4} \var\Big\{ ( f_m - f)(\hat g - g)  \Big\} \\
		&- \frac{2\mu}{f^4} \cov\left\{(\hat f - f)^2, ( \hat f - f)(\hat g - g)\right\}
		=O\left(\frac{1}{N^2h^2}\right)
	\end{align*}
	
	Finally, we consider the remainder term $\mQ$. 
	Recall that $\tilde f = f + o_p(1)$, $\tilde g = g + o_p(1)$, $\hat f - f = O_p(1/\sqrt{Nh})$, and $\hat g - g = O_p(1/\sqrt{Nh})$. Then we should have $\mQ  = -(\tilde g/\tilde f^4)(\hat f - f)^3 + (1/\tilde f^3)(\hat f - f)^2(\hat g - g)= O_p\{ 1  / (Nh)^{3/2}\}$. 
	This completes the proof of conclusion (a).

	\noindent\textbf{Proof of }(b). 
	Analogously, we define the local statistics on the $m$-th machine as 
	\begin{align*}
		\hat f_m = \hat f_m(x) = \frac{1}{n} \sum_{i\in \mS_m} K_h(X_i - x), \quad \hat g_m = \hat g_m(x) =  \frac{1}{n} \sum_{i\in \mS_m} K_h(X_i - x)Y_i.
	\end{align*}
	Then the one-shot estimator can be written as $\hat \mu_\textup{OS} = M^{-1} \sum_{m-1}^M \hat \mu_m$, where $\hat\mu_m = \hat g_m / \hat f_m$ is the local NW estimator.
	It follows from Taylor expansion that $\hat \mu_m - \mu=$
	\begin{align*}
		\frac{\hat g_m}{\hat f_m} - \frac{g}{f} 
		=& \left\{- \frac{\mu}{f}(\hat f_m - f) +  \frac{1}{f} (\hat g_m - g)\right\} + \left\{\frac{\mu}{f^2} (\hat f_m - f)^2 - \frac{1}{f^2} (\hat f_m - f)(\hat g_m - g)\right\} \\
		&+\left\{- \frac{\tilde g_m }{\tilde f_m^4} (\hat f_m - f)^3 + \frac{1}{\tilde f_m ^3 }(\hat f_m - f)^2(\hat g_m - g) \right\}\\
		=& \Big\{Q_{m1}\Big\} + \Big\{Q_{m2}\Big\} + \Big\{\mQ_{m}\Big\},
	\end{align*}
	where $\tilde f_m $ is between $\hat f_m$ and $f$, and $\tilde g_m$ is between $\hat g_m$ and $g$.
	Consequently, $\hat \mu_\textup{OS} - \mu = \bar Q_1 + \bar Q_2 + \bar \mQ$, where $\bar Q_{1} = M^{-1} \sum_{m=1}^M Q_{m1},\ \bar Q_{2} = M^{-1} \sum_{m=1}^M Q_{m2}  $, and $ \bar \mQ = M^{-1} \sum_{m=1}^M \mQ_m  $. Note that $\hat f = M^{-1}\sum_{m-1}^M \hat f_m$ and $\hat g= M^{-1}\sum_{m-1}^M \hat g_m$. Thus, we should have $\bar Q_1 = Q_1$. Hence, we only need to investigate $\bar Q_2$ and $\bar\mQ$.
	
	We first compute the expectation and variance of $\bar Q_2$. 
	Recall that $Q_m\ (1\le m \le M)$ are independent and identically distributed. 
	Hence $E (\bar Q_2) = E(Q_{m2})$ and $\var(\bar Q_2) = M^{-1} \var (Q_{m2})$.
	Then by equations \eqref{eq:Efm}, \eqref{eq:varfm} and \eqref{eq:Efmgm} in Lemma \ref{lemma:fmgm}, we should have $E(\bar Q_2) = E( Q_{m2}) =$
	\begin{align*}
		\frac{\mu}{f^2} E(\hat f_m - f)^2 -\frac{1}{f^2} E(\hat f_m -f)(\hat g_m - g) 
		= -\left\{ \frac{\nu_2(2\dot\mu\dot f + \ddot\mu f) } {2f^2}\right\}\frac{h}{n} + O(h^4) +o\left(\frac{h}{n}\right).
	\end{align*}
	Similarly, by equations \eqref{eq:varfm2}. \eqref{eq:varfmgm} and \eqref{eq:covfm2fmgm} in Lemma \ref{lemma:fmgm}, we have $\var(\bar Q_2) =$
	\begin{align*}
		M^{-1} \var (Q_{m2})
		=&M^{-1}\Bigg[\frac{\mu^2}{f^4}\var\left\{(\hat f_m - f)^2 \right\} +   \frac{1}{f^4} \var\left\{ ( f_m - f)(\hat g_m - g)  \right\}  \\
		&~~~~~~~~ - \frac{2\mu}{f^4} \cov\left\{(\hat f_m - f)^2, ( \hat f_m - f)(\hat g_m - g)\right\}\Bigg]\\
		=&M^{-1}O\left(\frac{1}{n^2h^2}\right)=O\left(\frac{1}{Nnh^2}\right).
	\end{align*}
	
	Finally, we need to prove $\bar \mQ = O_p\left\{ 1 / (nh)^{3/2}\right\}$. In fact, it is sufficient to verify that for an arbitrarily small $\epsilon>0$, there exists a sufficiently large $t>0$ such that $\liminf_N P\big\{ (nh)^{3/2}|\bar \mQ| > t \big\} \le \epsilon $.
	To this end, we define the event $$\Omega = \Big\{ \max_{1\le m\le M} \Big(|\tilde g_m / \tilde f_m^4|, |1 / \tilde f_m^3| \Big)> C_0\Big\},$$ where $C_0>0$ is some sufficiently large constant. Then by the probability inequality \eqref{ineq:bound} in Lemma \ref{lemma:concen}, we know that $P(\Omega) \le C_1 \exp\left( \log M - C_2 nh \right)$, where $C_1,\ C_2>0$ are some constants.
	Then we have 
	\begin{align}
		&P\Big\{ (nh)^{3/2}|\bar \mQ| > t \Big\} 
		\le   P\Big\{ (nh)^{3/2}|\bar \mQ| > t, \Omega^c \Big\} + P(\Omega)\nonumber\\
		\le & P\Big\{ C_0(nh)^{3/2} M^{-1}\sum_{m} \left(|\hat f_m - f|^3 + |(\hat f_m - f)^2 (\hat g_m - g)|\right)>t  \Big\} + P(\Omega) \nonumber\\
		\le& { C_0 (nh)^{3/2}E\left(|\hat f_m - f|^3 + |(\hat f_m - f)^2 (\hat g_m - g)| \right)}\Big/ t + P(\Omega) \label{ineq:barmQ},
	\end{align}
	where the last line follows from Markov's inequality.
	By similar calculations in the proof of Lemma \ref{lemma:fmgm} and H\"older's inequality, we can obtain that 
	\begin{align*}
		&E|\hat f_m - f|^3 \le \sqrt{E(\hat f_m - f)^6 } = O\left\{\frac{1}{(nh)^{3/2}} + h^6 \right\},\\
		&E|(\hat f_m - f)^2 (\hat g_m - g)|\le \sqrt{E(\hat f_m - f)^4 } \sqrt{E(\hat g_m - g)^2} = O\left\{\frac{1}{(nh)^{3/2}} + h^6 \right\}.
	\end{align*}
	Note that $(nh)^{3/2} h^6 = (nh^5)^{3/2} \le  (Nh^5)^{3/2} = O(1)$. Together with \eqref{ineq:barmQ}, we can obtain that $\liminf_N P\big\{ (nh)^{3/2}|\bar \mQ| > t \big\} \le \epsilon$ by choosing a sufficiently large $t$. 
 This completes the proof of conclusion (b).
\end{proof}

	\subsection{Proof of Theorem \ref{thm:GPA_asymptotic}}
	\label{append:thm:GPA_asymptotic}
	
	Recall that $\Delta = 1/J$ is the distance between any two adjacent grid points. Since $f(x)>0$, $f(\cdot)$ continuous, and $\max_{k=j,j+1} |x_k^*-x|\le \Delta\to 0$, we should have $f(x_k^*)>0$ for $k=j,j+1$.
	Consequently, the conclusions in Theorem \ref{thm:global&OS} (a) hold for $\hat \mu(x)$ and $\hat\mu(x_k^*),\ k=j,j+1$.
	Then by interpolation formula \eqref{eq:interpolation}, we should have 
	$\hat\mu_\textup{GPA}(x) - \mu(x) = \tilde{Q}_0  + \tilde{Q}_1 + \tilde Q_2 + \tilde \mQ $, where 
	$\tilde Q_0 = \tilde{Q}_0(x) = \{ (x_{j+1}^* - x) / \Delta \} \mu(x_j^* ) + \{ (x - x_j^* ) / \Delta \} \mu(x_{j+1}^*) - \mu(x)$,
	$\tilde Q_1 = \tilde{Q}_1(x) = \{ (x_{j+1}^* - x) / \Delta \} Q_1(x_j^* ) + \{ (x - x_j^* ) / \Delta \} Q_1(x_{j+1}^*)$, 
	$\tilde Q_2 = \tilde{Q}_2(x) =\{ (x_{j+1}^* - x) /  \Delta \} Q_2(x_j^* ) + \{ (x - x_j^*  ) / \Delta \} Q_2(x_{j+1}^*)$, and 
	$\tilde \mQ   = \tilde \mQ(x)   = \{ (x_{j+1}^* - x) /  \Delta \}  \mQ(x_j^* ) + \{ (x - x_j^*  ) / \Delta \} \mQ(x_{j+1}^*) $. 
	Then by conclusions in Theorem \ref{thm:global&OS}, we can obtain the results about $\tilde Q_2$ and $\tilde \mQ$ in Theorem \ref{thm:GPA_asymptotic} immediately.
	In the following, we focus on the analysis of $\tilde Q_0$ and $\tilde Q_1$.
	
	By condition \ref{cond:smoothness} and error formula of linear interpolation \citep[Theorem 6.2]{suli2003introduction}, we have
	\begin{align*}
		\tilde Q_0 =  \left\{\frac{x_{j+1}^*-x}{\Delta}\mu(x_j^* ) + \frac{x-x_j^* } {\Delta}\mu(x_{j+1}^*)\right\} - \mu(x)= -\frac{(x-x_j^* )(x-x_{j+1}^*)}{2}\mu''( \tilde x),
	\end{align*}
	where $\tilde x \in [x_j^* , x_{j+1}^*]$. Since $|(x-x_j^* )(x-x_{j+1}^*)|\le \Delta^2 / 4$, we conclude that $\tilde Q_0 = O(\Delta^2) = O(1/J^2)$.
	
	We next establish the asymptotic normality of $\tilde Q_1$. 
	% From \eqref{eq:Q1_normal}, we know that $ \sqrt{Nh}\{Q_1(x)-Bh^2\} \to_d \mN(0, V(x))$. Then by Slutsky's theorem, it suffices to show that $ \sqrt{Nh} \{\tilde Q_1(x) - Q_1(x)\} = o_p(1)$.
	Recall that
	\begin{align}\label{eq:Q1xl}
		% 	Q_1(x_k ^*) = - \frac{\mu(x_k ^*)}{f(x_k ^*)}\Big\{\hat f(x_k ^*) - f(x_k ^*)\Big\} +  \frac{1}{f(x_k ^*)} \Big\{\hat g(x_k ^*) - g(x_k ^*)\Big\},
		Q_1(x) = - \frac{\mu(x)}{f(x)}\Big\{\hat f(x) - f(x)\Big\} +  \frac{1}{f(x)} \Big\{\hat g(x) - g(x)\Big\}.
	\end{align}
	% for $k = j, j+1$.
	Similar to the proof of \eqref{eq:Efm} and \eqref{eq:Egm} in Lemma \ref{lemma:fmgm}, we can calculate that $EQ_1(x) = B(x)h^2 + o(h^2)$ and $EQ_1(x_k^*) = B(x_k^*)h^2 + o(h^2)$ for $k =j,j+1$, where $B(\cdot)$ is defined in \eqref{eq:B&V}.
	% Write $q_1 = (x_{j+1}^*- x)/\Delta$ and $q_2 = (x - x_j^*)/\Delta$. Obviously, $q_1 + q_2 = 1$ for $x\in[x_j^*, x_{j+1}^*]$.
	Under the assumed conditions, we know that $B(\cdot)$ is a continuous function. 
	Thus, we should have $EQ_1(x_k^*) = B(x_k^*)h^2 + o(h^2) = B(x)\{1+o(1)\} + o(h^2) = EQ_1(x)  + o(h^2)$, since $\max_{k = j,j+1}|x_k ^*-x| \le \Delta \to 0$.
	By \eqref{eq:varQ1xl} in Lemma \ref{lemma:Q1jQ1j+1}, we have $\var\{Q_1(x)\} = \left\{\nu_0\sigma^2 / f(x)\right\}/Nh + o\{1/(Nh)\}$ and  $\var\{Q_1(x_k^*)\} = \var\{Q_1(x)\} \{1+o(1)\}$ for $k=j,j+1$.
	This implies that $\{EQ_1(x_k^*)  - E Q_1(x) \} / \sqrt{\var(Q_1)} =o(h^2) O(\sqrt{Nh}) = o(\sqrt{Nh^5} )= o(1)$ for $k=j,j+1$ under condition \ref{cond:bandwidth}. 
	Furthermore, by \eqref{eq:covQ1jQ1j+1} in Lemma \ref{lemma:Q1jQ1j+1}, we have $\cov\{Q_1(x_k^*), Q_1(x) \} = \var\{Q_1(x)\} \{1+o(1)\}$ for $k=j,j+1$. 
	Recall that $ \tilde{Q}_1(x) = \{ (x_{j+1}^* - x) / \Delta \} Q_1(x_j^* ) + \{ (x - x_j^* ) / \Delta \} Q_1(x_{j+1}^*)$ and $\{ (x_{j+1}^* - x) / \Delta \} + \{ x - x_j^*) / \Delta \}=1$.
	Then by Lemma \ref{lemma:difference}, we conclude that $\tilde Q_1(x) - Q_1(x) = o_p(\sqrt{\var\{Q_1(x)\}}) = o_p(1/\sqrt{Nh})$. This implies that $ \sqrt{Nh} \{\tilde Q_1(x) - Q_1(x)\} = o_p(1)$. 
	By \eqref{eq:Q1_normal}, we know that $ \sqrt{Nh}\{Q_1(x)-B(x)h^2\} \to_d \mN(0, V(x))$ as $N\to\infty$. Then by Slutsky's theorem, we conclude that $ \sqrt{Nh}\{\tilde Q_1(x)-B(x)h^2\} \to_d \mN(0, V(x))$ as well.
	
	Note that $\sqrt{Nh} \tilde Q_0(x) = \sqrt{Nh} O(1/J^2)= o(\sqrt{Nh^5}) = o(1)$, since $\Delta = 1 / J=o(h)$ under the assumed conditions. Together with $\sqrt{Nh} \tilde Q_2(x) = o_p(1)$ and $\sqrt{Nh} \tilde\mQ(x) = o_p(1)$, it follow from Slutsky's theorem that
	\begin{align*}
		\sqrt{Nh}\Big\{\hat\mu_\textup{GPA}(x) - \mu(x) -B(x)h^2 \Big\} \to_d \mN\Big(0, V(x)\Big).
	\end{align*}
	Therefore, we accomplish the proof of the theorem.

	\subsection{Proof of Theorem \ref{thm:bandwidth}} 
	\label{append:thm:bandwidth}
	
	We first state some additional conditions.
	\begin{enumerate}[(C\arabic*)]
		\setcounter{enumi}{3}
		\item (\textsc{Candidate Bandwidth}) Let $H_n = \big[C_H ^{-1}n^{-1/5}, C_H n^{-1/5}\big]$ be the candidate bandwidth set of the local CV, and $H_{n_0} = \big[ C_H ^{-1}n_0^{-1/5}, C_H n_0^{-1/5}\big]$ be that of the pilot sample based CV, where $C_H>C_\textup{opt}$ is a positive constant. \label{cond:cand_bandwidth}
		
		\item (\textsc{Weight Function}) The weight function $w(\cdot)$ is bounded and supported on a compact set with a nonempty interior. In addition, the density function $f(\cdot)>\tau$ for some constant $\tau>0$ on the support of $w(\cdot)$. \label{cond:weight_fun}
	\end{enumerate}
	\noindent
	Condition \ref{cond:cand_bandwidth} assures that the candidate bandwidth set includes the optimal bandwidth and has the same order as it. This can be easily satisfied in practice, since we usually select the bandwidth of the optimal order. A similar condition is also assumed in \cite{hardle1985optimal} and \cite{racine2004nonparametric}.
	Condition \ref{cond:weight_fun} is used to establish uniform convergence of $\hat f_m$ and $\hat \mu_m$. The same condition is assumed in \cite{hardle1985optimal} and \cite{racine2004nonparametric}.
	
	The proofs of the two conclusions are very similar, and the first one is much more involved.
	Hence, we only provide the proof of the first conclusion, i.e., $\hat h_\textup{OS}/ h_\textup{opt} \to_p 1$ as $n\to \infty$.
	
	We start by proving that $\hat h_m / h_{\textup{opt},m} \to_p 1 $ as $n\to \infty$, where $\hat h_m = \argmin_{h\in H_n} \CV_{m}(h)$ is the optimal bandwidth selected by local CV, and $h_{\textup{opt},m} = C_\textup{opt}n^{-1/5}$ is the asymptotically local optimal bandwidth.
	Firstly, we can compute that $\CV_{m}(h)=$
	\begin{align}
		&\frac{1}{n}\sum_{i \in \mS_m} \Big\{Y_i - \hat\mu_m^{(-i)}(X_i) \Big\}^2 w(X_i)
		= \frac{1}{n}\sum_{i \in \mS_m} \Big\{\mu(X_i) + \varepsilon_i - \hat\mu_m^{(-i)}(X_i) \Big\}^2 w(X_i) \nonumber\\
		=&\frac{1}{n}\sum_{i \in \mS_m} \Big\{\mu(X_i) - \hat\mu_m^{(-i)}(X_i) \Big\}^2 w(X_i)+ \frac{2}{n}\sum_{i \in \mS_m} \Big\{\mu(X_i) - \hat\mu_m^{(-i)}(X_i) \Big\}\varepsilon_i w(X_i) \nonumber \\
		&+\frac{1}{n}\sum_{i \in \mS_m} \varepsilon_i^2 w(X_i). \label{eq:CVlocal}
	\end{align}
	Here, $\hat\mu_m^{(-i)}(X_i) = \hat g_m^{(-i)}(X_i) \Big/ \hat\mu_m^{(-i)}(X_i)$ is the leave-one-out estimator, where
	\begin{align}
		\hat g_m^{(-i)}(X_i) =  \frac{1}{n}\sum_{j\in \mS_m,j\ne i} K_h(X_j - X_i) Y_j, \quad
		\hat f_m^{(-i)}(X_i) = \frac{1}{n}\sum_{j\in\mS_m, j\ne i}K_h(X_j-X_i).\label{eq:g&floo}
	\end{align}
	Since the last term in \eqref{eq:CVlocal} is independent of $h$, we ignore it in the following derivation.
	Write $\mu(X_i) - \hat\mu_m^{(-i)}(X_i) =$
	\begin{align}
		\frac{\Big\{\mu(X_i) - \hat\mu_m^{(-i)}(X_i)\Big\} \hat f_m^{(-i)}(X_i)} {f(X_i)} + \frac{\Big\{\mu(X_i) - \hat\mu_m^{(-i)}(X_i)  \Big\}\Big\{f(X_i) - \hat f_m^{(-i)}(X_i) \Big\}}{f(X_i)}. \label{eq:mu-mu^(-i)}
	\end{align}
	By similar arguments in the proof of Lemma 1 in \cite{hardle1985optimal}, we can derive the uniform convergence of $\hat f_m$ to $f$ and $\hat \mu_m$ to $\mu$. Thus, the second term should be of smaller order than the first term in \eqref{eq:mu-mu^(-i)}.
	Replacing $\mu(X_i) - \hat\mu_m^{(-i)}(X_i)$ by the first term of \eqref{eq:mu-mu^(-i)} and substituting it into \eqref{eq:CVlocal}, we can obtain the leading term of $\CV_{m}(h)$, which is denoted by 
	\begin{align}
		\begin{aligned}
			\CV_{m,1}(h) =&\frac{1}{n}\sum_{i \in \mS_m} \frac{\Big\{\mu(X_i) - \hat\mu_m^{(-i)}(X_i) \Big\}^2 \big\{\hat f_m^{(-i)}(X_i)\big\}^2   w(X_i)}{f^2(X_i)}\\
			&+ \frac{2}{n}\sum_{i \in \mS_m} \frac{\Big\{\mu(X_i) - \hat\mu_m^{(-i)}(X_i) \Big\} \hat f_m^{(-i)}(X_i) \varepsilon_i w(X_i)}{f(X_i)}.\label{eq:CVleading}
		\end{aligned} 
	\end{align}
	For the simplicity of the notations, we denote $K_{h,ij} = K_h(X_j - X_i)$, $w_i = w(X_i)$, $\mu_i = \mu(X_i)$, $f_i = f(X_i)$, $\hat \mu_m^{(-i)} = \hat\mu_m^{(-i)}(X_i)$, and $\hat f_m^{(-i)} = \hat f_m^{(-i)}(X_i)$.
	Substituting \eqref{eq:g&floo} into \eqref{eq:CVleading}, we can obtain that $\CV_{m,1}(h)=$
	\begin{align*}
		&n^{-3}\sum_{i\in \mS_m} \sum_{j \ne i}\sum_{k \ne i} w_i (\mu_i - Y_j)(\mu_i - Y_k )K_{h, ij} K_{h, ik } /f_i^2 + 2n^{-1}\sum_{i\in \mS_m} w_i \varepsilon_i (\mu_i - \hat \mu_m^{(-i)} )\hat f_m^{(-i)} / f_i\\
		=&\Bigg\{n^{-3}\sum_{i\in \mS_m} \sum_{j \ne i}\sum_{k \ne i} w_i (\mu_i - \mu_j)(\mu_i - \mu_k )K_{h, ij} K_{h, ik } /f_i^2 \Bigg\} \\
		&+\Bigg\{n^{-3}\sum_{i\in \mS_m} \sum_{j \ne i}\sum_{k \ne i} w_i \varepsilon_j  \varepsilon_k  K_{h, ij} K_{h, ik } /f_i^2  - 2n^{-2}\sum_{i\in \mS_m}\sum_{j\ne i}w_i \varepsilon_i \varepsilon_j K_{h, ij} /f_i \Bigg\}\\
		&+ 2 \Bigg\{n^{-2}\sum_{i\in \mS_m} \sum_{j \ne i} w_i \varepsilon_i (\mu_i - \mu_j)K_{h, ij}  /f_i - n^{-3}\sum_{i\in \mS_m}\sum_{j \ne i}\sum_{k \ne i} w_i\varepsilon_k   (\mu_i - \mu_j) K_{h, ij} K_{h, ik } /f_i^2  \Bigg\}\\
		=&\Big\{T_1 \Big\} + \Big\{T_2 \Big\} + 2\Big\{T_3 \Big\}.
	\end{align*}
	Then, by similar arguments in the proof of Lemma 1-4 in \cite{racine2004nonparametric}, we can show that $T_1 = \bar B h^4 + o_p\left(h^4 \right), T_2 = \bar V / (nh) + o_p\{1 / (nh) \},  T_3 = O_p({h^2}/{\sqrt{n}})$ and $\CV_{m}(h) - \CV_{m,1}(h) = o_p\{ h^4 + 1/ (nh) \}$. 
	Consequently, we conclude that $\CV_{m}(h) =  \left\{\bar B h^4 + \bar V / (nh)\right\} \left\{1+o_p(1)\right\}$.
	Thus we should have $\hat h_m = h_{\textup{opt},m}\{1+o_p(1)\}$, where $h_{\textup{opt},m}  = C_\textup{opt}n^{-1/5}$ and $C_\textup{opt} = \left\{\bar V / (4 \bar B) \right\}^{1/5}$.
	
	We next show that $\bar h = M^{-1} \sum_{m=1}^M \hat h_m$ is also a ratio consistent estimator for $h_{\textup{opt},m}$, i.e., $\bar h / h_{\textup{opt},m}\to_p 1$.
	By condition \ref{cond:cand_bandwidth} we know that $\hat h_m \in H_n$ and ${h_{\textup{opt},m}}\in H_n$, and thus $|(\hat h_m -h_{\textup{opt},m})/h_{\textup{opt},m}|\le (C_H + C_\textup{opt}) / C_\textup{opt}$.
	Then by dominated convergence theorem and the fact that $(\hat h_m -h_{\textup{opt},m})/h_{\textup{opt},m} \to_p 0$, we have
	\begin{align*}
		E\left|\frac{\bar h - h_{\textup{opt},m}}{h_{\textup{opt},m}} \right|\le \frac{1}{M}\sum_{m=1}^M E\left|\frac{ \hat h_m - h_{\textup{opt},m}}{h_{\textup{opt},m}} \right|= E\left|\frac{ \hat h_m - h_{\textup{opt},m}}{h_{\textup{opt},m}} \right| \to 0.
	\end{align*}
	This implies that $(\bar h -h_{\textup{opt},m})/h_{\textup{opt},m} \to_p 0$, i.e., $\bar h / h_{\textup{opt},m}  \to_p 1$ as $n\to \infty$.
	Consequently, $\hat h_\textup{OS} / h_\textup{opt} = (M^{-1/5} \bar h ) / (M^{-1/5} h_{\textup{opt},m}) \to_p 1$ as $n\to \infty$.
	This completes the proof.

	\subsection{Proof of Theorem \ref{thm:GPA_nu}}
	\label{append:thm:GPA_nu}
	
	In the following two steps, we first investigate the global NW estimator $\hat \mu(x)$ with the $(\nu+1)$-th order kernel. Subsequently, we derive the results for the $\nu$-th order polynomial interpolation based GPA estimator $\hat \mu_{\GPA, \nu}(x)$.
	
	\noindent\textbf{Step 1.}
	% We first derive the results for the global NW estimator under the higher order kernel as in Appendix A.1. 
	Recall that 
	\begin{align*}
		&\hat f= \hat f(x) = \frac{1}{Nh} \sum_{i=1}^N K\left(\frac{X_i-x}{h} \right),\ \hat g= \hat g(x) = \frac{1}{Nh} \sum_{i=1}^N K\left(\frac{X_i-x}{h} \right)Y_i.
	\end{align*}
	We can verify that $\hat f$ and $\hat g$ are consistent estimators of $f$ and $g = \mu f$, respectively, under the assumed conditions. Then by Taylor expansion we have $\hat \mu - \mu=$
	\begin{align*}
		\frac{\hat g}{\hat f} - \frac{g}{f} 
		=& \left\{- \frac{\mu}{f}(\hat f - f) +  \frac{1}{f} (\hat g - g)\right\} + \left\{\frac{\tilde g}{\tilde f^3} (\hat f - f)^2 - \frac{1}{\tilde f^2} (\hat f - f)(\hat g - g)\right\} \\
		=& \Big\{Q_{\nu,1}\Big\} + \Big\{\mQ_{\nu} \Big\},
	\end{align*}
	where $\tilde f $ is between $\hat f$ and $f$, and $\tilde g$ is between $\hat g$ and $g$. We next investigate $Q_{\nu,1}$ and $\mQ_{\nu}$ in the following two steps.
	
	\noindent\textbf{Step 1.1.}
	As the proof in Appendix \ref{append:thm:global&OS}, we can derive that
	\begin{align} \label{eq:f&g_nu}
		\sqrt{Nh}\left(\hat f - f - B_f h^{\nu+1} ,\  \hat g - g - B_g h^{\nu+1}\right) \to_d \mN(0, \Sigma),
	\end{align}
	by results in Lemma \ref{lemma:fg_nu}.
	Here, $\Sigma = [\sigma_{ff}, \sigma_{fg}; \sigma_{fg}, \sigma_{gg}] \in \mR^{2\times 2}$ with $\sigma_{ff} = \nu_0 f$, $\sigma_{gg} = (\mu^2 + \sigma^2) \nu_0 f$ and $\sigma_{fg} =  \nu_0 \mu f$, where $\nu_0 = \int K^2(u) du$. In addition, $B_{\nu,f} = \kappa_{\nu+1}  f^{(\nu+1)} / (\nu+1)!$ and $B_{\nu,g} = \kappa_{\nu+1} g^{(\nu+1)} / (\nu+1)!$ with $g^{(\nu+1)}=\sum_{s=0}^{\nu+1} \binom{\nu+1}{s} \mu^{(s)} f^{(\nu+1-s)}$.
	Then, by Delta method and \eqref{eq:f&g_nu}, we have
	\begin{align}\label{eq:Q_nu_1_normal}
		\sqrt{Nh}\Big\{Q_{\nu,1}(x)-B_{\nu}(x)h^{\nu+1}\Big\} \to_d \mN\Big(0, V(x)\Big),
	\end{align}
	where $V = (\mu^2 / f^2) \sigma_{ff} + (1/f^2)\sigma_{fg} - 2(\mu /f^2) \sigma_{fg} =\nu_0 \sigma^2  / f $ is the same as that in \eqref{eq:B&V}, and
	\begin{align}\label{eq:B_nu}
		B_{\nu} = -\frac{\mu}{f} B_{\nu,f} + \frac{1}{f} B_{\nu,g}  = \frac{\kappa_{\nu+1}}{(\nu+1)!} \frac{\sum_{s=1}^{\nu+1}\binom{\nu+1}{s} \mu^{(s)} f^{(\nu+1-s)}}{f},
	\end{align}
	where $\kappa_{\nu+1} = \int u^{\nu+1} K(u) du$.

	\noindent\textbf{Step 1.2.}
	By \eqref{eq:Ef_nu}--\eqref{eq:varg_nu} in Lemma \ref{lemma:fg_nu}, we know that $\tilde f = f + o_p(1)$, $\tilde g = g + o_p(1)$, $\hat f - f = O_p(1/\sqrt{Nh})$, and $\hat g - g = O_p(1/\sqrt{Nh})$ . Then we should have $\mQ_{\nu}  = {\tilde g} / {\tilde f^3} (\hat f - f)^2 - ({1}/{\tilde f^2}) (\hat f - f)(\hat g - g) = O\{1 / (Nh)\} = o_p(1/\sqrt{Nh})$. 
	
	From the above results, we can conclude that, for any fixed $x\in(0,1)$, we have
	\begin{align*}
		\sqrt{Nh}\Big\{\hat \mu(x) -\mu(x)- B_{\nu}(x)h^{\nu+1}\Big\}\to_d \mN\Big(0, V(x)\Big),
	\end{align*}
	where $B_{\nu}(x)$ is defined in \eqref{eq:B_nu}, and $V(x)$ is defined in \eqref{eq:B&V}.
	This ends the proof of Step 1.

	\noindent\textbf{Step 2.}
	From the proof of Step 1, we can obtain that
	$\hat \mu_{\textup{PGPA},\nu}(x) - \mu(x) = \tilde Q_{\nu,0} + \tilde Q_{\nu,1}  + \tilde \mQ_\nu$, where $\tilde Q_{\nu,0} = \tilde Q_{\nu,0}(x) = \sum_{k = j}^{j+\nu} q_k  (x) \mu(x_k ^*) - \mu(x)$, $\tilde Q_{\nu,1} = \tilde Q_{\nu,1}(x) = \sum_{k = j}^{j+\nu} q_k  (x) Q_{\nu,1}(x_k ^*)$, and $\tilde \mQ_{\nu}= \tilde \mQ_{\nu}(x) = \sum_{k = j}^{j+\nu} q_k  (x) \mQ_{\nu}(x_k ^*)$. 
	% First, we consider the polynomial function $\mL (x) = \sum_{k = j-1}^{j+1} q_k  (x) -1$. Recall that $q_{k}(x)=\prod_{i=j, i\ne k}^{j+\nu}\frac{x-x_i^*}{x_k^*-x_i^*} $ for $j\le k\le j+\nu$.
	% Thus, $\mL (x)$ is a polynomial whose degree is at most $\nu$. 
	% It is easy to verify that $\mL (x_k ^*) = 0$ for $j \le k \le j+\nu$, that is $\mL (x) $ has $\nu+1$ zeros.
	% Then we must have $\mL(x) \equiv 0$, or equivalently, $\sum_{k = j}^{j+\nu} q_k  (x) \equiv 1$. 
	Note that $q_{k}(x)=\prod_{i=j, i\ne k}^{j+\nu}(x-x_i^*)/(x_k^*-x_i^*) $ for $j\le k\le j+\nu$ are bounded for any $x\in[0,1]$.
	In Step 1.2, we have shown that $\mQ_{\nu} = o_p(1/\sqrt{Nh})$.
	Then we can conclude that $\tilde Q_{\nu,2}=\sum_{k = j}^{j+\nu} q_k  (x) \mQ_{\nu}(x_k ^*)=o_p(1/\sqrt{Nh})$ as well.
	We next derive the results about $\tilde Q_{\nu,0}$ and $\tilde Q_{\nu,1}$ in the following two steps.
	
	\noindent\textbf{Step 2.1.}
	Since $\mu(\cdot)$ is assumed to be $(\nu+1)$-times continuously differentiable, then by error formula of polynomial interpolation \citep[Theorem 6.2]{suli2003introduction}, we have
	\begin{align*}
		\tilde Q_{\nu,0} = \sum_{k = j}^{j+\nu} q_k  (x) \mu(x_k ^*) - \mu(x)=-\frac{1}{(\nu+1)!} \prod_{k=j}^{j+\nu} (x-x_k^*) \mu^{(\nu+1)}(\tilde x),
	\end{align*}
	where $\tilde x \in [x_{j}^*, x_{j+\nu}^*]$. Since $|\prod_{k=j}^{j+\nu} (x-x_k^*)|\le (\nu/J)^{\nu+1}$, we conclude that $\tilde Q_{\nu,0} = O(1/J^{\nu+1})$.
	
	\noindent\textbf{Step 2.2.}
	We next establish the asymptotic normality of $\tilde Q_{\nu,1}$. 
	% By \eqref{eq:Q_nu_1_normal} and Slutsky's theorem, it suffices to show that $\mD =\sqrt{Nh} \{\tilde Q_{\nu,1}(x) - Q_{\nu,1}(x)\} = o_p(1)$. For this to hold, we will show that $E\mD^2 = o(1)$. 
	First, we consider the polynomial function $\mL (x) = \sum_{k = j-1}^{j+1} q_k  (x) -1$. 
	% Recall that $q_{k}(x)=\prod_{i=j, i\ne k}^{j+\nu}(x-x_i^*)/(x_k^*-x_i^*)$ for $j\le k\le j+\nu$.
	Note that $\mL (x)$ is a polynomial whose degree is at most $\nu$. 
	It is easy to verify that $\mL (x_k ^*) = 0$ for $j \le k \le j+\nu$, that is $\mL (x) $ has $\nu+1$ zeros.
	Then we must have $\mL(x) \equiv 0$, or equivalently, $\sum_{k = j}^{j+\nu} q_k  (x) \equiv 1$. 
	% We will show that $\mD =\sqrt{Nh} \{\check Q_1(x) - Q_1(x)\} = o_p(1)$. For this to hold, it is sufficient to show that $E\mD^2 = o(1)$. 
	By \eqref{eq:Ef_nu} and \eqref{eq:Eg_nu} in Lemma \ref{lemma:fg_nu}, we can show that $E Q_{\nu,1}(x) = B_{\nu}(x) h^{\nu+1} + o(h^{\nu+1})$, and $E Q_{\nu,1}(x_k^*) = B_{\nu}(x_k^*) h^{\nu+1} + o(h^{\nu+1})$ for each $j\le k\le j+\nu$. Furthermore, by \eqref{eq:varf_nu} and \eqref{eq:varg_nu} in Lemma \ref{lemma:fg_nu}, we can show that $\var\{Q_{\nu,1}(x)\} = V(x)/(Nh) +  o(1/(Nh))$, and $\var\{Q_{\nu,1}(x_k^*)\} = V(x_k^*)/(Nh) +  o\{1/(Nh)\}$ for each $j\le k\le j+\nu$. 
	Since $B_{\nu}(\cdot)$ and $V(\cdot)$ are continuous under the assumed conditions, and $\max_{j\le k\le j+\nu} | x_k^* - x | \le \nu / J \to 0$ as $J\to\infty$, we have $E Q_{\nu,1}(x_k^*) - E Q_{\nu,1}(x) = o(h^{\nu+1})$ and $\var\{Q_{\nu,1}(x_k^*)\} = \var\{Q_{\nu,1}(x)\}\{1+o(1)\}$. Recall that $h = CN^{-1/(2\nu+3)}$ for some constant $C>0$. Then we have $\{E Q_{\nu,1}(x_k^*) - E Q_{\nu,1}(x)\} / \sqrt{\var\{Q_{\nu,1}(x)\}} = o(\sqrt{Nh^{2\nu + 3}}) = o(1)$. 
	Similar to the proof of Lemma \ref{lemma:Q1jQ1j+1}, we can show that $\cov\{Q_{\nu,1}(x_k^*),Q_{\nu,1}(x) \} =  \var\{Q_{\nu,1}(x)\}\{1+o(1)\}$ for each $j\le k\le j+\nu$. Together with above results, and using  Lemma \ref{lemma:difference}, we have
	\begin{align*}
		\tilde Q_{\nu,1}(x) - Q_{\nu,1}(x) =  \sum_{k=j}^{j+\nu} q_k(x) Q_{\nu,1}(x_k^*) - Q_{\nu,1}(x)= o_p\Big(\sqrt{\var\{Q_{\nu,1}(x)\}}\Big) = o_p(1/\sqrt{Nh}).
	\end{align*}
	This implies $\sqrt{Nh} \big\{\tilde Q_{\nu,1}(x) - Q_{\nu,1}(x)\big\} \to_p 0$ as $N\to\infty$. 
	Consequently, by Slutsky's theorem and \eqref{eq:Q_nu_1_normal}, we conclude that 
	\begin{align*}
		\sqrt{Nh}\Big\{\tilde Q_{\nu,1}(x)-B_{\nu}(x)h^{\nu+1}\Big\} \to_d \mN\Big(0, V(x)\Big),
	\end{align*}
	as $N\to \infty$. This completes the proof of Step 2. 
	
	Furthermore, by Slutsky's theorem and the results in Step 2, we can obtain that 
	\begin{align*}
		\sqrt{Nh}\Big\{\hat\mu_{\textup{PGPA},\nu}(x) - \mu(x) - B_{\nu}(x)h^{\nu+1}\Big\}\to_d \mN\Big(0, V(x) \Big).
	\end{align*}
	This ends the proof of the theorem.

	\subsection{Proof of Theorem \ref{thm:GPA_multi}}
	\label{append:thm:GPA_multi}
	
	In the following two steps, we first investigate the global multivariate NW estimator $\hat \mu(\bx)$. Subsequently, we derive the results for the multivariate linear interpolation based GPA estimator $\hat \mu_{\GPA, p}(\bx)$.
	
	\noindent\textbf{Step 1.}
	% We first derive the results for the global NW estimator under the higher order kernel as in Appendix A.1. 
	Recall that 
	\begin{align*}
		\hat f= \hat f(\bx) = \frac{1}{Nh^p} \sum_{i=1}^N \bK\left(\frac{X_{i}-\bx}{h} \right),\ \hat g= \hat g(\bx) = \frac{1}{Nh^p} \sum_{i=1}^N  \bK\left(\frac{X_{i}-\bx}{h} \right)Y_i,
	\end{align*}
	where $\bK(\bt) = \prod_{s=1}^p K(t_s)$ is the multivariate kernel function.
	Similar to the proof of Theorem \ref{thm:GPA_nu}, we can show that $\hat \mu - \mu=$
	\begin{align*}
		\frac{\hat g}{\hat f} - \frac{g}{f} 
		=& \left\{- \frac{\mu}{f}(\hat f - f) +  \frac{1}{f} (\hat g - g)\right\} + \left\{\frac{\tilde g}{\tilde f^3} (\hat f - f)^2 - \frac{1}{\tilde f^2} (\hat f - f)(\hat g - g)\right\} \\
		=& \Big\{Q_{p,1}\Big\} + \Big\{\mQ_{p} \Big\},
	\end{align*}
	where $\tilde f $ is between $\hat f$ and $f$, and $\tilde g$ is between $\hat g$ and $g$.
	Furthermore, by results in Lemma \ref{lemma:fg_p}, we can derive that 
	\begin{align}\label{eq:Q_p_1_normal}
		\sqrt{Nh^{p+1}}\Big\{Q_{p,1}(\bx)-B_{p}(\bx)h^{2}\big\}\to_d \mN\Big(0, V_p(\bx)\Big),
		% \sqrt{Nh}\Big\{\hat \mu(x) -\mu(x)- B_{p}(x)h^{\nu+1}\Big\}\to_d \mN\Big(0, V_p(x)\Big),
	\end{align}
	where
	\begin{align}\label{eq:B&V_p}
		B_p(\bx) = \frac{\kappa_2}{2} \left\{\tr\{\ddot \mu(\bx)\} + 2 \frac{\dot \mu(\bx) ^\top \dot f(\bx)}{f(\bx)} \right\},\ V_p(\bx) = \frac{\nu_0^p \sigma^2}{f(\bx)}.
	\end{align}
	Recall that $\kappa_2 = \int u^2 K(u) du$ and $\nu_0 = \int K^2(u) du$.
	In addition, by \eqref{eq:Ef_p}--\eqref{eq:varg_p} in Lemma \ref{lemma:fg_p}, we can obtain that $\mQ_{p}  = {\tilde g} / {\tilde f^3} (\hat f - f)^2 - ({1}/{\tilde f^2}) (\hat f - f)(\hat g - g) = O\{1 / (Nh^p)\} = o_p(1/\sqrt{Nh^{p}})$.
	
	From the above results, we can conclude that, for any fixed $\bx\in(0,1)^p$, we have
	\begin{align*}
		\sqrt{Nh^p}\Big\{\hat \mu(\bx) -\mu(\bx)- B_p (\bx)h^{2}\Big\}\to_d \mN\Big(0, V_p(\bx)\Big),
	\end{align*}
	where $B_{p}(\bx)$ and $V_p(\bx)$ are defined in \eqref{eq:B&V_p}.
	This ends the proof of Step 1.

	\noindent\textbf{Step 2.}
	From the proof of Step 1, we can obtain that
	$\hat \mu_{\textup{MGPA},p}(\bx) - \mu(\bx) = \tilde Q_{p,0} + \tilde Q_{p,1}  + \tilde \mQ_p$, where $\tilde Q_{p,0} = \tilde Q_{p,0}(x) = \sum_{k = 1}^{p+1} q_k  (\bx) \mu(\bx_{j_k}^*) - \mu(\bx)$, $\tilde Q_{p,1} = \tilde Q_{p,1}(\bx) = \sum_{k = 1}^{p+1} q_k (\bx) Q_{p,1}(\bx_{j_k} ^*)$, and $\tilde \mQ_{p}= \tilde \mQ_{p}(\bx) = \sum_{k = 1}^{p+1} q_k (\bx) \mQ_{p}(\bx_{j_k} ^*)$. 
	Note that $q_k(\bx),\ 1\le k\le p+1$ are linear functions that should be bounded for any $\bx\in [0,1]^p$. 
	In Step 1, we have shown that $\mQ_p = o_p(1/\sqrt{Nh^{p}})$. Then we can conclude that $\tilde \mQ_{p}= \tilde \mQ_{p}(\bx) = \sum_{k = 1}^{p+1} q_k (\bx) \mQ_{p}(\bx_{j_k} ^*) = o_p(1/\sqrt{Nh^{p}})$ as well. We next investigate the two terms $\tilde Q_{p,0}$ and $\tilde Q_{p,1}$.
	
	\noindent\textbf{Step 2.1.}
	Recall that $\mC(\bx) = \{\sum_{k=1}^{p+1} w_k\bx_{j_k}^* : \sum_{k=1}^{p+1} w_k = 1 \textup{ and } w_k \ge 0 \textup{ for each } k\} $ is a $p$-simplex with grid points $\bx_{j_k}^*\ (1\le k\le p+1)$ as vertices.
	Since $\mu(\cdot)$ is a twice continuously differentiable function, by error bound derived in Theorem 3.1 of \cite{waldron1998error}, we have
	\begin{align*}
		|\tilde Q_{p,0}(\bx)| = \left|\sum_{k = 1}^{p+1} q_k  (\bx) \mu(\bx_{j_k}^*) - \mu(\bx)\right| \le  \frac{p}{2J^2} \sup_{\bx' \in \mC(\bx)} \| \ddot \mu(\bx')\|.
	\end{align*}
	We then conclude that $\tilde Q_{p,0}(\bx) = O(1/J^2)$.
	
	\noindent\textbf{Step 2.2.} We next establish the asymptotic normality of $\tilde Q_{p,1}$. Note that $\mL(\bx) = \sum_{k=1}^{p+1} q_k(\bx)$ is a linear function interpolates $(\bx_{j_k}^*, 1)$ for each $ 1\le k\le p+1$. Note that $\{(\bx_{j_k}^*, 1):\ 1\le k\le p+1\}$ also uniquely determines the linear function $\mL_0(\bx)\equiv 1$. Then we must have $\mL(\bx)=\sum_{k=1}^{p+1} q_k(\bx)=\mL_0(\bx) \equiv 1$.
	By \eqref{eq:Ef_p} and \eqref{eq:Eg_p} in Lemma \ref{lemma:fg_p}, we can show that $E Q_{\nu,1}(\bx) = B_{p}(\bx) h^2 + o(h^2)$, and $E Q_{p,1}(\bx_{j_k}^*) = B_p(\bx_{j_k}^*) h^2 + o(h^2)$ for each $1 \le k\le p+1$. Furthermore, by \eqref{eq:varf_p} and \eqref{eq:varg_p} in Lemma \ref{lemma:fg_p}, we can show that $\var\{Q_{p,1}(\bx)\} = V_p(\bx)/(Nh^p) +  o(1/(Nh^p))$, and $\var\{Q_{\nu,1}(\bx_{j_k}^*)\} = V_p(\bx_{j_k}^*)/(Nh^p) +  o\{1/(Nh^p)\}$ for each $1\le k\le p+1$. 
	Recall that $B_p(\cdot)$ and $V_p(\cdot)$ defined in \eqref{eq:B&V_p} are continuous functions under the assumed conditions. 
	Since $\max_{1\le k\le p+1} \|\bx_{j_k}^* - \bx\| \le \sqrt{p} / J \to 0  $, we have $EQ_{p,1}(\bx_{j_k}^*) - EQ_{p,1}(\bx) = o(h^2)$ and $\var \{Q_{p,1}(\bx_{j_k}^*) \}= \var \{Q_{p,1}(\bx) \} \{1+o(1)\}$. Recall that $h = C N^{-1/(p+4)}$ for some constant $C>0$. Then we have 
	$\{EQ_{p,1}(\bx_{j_k}^*) - EQ_{p,1}(\bx) \} / \sqrt{\var\{Q_{p,1}(\bx)} \} = o(\sqrt{Nh^{p+4}})= o(1)$. Similar to the proof of Lemma \ref{lemma:Q1jQ1j+1}, we can show that $\cov\{Q_{p,1}(\bx_{j_k}^*) , Q_{p,1}(\bx)  \} = \var \{Q_{p,1}(\bx) \}\{ 1+o(1)\}$. Together with above results and using Lemma \ref{lemma:difference}, we should have
	\begin{align*}
		\tilde Q_{p,1}(\bx ) - Q_{p,1}(\bx ) =  \sum_{k=1}^{p+1} q_k(\bx ) Q_{p,1}(\bx_{j_k}^*) - Q_{p,1}(\bx)= o_p\Big(\sqrt{\var\{Q_{p,1}(\bx)\}}\Big) = o_p(1/\sqrt{Nh^p}).
	\end{align*}
	This implies $\sqrt{Nh^p} \big\{\tilde Q_{p,1}(\bx) - Q_{p,1}(\bx)\big\} \to_p 0$ as $N\to\infty$. 
	Consequently, by Slutsky's theorem and \eqref{eq:Q_p_1_normal}, we conclude that 
	\begin{align*}
		\sqrt{Nh^p}\Big\{\tilde Q_{p,1}(\bx)-B_{p}(\bx)h^2\Big\} \to_d \mN\Big(0, V_p(\bx)\Big),
	\end{align*}
	as $N\to \infty$. This completes the proof of Step 2. 
	
	Furthermore, by Slutsky's theorem and the results in Step 2, we can obtain that 
	\begin{align*}
		\sqrt{Nh^p}\Big\{\hat\mu_{\textup{MGPA},p}(\bx) - \mu(\bx) - B_{p}(\bx)h^2\Big\}\to_d \mN\Big(0, V_p(\bx) \Big).
	\end{align*}
	This ends the proof of the theorem.

	% \newpage

	\section{Useful Lemmas and Their Proofs}
	\label{append:lemmas}
	
	\begin{lemma} \label{lemma:fmgm}
		Under the conditions assumed in Theorem \ref{thm:global&OS}, we have
		\begin{align}
			&E \hat f_m = f + \frac{\kappa_2}{2} \ddot f h^2 + o(h^2)  \label{eq:Efm}\\
			&\var(\hat f_m) = \frac{\nu_0  f}{nh} - \frac{ f^2}{n} + \left(\frac{\nu_2 \ddot f}{2}\right)\frac{h}{n} + o\left(\frac{h}{n}\right), \label{eq:varfm} \\
			&E \hat g_m = \mu f + \frac{\kappa_2}{2} (\ddot \mu f + 2\dot \mu \dot f+ \mu \ddot f ) h^2 + o(h^2), \label{eq:Egm}\\
			&\var(\hat g_m) = \frac{(\mu^2 + \sigma^2)\nu_0 f }{nh} - \frac{ \mu^2 f^2}{n} \nonumber\\
			&~~~~~~~~~~~~~ + \left\{\frac{\nu_2(2\mu\ddot \mu f + 4\mu\dot\mu\dot f+2\mu\dot\mu^2f + \mu^2\ddot f + \sigma^2 \ddot f) }{2}\right\}\frac{h}{n}+ o\left(\frac{h}{n}\right), \label{eq:vargm}\\
			&\cov(\hat f_m, \hat g_m) = \frac{\nu_0 \mu f }{nh} -  \frac{ \mu f^2}{n} + \left\{\frac{\nu_2(\ddot\mu f + 2\dot \mu \dot f+\mu \ddot f)}{2}\right\}\frac{h}{n} +o\left(\frac{h}{n} \right), \label{eq:covfmgm}\\
			&E(\hat f_m - f)(\hat g_m - g) = \cov(\hat f_m, \hat g_m)  + O(h^4), \label{eq:Efmgm}\\
			&\var\left\{(\hat f_m - f)^2 \right\}= O\left(\frac{1}{n^2 h^2}\right), \label{eq:varfm2}\\
			&\var\left\{ (\hat f_m - f)(\hat g_m - g)  \right\} = O\left(\frac{1}{n^2 h^2}\right),\label{eq:varfmgm}\\
			&\cov\left\{(\hat f_m - f)^2, ( \hat f_m - f)(\hat g_m - g)\right\}	= O\left(\frac{1}{n^2 h^2}\right)\label{eq:covfm2fmgm}.
		\end{align}
	\end{lemma}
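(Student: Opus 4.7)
The plan is to treat each of the nine displays in turn, grouping them into three batches: (i) the first-order moment expansions of $\hat f_m$ and $\hat g_m$, (ii) the covariance identity \eqref{eq:Efmgm}, and (iii) the second-order moment bounds \eqref{eq:varfm2}--\eqref{eq:covfm2fmgm}. Throughout I write $U_i = K_h(X_i-x) - EK_h(X_i-x)$ and analogous centered variables for the $\hat g_m$ side, so that $\hat f_m - E\hat f_m = n^{-1}\sum_{i\in\mS_m} U_i$ is a mean of i.i.d. centered variables; the proof is then a series of standard change-of-variable Taylor expansions combined with moment bounds for such averages.

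For \eqref{eq:Efm}--\eqref{eq:covfmgm}, I would perform the substitution $u=(y-x)/h$ in $\int K_h(y-x) f(y)\,dy$, expand $f(x+uh)$ to second order using condition \ref{cond:smoothness}, and exploit $\int K=1$, $\int uK=0$, $\int u^2 K=\kappa_2$ from condition \ref{cond:kernel} to obtain \eqref{eq:Efm}. The variance \eqref{eq:varfm} follows from $\var(\hat f_m)=n^{-1}\{E K_h^2(X_1-x)-(EK_h(X_1-x))^2\}$, where the same Taylor trick applied to $h^{-2}\int K^2((y-x)/h) f(y)\,dy = h^{-1}\int K^2(u) f(x+uh)\,du$ supplies the $\nu_0 f/(nh)$ and $\nu_2\ddot f h/(2n)$ terms, and $(EK_h)^2 = f^2 + O(h^2)$ supplies the $-f^2/n$ term. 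The $\hat g_m$ expansions \eqref{eq:Egm}--\eqref{eq:vargm} are identical after conditioning on $X_i$ (using $E(Y_i\mid X_i)=\mu(X_i)$ and $\var(Y_i\mid X_i)=\sigma^2$), so the integrand becomes $K(u)\mu(x+uh)f(x+uh)$ or $K^2(u)\{\mu^2(x+uh)+\sigma^2\}f(x+uh)$ and one Taylor-expands the product $(\mu f)$ or $(\mu^2 f + \sigma^2 f)$ instead. The covariance \eqref{eq:covfmgm} comes from $\cov(\hat f_m,\hat g_m) = n^{-1}\{E K_h^2(X_1-x)\mu(X_1) - (EK_h(X_1-x))\cdot(EK_h(X_1-x)Y_1)\}$ by the same template.

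For \eqref{eq:Efmgm} I use the algebraic identity $E(\hat f_m - f)(\hat g_m - g) = \cov(\hat f_m,\hat g_m) + (E\hat f_m - f)(E\hat g_m - g)$; by \eqref{eq:Efm} and \eqref{eq:Egm} the bias product is $O(h^2)\cdot O(h^2)=O(h^4)$, which is exactly the stated correction.

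The main obstacle is the third batch, where I must control fourth-order moments uniformly in $h$. I will write $\hat f_m - f = U + b_f$ with $U = \hat f_m - E\hat f_m$ and $b_f = E\hat f_m - f = O(h^2)$, and analogously for $\hat g_m$. Expanding
\begin{align*}
\var\{(\hat f_m-f)^2\} = \var(U^2) + 4b_f^2\var(U) + 4b_f\,\cov(U^2,U),
\end{align*}
I would bound $\var(U^2)\le E U^4$ by the standard i.i.d. identity $E(n^{-1}\sum W_i)^4 = n^{-3}EW^4 + 3n^{-2}(1-n^{-1})(EW^2)^2$ applied to $W_i = K_h(X_i-x)-EK_h(X_i-x)$; since $EW^2 = O(1/h)$ and $EW^4 = O(1/h^3)$ (both from $\int K^r(u)f(x+uh)du = O(1)$ together with the $h^{-r+1}$ scaling), this gives $\var(U^2) = O(1/(n^2h^2)) + O(1/(n^3h^3))$, the first term dominating whenever $nh\to\infty$. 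The remaining two pieces are $O(h^3/n) + O(h^2)\cdot O(1/(n^2h^2)) = O(1/n^2)$, which are absorbed. Identical bookkeeping with $Y_iK_h(X_i-x)$ in place of $K_h(X_i-x)$ (using $|Y_i|\le C_Y$ or, equivalently, finite fourth conditional moments of $\varepsilon_i$) gives \eqref{eq:varfmgm}; Cauchy--Schwarz applied to $\cov(U_f^2, U_f U_g)$ then produces \eqref{eq:covfm2fmgm}. The hardest part is simply keeping track of which cross-terms in the fourth-moment expansion are $O(1/(nh))$ versus $O(1/(n^2h^2))$; the Taylor-expansion steps themselves are routine.
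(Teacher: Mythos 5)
Your proposal follows essentially the same route as the paper's proof: change of variables plus a second-order Taylor expansion for \eqref{eq:Efm}--\eqref{eq:covfmgm}, the bias-product identity $E(\hat f_m-f)(\hat g_m-g)=\cov(\hat f_m,\hat g_m)+(E\hat f_m-f)(E\hat g_m-g)$ for \eqref{eq:Efmgm}, the exact fourth-moment formula for i.i.d.\ centered averages for \eqref{eq:varfm2}--\eqref{eq:varfmgm}, and Cauchy--Schwarz for \eqref{eq:covfm2fmgm}. The only flaw is a minor bookkeeping slip in the last batch: the cross-term $4b_f^2\var(U)=O(h^4)\cdot O\{1/(nh)\}=O(h^3/n)$ is \emph{not} $O(1/n^2)$ (that would require $nh^3=O(1)$, which fails under condition (C3)); the correct reason it is absorbed is that $h^3/n=(nh^5)/(n^2h^2)=O\{1/(n^2h^2)\}$ because $nh^5\le Nh^5=O(1)$ --- precisely the fact the paper invokes --- and the same remark applies to your bound on the $b_f\cov(U^2,U)$ term, whose true order is $O\{h^2(nh)^{-3/2}\}=O\{(nh^5)^{1/2}/(n^2h^2)\}=O\{1/(n^2h^2)\}$.
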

	
	\begin{proof}
		The proofs of these results are very similar, so we only calculate \eqref{eq:varfm}, \eqref{eq:covfmgm}, \eqref{eq:varfm2} and \eqref{eq:varfmgm} as examples in the following.
		
		\noindent\textbf{Proof of \eqref{eq:varfm}.} Note that $\var(\hat f_m) = n^{-1} \var \{K_h(X_i - x) \} = n^{-1} h^{-2}E\xi_i^2$, where $\xi_i = K\{(X_i - x) / h\}  - EK\{(X_i - x) / h\}$. In fact, 
		\begin{align*}
			EK^2\left(\frac{X_i- x}{h} \right) =& \int K^2\left(\frac{u- x}{h} \right) f(u)du
			= h\int K^2(t) f(x+ht)dt\\
			=& h\int K^2(t) \left\{f(x)+ \dot f(x)ht + \ddot f(\tilde x) h^2t^2 / 2  \right\}dt,\\
			=& \nu_ 0 fh + \nu_2 \ddot f h^3 / 2 + o(h^3),
		\end{align*}
		where $\tilde x$ is between $x$ and $x+ht$, and the last equality uses condition \ref{cond:smoothness} and dominated convergence theorem. Similarly, $\left[ EK\left\{(X_i- x)/h\right\}\right]^2 = f^2 h^2 + o(h^3)$.
		Then we have $E\xi_i^2 = \nu_0 f h  - f^2 h^2 + \nu_2 \ddot f h^3 / 2+ o(h^3) $, and thus \eqref{eq:varfm} follows.
		
		\noindent\textbf{Proof of \eqref{eq:covfmgm}.} Note that $\cov(\hat f_m, \hat g_m) = n^{-1}h^{-2} E\xi_i \zeta_i$, where $\xi_i = K\{(X_i - x) / h\}  - EK\{(X_i - x) / h\}$ and $\zeta_i = K\{(X_i - x) / h\}Y_i  - EK\{(X_i - x) / h\}Y_i$. By some calculations, we can obtain that 
		\begin{align*}
			E K^2\left(\frac{X_i- x}{h} \right)Y_i =& E K^2\left(\frac{X_i- x}{h} \right)\mu(X_i) = h \int K^2(t) f(x+ht)\mu(x+ht) dt \\
			=& \nu_0 \mu f h + \nu_2(\ddot\mu f+ 2\dot \mu \dot f+\mu \ddot f )h^3/2 + o(h^3),
		\end{align*}
		and  $\left[EK\{(X_i - x) / h\}\right]\left[EK\{(X_i - x) / h\}Y_i\right] = \mu f^2 h^2 + o(h^3)$. Then we have $E\xi_i\zeta_i = \nu_0 \mu f h -\mu f^2h^2+ \nu_2(\ddot\mu f+ 2\dot \mu \dot f+\mu \ddot f )h^3/2 + o(h^3) $, and thus \eqref{eq:covfmgm} follows.
		
		\noindent  \textsc{Proof of \eqref{eq:varfm2}.} Note that $\var\left\{(\hat f_m - f)^2 \right\} \le E(\hat f_m - f)^4=  E(\hat f_m - E\hat f_m)^4 + 4E(\hat f_m - E\hat f_m)^3 (E\hat f_m - f) + 6E(\hat f_m - E\hat f_m)^2(E\hat f_m - f)^2 + (E\hat f_m - f)^4$. We can calculate that $E(\hat f_m - E \hat f_m)^4=$
		\begin{align*}
			(nh)^{-4} \left\{ n E\xi_i^4 + 3n(n-1) (E\xi_i^2)^2  \right\}
			= &O\left\{ (nh)^{-3}\right\} + 3\nu_0^2 f^2 (nh)^{-2} \left\{ 1 + o(1)\right\}\\
			=& 3\nu_0^2 f^2 (nh)^{-2}+ o \left\{(nh)^{-2}\right\},
		\end{align*}
		and $ E(\hat f_m - E \hat f_m)^3 = O\left\{  ( nh)^{-2}\right\}$. Recall that $E\hat f_m - f = O(h^2)$, $E(\hat f_m - E\hat f_m)^2 = \var(\hat f_m)  = O\left\{(nh)^{-1} \right\}$, and $E(\hat f_m - f)^2 =  O\left\{ (nh)^{-1} + h^4\right\}  $. Together with these results and the facts $h^3/n = (nh^5)/(n^2h^2) =O \left\{(Nh^5 /M)(nh)^{-2}\right\} = O\{(nh)^{-2}\}$ and $h^8 = (nh^5)^2 / (n^2h^2) = O\left\{ (Nh^5/M)^2  (nh)^{-2} \right\} = O\{(nh)^{-2}\}$, we have $\var\left\{(\hat f_m - f)^2 \right\} = O \left\{(nh)^{-2}\right\}$. This ends the proof of \eqref{eq:varfm2}.
		
		\noindent  \textsc{Proof of \eqref{eq:varfmgm}.} Note that $\var\left\{ (\hat f_m - f)(\hat g_m - g)  \right\} \le E (\hat f_m - f)^2(\hat g_m - g) ^2 = E \{( \hat f_m - E \hat f_m) + (E \hat f_m - f)\}^2 \{(\hat g_m - E \hat g_m) + (E \hat g_m -  g)\}^2 = 
		E (\hat f_m - E\hat f_m )^2(\hat g_m - E\hat g_m)^2 
		+ E (\hat f_m - E\hat f_m )^2( E\hat g_m - g)^2 
		+ 2E (\hat f_m - E\hat f_m )^2(\hat g_m - E\hat g_m) ( E\hat g_m - g) 
		+  ( E\hat f_m - f)^2  E(\hat g_m - E\hat g_m)^2
		+  ( E\hat f_m - f)^2 ( E\hat g_m - g)^2
		+ 2 E(\hat f_m - E\hat f_m ) ( E\hat f_m - f) (\hat g_m - E\hat g_m)^2
		+ 4 E(\hat f_m - E\hat f_m)( E\hat f_m - f) (\hat g_m - E\hat g_m)(\hat g_m - g ) $. We can calculate that $E (\hat f_m - E\hat f_m )^2(\hat g_m - E\hat g_m)^2=$
		\begin{align*}
			& (nh)^{-4}\left\{ n E\xi_i^2\zeta_i^2 + n(n-1) E\xi_i^2  E\zeta_i^2 + 2n(n-1)(E\xi_i\zeta_i)^2  \right\}\\
			=& O\left\{ (nh)^{-3}\right\} + (\mu^2 + \sigma^2)\nu_0^2 f^2  (nh)^{-2}\left\{ 1 + o(1)\right\} + 2 \nu_0^2 \mu^2 f^2 (nh)^{-2}  \left\{ 1 + o(1)\right\}\\
			=& (3\mu^2 + \sigma^2)\nu_0^2 f^2 (nh)^{-2}  + o\left\{ (nh)^{-2}\right\}.
		\end{align*}
		Since $E\hat g_m - g = O(h^2)$, we have $E (\hat f_m - E\hat f_m )^2(\hat g_m - E\hat g_m) ( E\hat g_m - g) = O(n^{-2}) = o\left\{ (nh)^{-2} \right\}$. Similarly, we can show that all other terms in the expression of $E ( f_m - f)^2(\hat g_m - g)^2$ are $O\left\{ (nh)^{-2} \right\}$ by using the fact that $nh^5 = (Nh^5) / M = O(1)$. Together with above results, we can obtain that $\var\left\{ (\hat f_m - f)(\hat g_m - g)  \right\}  = O\left\{ (nh)^{-2}\right\} $. This proves \eqref{eq:varfmgm}.

	\end{proof}

	\begin{lemma} \label{lemma:concen}
		Under the conditions assumed in Theorem \ref{thm:global&OS} (b), we have
		\begin{align}
			&P\Big( |\hat f_m - E \hat f_m|>t \Big) \le 2 \exp\left[-\frac{nht^2}{2\nu_0 f \{1+o(1)\} + 2 C_K t / 3}\right], \label{ineq:fm_concen}\\
			&P \Big( |\hat g_m - E \hat g_m|>t \Big) \le 2 \exp\left[-\frac{nht^2}{2 (\mu ^2 + \sigma^2)\nu_0 f \{1+o(1)\} + 2 C_ K C_Y t / 3}\right], \label{ineq:gm_concen}
		\end{align}
		where $C_K = \sup_{u} |K(u)|$.
		Furthermore, we have
		\begin{align}
			P\left\{ \max_{1\le m\le M} \left(|\tilde g_m / \tilde f_m^4|, |1 / \tilde f_m^3| \right)> C_0\right\} \le C_1 \exp\left( \log M - C_2nh \right), \label{ineq:bound}
		\end{align}
		if $n$ is sufficiently large, where $C_1,\ C_2>0$ are some constants.
		Here $\hat f_m$, $\tilde f_m$, $\hat g_m$, and $\tilde g_m$ are defined in the proof of Theorem \ref{thm:global&OS}.
	\end{lemma}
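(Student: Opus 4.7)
The plan is to obtain inequalities \eqref{ineq:fm_concen} and \eqref{ineq:gm_concen} as direct applications of Bernstein's inequality to the i.i.d. summands inside $\hat f_m$ and $\hat g_m$, and then to deduce \eqref{ineq:bound} by a union bound combined with the fact that $\tilde f_m$ and $\tilde g_m$ lie between their empirical and population counterparts.

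For \eqref{ineq:fm_concen}, write $\hat f_m - E\hat f_m = (nh)^{-1}\sum_{i\in\mS_m} Z_i$ with $Z_i = K\bigl((X_i-x)/h\bigr) - E K\bigl((X_i-x)/h\bigr)$. Condition \ref{cond:kernel} forces $|Z_i|\le 2C_K$ almost surely, while the variance computation carried out in the proof of \eqref{eq:varfm} in Lemma \ref{lemma:fmgm} yields $E Z_i^2 = h\nu_0 f\{1+o(1)\}$. Bernstein's inequality therefore gives
\begin{align*}
P\bigl(|\hat f_m - E\hat f_m| > t\bigr) = P\Bigl(\Bigl|\sum_{i\in\mS_m} Z_i\Bigr| > nht\Bigr) \le 2\exp\!\left[-\frac{(nht)^2}{2n\,h\nu_0 f\{1+o(1)\} + \tfrac{2}{3}C_K (nht)}\right],
\end{align*}
which simplifies to the claimed bound. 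Inequality \eqref{ineq:gm_concen} is obtained identically with $W_i = K\bigl((X_i-x)/h\bigr)Y_i - EK\bigl((X_i-x)/h\bigr)Y_i$; the boundedness assumption $|Y_i|\le C_Y$ gives $|W_i|\le 2C_K C_Y$, and the variance computation for $\hat g_m$ (cf.\ \eqref{eq:vargm}) supplies $E W_i^2 = h(\mu^2+\sigma^2)\nu_0 f\{1+o(1)\}$.

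For \eqref{ineq:bound}, choose a small $\delta>0$ (say $\delta = f(x)/2$) and define the good event $\mE_m = \{|\hat f_m - f|\le \delta,\ |\hat g_m - g|\le \delta\}$. On $\mE_m$, since $\tilde f_m$ is between $\hat f_m$ and $f$, we have $\tilde f_m \ge f - \delta = f/2 > 0$, so $|1/\tilde f_m^3|\le 8/f^3$; analogously $|\tilde g_m|\le |g|+\delta$, which together gives a deterministic bound on $|\tilde g_m/\tilde f_m^4|$. Choosing $C_0$ larger than both of these explicit bounds, the event $\{\max_m(|\tilde g_m/\tilde f_m^4|,|1/\tilde f_m^3|) > C_0\}$ is contained in $\bigcup_{m=1}^M \mE_m^c$. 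Since $E\hat f_m = f + O(h^2)$ and $E\hat g_m = g + O(h^2)$, for $N$ large enough $|\hat f_m - f| > \delta$ forces $|\hat f_m - E\hat f_m| > \delta/2$, and similarly for $\hat g_m$. Applying \eqref{ineq:fm_concen} and \eqref{ineq:gm_concen} with $t = \delta/2$ and then a union bound over $m$ yields
\begin{align*}
P\Bigl(\bigcup_{m=1}^M \mE_m^c\Bigr) \le 4M\exp\!\left[-\frac{nh(\delta/2)^2}{2(\mu^2+\sigma^2)\nu_0 f\{1+o(1)\} + \tfrac{2}{3}C_KC_Y(\delta/2)}\right] \le C_1\exp(\log M - C_2 nh),
\end{align*}
for suitable positive constants $C_1,C_2$, which is the desired estimate.

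The only delicate point is translating the concentration for $\hat f_m,\hat g_m$ into a bound on the intermediate-value quantities $\tilde f_m,\tilde g_m$; this is where one must be careful to fix $\delta$ small enough that $f-\delta$ is safely positive (legitimate because $f(x)>0$ is assumed throughout) and then absorb the constants into $C_0,C_1,C_2$. The rest of the argument is bookkeeping once Bernstein's inequality is in place.
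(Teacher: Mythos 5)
Your proposal is correct and follows essentially the same route as the paper: Bernstein's inequality applied to the centered kernel summands for \eqref{ineq:fm_concen} and \eqref{ineq:gm_concen}, then a union bound over $m$ combined with the intermediate-value property of $\tilde f_m,\tilde g_m$ and the bias bound $|E\hat f_m - f| = O(h^2)$ to control the event $\{\tilde f_m < f/2\}$ and deduce \eqref{ineq:bound}. The paper's version phrases the reduction as two separate uniform tail bounds on $\max_m|\tilde f_m - f|$ and $\max_m|\tilde g_m - g|$ rather than a single good event, but the content is identical.
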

	
	\begin{proof}
		Note that $\hat f_m - E\hat f_m = (nh)^{-1} \sum_{i\in \mS_m} \xi_i$, where $\xi_i = K\{(X_i - x) / h\}  - EK\{(X_i - x) / h\}$. Then by Bernstein's inequality \citep{bennett1962probability}, we have 
		\begin{align*}
			P\left(|\hat f_m - E \hat f_m|>t \right)  = P\left(\left|\sum_{i\in \mS_m}  \xi_i \right| > nht \right)
			\le  2\exp\left\{ - \frac{(nht)^2 }{2 \var(\sum_i \xi_i) + 2C_K nht / 3} \right\}.
		\end{align*}
		Recall that $ \var(\sum_i \xi_i) = n \var(\xi_i) =  \nu_0 f nh\{1+o(1)\}$. Thus, the probability inequality \eqref{ineq:fm_concen} follows immediately. The inequality \eqref{ineq:gm_concen} can be derived similarly. 
		
		To derive \eqref{ineq:bound}, we need to bound $|\tilde g_m / \tilde f_m^4|$ and $|1/\tilde f_m^3|$ for $1\le m\le M$ uniformly. We only deal with $\max_m |\tilde g_m / \tilde f_m^4|$ in the following, since $\max_m |1/\tilde f_m^3|$ can be handled similarly. Then, it is sufficient to show that 
		\begin{align}
			&P\left( \max_m |\tilde f_m - f| > f/2 \right) \le C_3 \exp(\log M - C_4 nh),\label{ineq:tildefm} \\
			&P\left( \max_m |\tilde g_m - g| > C_5 \right) \le C_6 \exp(\log M - C_7 nh).\label{ineq:tildegm}
		\end{align}
		Note that $|\tilde f_m - f| \le |\hat f_m - f| \le |\hat f_m - E\hat f_m| + |E\hat f_m - f|$, and $|E\hat f_m - f| = O(h^2)$. Then by \eqref{ineq:fm_concen}, we have
		\begin{align*}
			P\left( \max_m |\hat f_m - f| > f/2 \right)\le& P\left( \max_m |\tilde f_m - E\hat f_m| > f/4 \right) + P\left( |E\hat f_m - f| > f/4 \right)\\
			\le & 2M\exp( -C_8 nh ),
		\end{align*}
		if $h$ is small enough such that $h^2 \ll f/4$. Thus, the inequality \eqref{ineq:tildefm} follows. The inequality \eqref{ineq:tildegm} can be derived in almost the same way.
		This completes the proof of the lemma.
		
	\end{proof}

	\begin{lemma} \label{lemma:Q1jQ1j+1}
		Under the conditions assumed in Theorem \ref{thm:GPA_asymptotic}, then for any $x_j^*$ satisfies $|x_j^* - x| / h \to \infty$, we have
		\begin{align}
			&\cov\Big\{\hat f(x_j^*), \hat f(x) \Big\} = \frac{\nu_0 f(x)}{Nh} + o\left(\frac{1}{Nh}\right), \label{eq:covfjfj+1}\\
			&\cov\Big\{\hat g(x_j^*), \hat g(x) \Big\} = \frac{\{\mu^2(x) + \sigma^2\} \nu_0f(x)}{Nh} + o\left(\frac{1}{Nh}\right),\label{eq:covgjgj+1}\\
			&\cov\Big\{\hat f(x_j^*), \hat g(x) \Big\} = \frac{\nu_0 \mu(x) f(x)}{Nh} + o\left(\frac{1}{Nh}\right)\label{eq:covfjgj+1}\\
			&\cov\Big\{\hat g(x_j^*), \hat f(x) \Big\} = \frac{\nu_0 \mu(x) f(x)}{Nh} + o\left(\frac{1}{Nh}\right)\label{eq:covgjfj+1}.
		\end{align}
		Moreover, we have
		\begin{align}
			&\var\Big\{Q_1(x_j^*)\Big\} = \left(\frac{\nu_0 \sigma^2}{f(x)}\right)\frac{1}{Nh} + o\left(\frac{1}{Nh}\right) \label{eq:varQ1xl},\\
			&\cov\Big\{Q_1(x_j^*), Q_1(x)  \Big\} = \left(\frac{\nu_0 \sigma^2}{f(x)}\right)\frac{1}{Nh} + o\left(\frac{1}{Nh}\right) \label{eq:covQ1jQ1j+1},
		\end{align}
		where $Q_1(\cdot)$ is defined in \eqref{eq:Q1xl}.
	\end{lemma}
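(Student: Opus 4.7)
The statement as worded contains what appears to be a typo in the hypothesis: under condition \ref{cond:kernel} the kernel has compact support $[-1,1]$, so the regime $|x_j^* - x|/h \to \infty$ would force $K_h(u - x_j^*) K_h(u - x) \equiv 0$ for $N$ large enough, making $\cov\{\hat f(x_j^*), \hat f(x)\}$ of order $1/N = o(1/(Nh))$ and inconsistent with the stated leading term $\nu_0 f(x)/(Nh)$. The lemma is also invoked in the proof of Theorem \ref{thm:GPA_asymptotic} with $x_j^*$ a grid point adjacent to $x$, so $|x_j^* - x|/h \le \Delta/h = 1/(Jh) \to 0$. I will therefore read the hypothesis as $|x_j^* - x|/h \to 0$ and set $a_j = (x_j^* - x)/h = o(1)$.

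For \eqref{eq:covfjfj+1}, my plan is to write $\cov\{\hat f(x_j^*), \hat f(x)\} = N^{-1} \cov\{K_h(X_1 - x_j^*), K_h(X_1 - x)\}$ and expand the summand covariance as $h^{-2} E[K((X_1 - x)/h - a_j) K((X_1 - x)/h)]$ minus $\{E K_h(X_1 - x_j^*)\}\{E K_h(X_1 - x)\}$. Substituting $v = (u - x)/h$ and Taylor-expanding $f(x + vh)$ via condition \ref{cond:smoothness}, the first piece will equal $h^{-1} f(x) \int K(v - a_j) K(v) dv + O(h)$; since $K$ is continuous with compact support, dominated convergence will give $\int K(v - a_j) K(v) dv \to \nu_0$. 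The subtracted product is $O(1)$, which is absorbed into $o(h^{-1})$. Dividing by $N$ yields the claim. The remaining covariances \eqref{eq:covgjgj+1}--\eqref{eq:covgjfj+1} will follow identically after conditioning on $X_1$ to handle $Y_1$, replacing $f(x)$ by $\{\mu^2(x) + \sigma^2\} f(x) = E(Y_1^2\mid X_1=x)f(x)$ for the $\hat g$-$\hat g$ case and by $\mu(x) f(x)$ for the cross terms.

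For the moreover part, \eqref{eq:varQ1xl} will follow from standard pointwise variance computations for $\hat f(x_j^*)$ and $\hat g(x_j^*)$, combined with continuity of $f$ and $\mu$ so that $f(x_j^*) \to f(x)$. For \eqref{eq:covQ1jQ1j+1}, I will decompose $Q_1(y)$ as in \eqref{eq:Q1xl} and expand $\cov\{Q_1(x_j^*), Q_1(x)\}$ by bilinearity into four cross-covariances, each evaluated above. Continuity of the deterministic coefficients at $x$ lets me replace them by their values at $x$, giving the $1/(Nh)$ coefficient
\begin{align*}
\frac{\mu^2(x)}{f^2(x)}\,\nu_0 f(x) + \frac{1}{f^2(x)}\,\{\mu^2(x) + \sigma^2\}\nu_0 f(x) - 2\,\frac{\mu(x)}{f^2(x)}\,\nu_0\mu(x) f(x) = \frac{\nu_0 \sigma^2}{f(x)}.
\end{align*}

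The main obstacle will be verifying that the Taylor remainder in the expansion of $f(x + vh)$ is $o(h^{-1})$ uniformly in $v$ over the compact support of $K$; this uses continuity of $\ddot f$ from condition \ref{cond:smoothness} together with $\int v^2 K^2(v) dv < \infty$. A secondary subtlety will be the dominated-convergence step for $\int K(v - a_j) K(v) dv \to \nu_0$, which needs the integrable dominant $\|K\|_\infty K(v)$ plus pointwise convergence $K(v - a_j) \to K(v)$ from continuity of $K$.
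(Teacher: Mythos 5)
Your proposal is correct and follows essentially the same route as the paper's own proof: the same change of variables $t=(u-x)/h$, a first-order Taylor expansion of $f$, and dominated convergence exploiting $|x_j^*-x|/h\to 0$ to get $\int K(t+(x-x_j^*)/h)K(t)\,dt\to\nu_0$, followed by bilinearity of covariance and continuity of $\mu$ and $f$ for \eqref{eq:varQ1xl} and \eqref{eq:covQ1jQ1j+1}. Your reading of the hypothesis as $|x_j^*-x|/h\to 0$ is also what the paper's proof actually uses (it explicitly invokes ``the fact that $|(x-x_j^*)/h|\to 0$''), so the $\to\infty$ in the statement is indeed a typo and your correction is the right one.
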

	
	\begin{proof}
		Since the proofs of \eqref{eq:covfjfj+1}-\eqref{eq:covgjfj+1} are very similar, we prove \eqref{eq:covfjfj+1} and \eqref{eq:covfjgj+1} for examples.
		
		\noindent\textbf{Proof of \eqref{eq:covfjfj+1}.} Note that $ $
		\begin{align*}
			&\cov\Big\{\hat f(x_j^*), \hat f(x) \Big\}
			=\frac{1}{Nh^2} \cov\left\{  K\left(\frac{X_i - x_j^*}{h}\right), K\left(\frac{X_i - x}{h}\right)\right\}\\
			=&\frac{1}{Nh^2}  \left[E \left\{K\left(\frac{X_i - x_j^*}{h}\right) K\left(\frac{X_i - x}{h}\right)\right\}- \left\{EK\left(\frac{X_i - x_j^*}{h}\right)\right\}\left\{EK\left(\frac{X_i - x}{h}\right)\right\} \right] .
		\end{align*}
		By some calculations, we have
		\begin{align*}
			&E \left\{K\left(\frac{X_i - x_j^*}{h}\right) K\left(\frac{X_i - x}{h}\right)\right\} \\
			=& \int K\left(\frac{u - x_j^*}{h} \right)K\left(\frac{u - x}{h} \right) f(u)du
			=h\int K\left(t + \frac{x - x_j^*}{h} \right) K(t) f(x+ht)dt\\
			=&h\int K\left(t + \frac{x - x_j^*}{h} \right) K(t) \left\{f(x) + \dot f(\tilde x) ht \right\}dt=\nu_0 f(x) h \{1 + o(1) \},
		\end{align*}
		where $\tilde x$ is between $x$ and $x + ht$, $\nu_0 = \int K^2(u)du$, and the last line follows from the dominated convergence theorem and the fact that $ |(x-x_j^*) / h| \to 0$. Since $EK\{(X_i - x_j ^*) / h \} = O(h)$, we can obtain \eqref{eq:covfjfj+1} by using above results.
		
		\noindent\textbf{Proof of \eqref{eq:covfjgj+1}.} Note that $\cov\{\hat f(x_j^*), \hat g(x) \}  =$
		\begin{align*}
			& \frac{1}{N^2h^2} \cov\left\{ \sum_{i=1}^N K\left(\frac{X_i - x_j^*}{h}\right), \sum_{i=1}^N K\left(\frac{X_i - x}{h}\right)Y_i\right\}\\
			=&\frac{1}{Nh^2} \cov\left[  K\left(\frac{X_i - x_j^*}{h}\right), K\left(\frac{X_i - x}{h}\right)\left\{\mu(X_i) + \varepsilon_i\right\}\right]\\
			=&\frac{1}{Nh^2} \Bigg[ E \left\{K\left(\frac{X_i - x_j^*}{h}\right) K\left(\frac{X_i - x}{h}\right)\mu(X_i)\right\}\\
			&~~~~~~~~~~- \left\{EK\left(\frac{X_i - x_j^*}{h}\right)\right\}\left\{EK\left(\frac{X_i - x}{h}\right)\mu(X_i)\right\}  \Bigg].
		\end{align*}
		Similarly, we can compute that
		\begin{align*}
			&E \left\{K\left(\frac{X_i - x_j^*}{h}\right) K\left(\frac{X_i - x}{h}\right)\mu(X_i)\right\}\\
			=& h\int K\left(t + \frac{x - x_j^*}{h} \right) K(t)\mu(x + ht) f(x + ht) dt\\
			=& \nu_0 \mu(x)f(x)h \{1 + o(1) \}.
		\end{align*}
		Further note that $EK\{(X_i - x_j^*) / h \} = O(h)$ and $EK\left\{(X_i - x) / h \right\}\mu(X_i) = O(h)$.
		Thus, \eqref{eq:covfjgj+1} follows from above results.
		
		\noindent\textbf{Proof of \eqref{eq:varQ1xl}.} From \eqref{eq:Q1xl} and the proofs of \eqref{eq:varfm}, \eqref{eq:vargm}, and \eqref{eq:covfmgm}, we can obtain that $\var\Big\{Q_1(x_j^*) \Big\} =$
		\begin{align*}
			& \frac{\mu^2(x_j^*)}{f^2(x_j^*)} \var\Big\{\hat f(x_j^*)\Big\} + \frac{1}{f^2(x_j^*)} \var\Big\{\hat g(x_j^*) \Big\}  - 2\frac{\mu^2(x_j^*)}{f^2(x_j^*)} \cov\Big\{\hat f(x_j^*), \hat g(x_j^*)\Big\}\\
			=&\frac{\mu^2(x)}{f^2(x)}\frac{\nu_0 f(x)}{Nh}\{1+ o(1) \} + \frac{1}{f^2(x)}\frac{\{\mu^2(x) + \sigma^2\} \nu_0f(x)}{Nh} \{1+ o(1) \}\\
			&~~ -2 \frac{\mu(x)}{f^2(x)}\frac{\nu_0 \mu(x) f(x)}{Nh}\{1+ o(1) \}
			=\left\{\frac{\nu_0 \sigma^2}{f(x)}\right\}\frac{1}{Nh} + o\left(\frac{1}{Nh}\right),
		\end{align*}
		where the second line follows from the facts that $f(\cdot)$ and $\mu(\cdot)$ are continuous and the condition that $|x-x_j ^*|\to 0$. 
		
		\noindent\textbf{Proof of \eqref{eq:covQ1jQ1j+1}.} Similar to the proof of \eqref{eq:varQ1xl}, from \eqref{eq:Q1xl} and \eqref{eq:covfjfj+1}-\eqref{eq:covgjfj+1} we can obtain that $\cov\Big\{Q_1(x_j^*), Q_1(x) \Big\} =$
		\begin{align*}
			& \frac{\mu(x_j^*)\mu(x)}{f(x_j^*)f(x)} \cov\Big\{\hat f(x_j^*), \hat f(x) \Big\} + \frac{1}{f(x_j^*) f(x)}\cov\Big\{\hat g(x_j^*), \hat g(x) \Big\} \\
			& ~~-\frac{\mu(x_j^*)}{f(x_j^*)f(x)} \cov\Big\{\hat f(x_j^*), \hat g(x) \Big\} - \frac{\mu(x)}{f(x_j^*)f(x)} \cov\Big\{\hat g(x_j^*), \hat f(x) \Big\}\\
			=&\frac{\mu^2(x)}{f^2(x)}\frac{\nu_0 f(x)}{Nh}\{1+ o(1) \} + \frac{1}{f^2(x)}\frac{\{\mu^2(x) + \sigma^2\} \nu_0f(x)}{Nh} \{1+ o(1) \} \\
			& ~~ - \frac{\mu(x)}{f^2(x)}\frac{\nu_0 \mu(x) f(x)}{Nh}\{1+ o(1) \} - \frac{\mu(x)}{f^2(x)}\frac{\nu_0 \mu(x) f(x)}{Nh}\{1+ o(1) \}\\
			=& \left\{\frac{\nu_0 \sigma^2}{f(x)}\right\}\frac{1}{Nh} + o\left(\frac{1}{Nh}\right).
		\end{align*}
		This completes the proof the lemma. 
	\end{proof}

	\begin{lemma}\label{lemma:correlations}
		Consider three sequences of random variables $U_{n1}$, $U_{n2}$, and $V_n$. If $\cor(U_{n1},V_n) \to 1$ and $\cor(U_{n2},V_n) \to 1$ as $n\to \infty$, then we have $\cor(U_{n1},U_{n2}) \to 1$ as $n\to \infty$.
	\end{lemma}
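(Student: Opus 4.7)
The plan is to work with standardized versions of the three random variables and use an $L^2$ triangle-inequality argument. Specifically, define $\tilde U_{ni} = (U_{ni} - EU_{ni})/\sqrt{\var(U_{ni})}$ for $i=1,2$ and $\tilde V_n = (V_n - EV_n)/\sqrt{\var(V_n)}$, so that $E\tilde U_{ni} = E\tilde V_n = 0$ and $E\tilde U_{ni}^2 = E\tilde V_n^2 = 1$. Under these normalizations, $\cor(U_{ni}, V_n) = E(\tilde U_{ni} \tilde V_n)$ and $\cor(U_{n1}, U_{n2}) = E(\tilde U_{n1} \tilde U_{n2})$.

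The key identity is the $L^2$-distance formula
\begin{align*}
E(\tilde U_{ni} - \tilde V_n)^2 = E\tilde U_{ni}^2 + E\tilde V_n^2 - 2E(\tilde U_{ni}\tilde V_n) = 2\bigl\{1 - \cor(U_{ni}, V_n)\bigr\}.
\end{align*}
Hence the two hypotheses $\cor(U_{n1}, V_n) \to 1$ and $\cor(U_{n2}, V_n) \to 1$ are equivalent to $\tilde U_{n1} - \tilde V_n \to 0$ and $\tilde U_{n2} - \tilde V_n \to 0$ in $L^2$.

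I then apply the triangle inequality for the $L^2$ norm:
\begin{align*}
\sqrt{E(\tilde U_{n1} - \tilde U_{n2})^2} \le \sqrt{E(\tilde U_{n1} - \tilde V_n)^2} + \sqrt{E(\tilde V_n - \tilde U_{n2})^2} \to 0.
\end{align*}
Reversing the identity above gives $\cor(U_{n1}, U_{n2}) = 1 - \tfrac{1}{2} E(\tilde U_{n1} - \tilde U_{n2})^2 \to 1$, which is the desired conclusion.

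There is essentially no obstacle: the proof is a two-line consequence of the fact that standardized variables with correlation tending to $1$ are $L^2$-close, combined with the triangle inequality. The only point worth mentioning is that the argument implicitly requires $\var(U_{ni}) > 0$ and $\var(V_n) > 0$ for all sufficiently large $n$ so that the standardizations are well-defined, which is a mild regularity assumption already implicit in writing the correlations in the hypothesis.
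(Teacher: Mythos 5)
Your proof is correct, but it takes a genuinely different route from the paper. The paper works with the $3\times 3$ correlation matrix of $(V_n, U_{n1}, U_{n2})$: writing $\rho_{n1}=\cor(U_{n1},V_n)$, $\rho_{n2}=\cor(U_{n2},V_n)$, $\rho_{n3}=\cor(U_{n1},U_{n2})$, positive semidefiniteness forces the determinant inequality $1 + 2\rho_{n1}\rho_{n2}\rho_{n3} - \rho_{n1}^2 - \rho_{n2}^2 - \rho_{n3}^2 \ge 0$, and letting $\rho_{n1},\rho_{n2}\to 1$ yields $-(\rho_{n3}-1)^2 \ge 0$ in the limit, hence $\rho_{n3}\to 1$. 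Your argument instead standardizes the variables and uses the identity $E(\tilde U_{ni}-\tilde V_n)^2 = 2\{1-\cor(U_{ni},V_n)\}$ together with the $L^2$ triangle inequality. Both are valid under the same implicit assumption that the variances are eventually positive. Your version has two small advantages: it yields the explicit quantitative bound $\sqrt{1-\rho_{n3}} \le \sqrt{1-\rho_{n1}} + \sqrt{1-\rho_{n2}}$, and it avoids a minor looseness in the paper's limit step, where $\lim_n |\bC_n| = -\lim_n(\rho_{n3}-1)^2$ is written as if $\lim_n \rho_{n3}$ were already known to exist (a fully careful version of the paper's argument would pass to a convergent subsequence of the bounded sequence $\rho_{n3}$). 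The paper's determinant approach, on the other hand, generalizes naturally to statements about several correlations constrained jointly by positive semidefiniteness. Either proof is acceptable here.
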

	\begin{proof}
		Write $\rho_{n1} = \cor(U_{n1}, V_n)$,  $\rho_{n2} = \cor(U_{n2}, V_n)$, and $\rho_{n3} =  \cor(U_{n1}, U_{n2})$. Then we know that the correlation matrix of $(V_n, U_{n1}, U_{n2})$, denoted by
		\begin{align*}
			\bC_n = \begin{bmatrix}
				1 & \rho_{n1} & \rho_{n2} \\
				\rho_{n1} & 1 & \rho_{n3} \\
				\rho_{n2} & \rho_{n3} & 1
			\end{bmatrix},
		\end{align*}
		should be positive-semidefinite.
		Thus, its determinant should satisfy $|\bC_n| = 1 + 2 \rho_{n1}\rho_{n2}\rho_{n3} - \rho_{n1}^2 -\rho_{n2}^2-\rho_{n3}^2 \ge 0$ for each $n\ge 1$. 
		Note that $\lim_{n\to\infty} \rho_{n1} = 1$ and $\lim_{n\to\infty} \rho_{n2} = 1$.
		This implies that
		\begin{align*}
			\lim_{n\to\infty} | \bC_n | = \lim_{n\to\infty} (-1 + 2\rho_{n3} - \rho_{n3}^2 )= -\lim_{n\to\infty} (\rho_{n3} - 1)^2 \ge 0.
		\end{align*}
		Then we must have $\lim_{n\to\infty} \rho_{n3}=\lim_{n\to \infty} \cor(U_{n1},U_{n2}) = 1$. This completes the proof of the lemma. 
	\end{proof}

	\begin{lemma}\label{lemma:difference}
		Consider $K+1$ sequences of random variables $U_{n1}, \dots, U_{nK}$ and $V_n$. Suppose that $\var(U_{nk}) = \var(V_{n})\{ 1+ o(1)\}$, $\cov(U_{nk}, V_{n}) = \var(V_{n})\{ 1+ o(1)\}$, and $(EU_{nk} - EV_n) / \sqrt{\var(V_n)}  \to 0$ as $n\to \infty$ for each $1\le k \le K$.
		If $K$ finite numbers $q_k\, (1\le k\le K)$ satisfy $\sum_{k=1}^Kq_k = 1$, then we have
		\begin{align*}
			\sum_{k=1}^K q_k U_{nk} - V_n= o_p\Big(\sqrt{\var(V_n)}\Big).
		\end{align*}
	\end{lemma}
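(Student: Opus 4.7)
The plan is to decompose $W_n \equiv \sum_{k=1}^K q_k U_{nk} - V_n$ into its mean and its centered part, and show both are $o(\sqrt{\var(V_n)})$ in the appropriate sense. For the mean, since $\sum_k q_k = 1$, I can write $EW_n = \sum_k q_k (EU_{nk} - EV_n)$; dividing by $\sqrt{\var(V_n)}$ and using the third hypothesis for each fixed $k$, together with the fact that $K$ is a finite constant, gives $EW_n / \sqrt{\var(V_n)} \to 0$.

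The main work is bounding $\var(W_n)$. I will expand
\begin{align*}
\var(W_n) = \sum_{k,l} q_k q_l \cov(U_{nk}, U_{nl}) + \var(V_n) - 2 \sum_k q_k \cov(U_{nk}, V_n),
\end{align*}
and show that each of $\cov(U_{nk}, U_{nl})$, $\cov(U_{nk}, V_n)$, and $\var(U_{nk})$ equals $\var(V_n)\{1+o(1)\}$. The diagonal and cross-covariance with $V_n$ are immediate from the hypotheses. For the off-diagonal term $\cov(U_{nk}, U_{nl})$ with $k \ne l$, I will first observe that $\cor(U_{nk}, V_n) = \cov(U_{nk},V_n)/\sqrt{\var(U_{nk})\var(V_n)} \to 1$ by the first two hypotheses, and similarly for $U_{nl}$. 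Then Lemma S.4 yields $\cor(U_{nk}, U_{nl}) \to 1$, and since $\sqrt{\var(U_{nk})\var(U_{nl})} = \var(V_n)\{1+o(1)\}$, we conclude $\cov(U_{nk}, U_{nl}) = \var(V_n)\{1+o(1)\}$.

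Substituting and using $\sum_k q_k = 1$ gives $\sum_{k,l} q_k q_l \cov(U_{nk}, U_{nl}) = \var(V_n)\{1+o(1)\}$ and $\sum_k q_k \cov(U_{nk}, V_n) = \var(V_n)\{1+o(1)\}$, so that $\var(W_n) = \var(V_n)\{1+o(1)\} + \var(V_n) - 2\var(V_n)\{1+o(1)\} = o(\var(V_n))$. Finally, Chebyshev's inequality gives $W_n - EW_n = O_p(\sqrt{\var(W_n)}) = o_p(\sqrt{\var(V_n)})$, and combining with the mean bound above yields $W_n = o_p(\sqrt{\var(V_n)})$, as required.

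The only mildly delicate point is the derivation of $\cor(U_{nk},U_{nl}) \to 1$, which is exactly what Lemma S.4 was set up to provide; everything else is a straightforward variance expansion and an application of Chebyshev. Because $K$ is fixed, there is no issue with accumulation of $o(1)$ error terms across the finite sum.
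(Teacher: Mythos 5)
Your proposal is correct and follows essentially the same route as the paper's proof: separate the mean using $\sum_k q_k = 1$ and the third hypothesis, expand the second moment of the centered part, invoke Lemma \ref{lemma:correlations} to get $\cor(U_{nk},U_{nl})\to 1$ for the off-diagonal covariances, and finish with Chebyshev. The only cosmetic difference is that the paper works with the standardized variables $\tilde U_{nk}=(U_{nk}-EU_{nk})/\sqrt{\var(V_n)}$ and bounds $E\mE_n^2$ directly, which is the same variance expansion in normalized form.
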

	
	\begin{proof}
		Let $\tilde U_{nk} = (U_{nk} - EU_{nk})/\sqrt{\var(V_n)}$ and $\tilde V_{nk} = (V_{nk} - EV_{nk})/\sqrt{\var(V_n)}$. Then we have 
		\begin{align*}
			\frac{1}{\sqrt{\var(V_n)}} \left\{\sum_{k=1}^K q_k U_{nk} - V_n\right\} =& \sum_{k=1}^K q_k \tilde U_{nk} - \tilde V_n +  \sum_{k=1}^K q_k \frac{EU_{nk} - EV_n}{\sqrt{\var(V_n)}}%\\
			=&\sum_{k=1}^K q_k \tilde U_{nk} - \tilde V_n + o(1),
		\end{align*}
		where we have used $(EU_{nk} - EV_n) / \sqrt{\var(V_n)}  \to 0$ as $n\to\infty$ in the last equality. Thus, it suffices to show that $\mE_n = \sum_{k=1}^K q_k \tilde U_{nk} - \tilde V_n= o_p(1)$. We can compute that
		\begin{align}\label{eq:mE^2}
			E\mE_n^2 = E\left(\sum_{k=1}^K q_k \tilde U_{nk}\right)^2 + E\tilde V_n^2 - 2 \sum_{k=1}^K q_k \cov(\tilde U_{nk}, \tilde V_n).
		\end{align}
		Note that 
		\begin{align}\label{eq:sum_q_k^2}
			E\left(\sum_{k=1}^K q_k \tilde U_{nk}\right)^2 = \sum_{k_1=1}^K \sum_{k_2=1}^K q_{k_1} q_{k_2} \cov(\tilde U_{nk_1},\tilde U_{nk_2}).
		\end{align}
		If $k_1=k_2$, we should have $\cov(\tilde U_{nk_1},\tilde U_{nk_2}) = \var(U_{kn_1}) / \var(V_n) = 1+o(1)$. We next consider the case $k_1\ne k_2$. 
		In fact,
		\begin{align}
			&\cov(\tilde U_{nk_1},\tilde U_{nk_2}) = \cov( U_{nk_1},U_{nk_2}) / \var(V_n) \nonumber\\
			=& \cor(U_{nk_1},U_{nk_2})  \sqrt{\var(U_{nk_1}) \var(U_{nk_2})} \Big/ \var(V_n) \nonumber \\
			=& \cor(U_{nk_1},U_{nk_2}) \{1+o(1)\} \label{eq:Uk1_Uk2}
		\end{align}
		where we have used the condition $\var(V_{nk}) = \var(V_n) \{1+o(1)\}$ for each $1\le k \le K$ in the last equality.
		Note that $\cor(U_{nk},V_n) = \cov(U_{nk},V_n) /\sqrt{\var(U_{nk})\var(V_n)} = 1+o(1)$ for each $1\le k \le K$, since $\cov(U_{nk},V_n)=\var(V_n)\{1+o(1) \}$ and $\var(U_{nk})=\{1+o(1) \} $. 
		Consequently, $\cor(U_{nk_1},V_n)\to 1$ and $\cor(U_{nk_2},V_n)\to 1$ as $n\to \infty$.
		By Lemma \ref{lemma:correlations}, we conclude that $\cor(U_{nk_1},U_{nk_2})\to 1$ as $n\to \infty$. 
		Then by \eqref{eq:Uk1_Uk2}, we obtain that $\cov(\tilde U_{nk_1},\tilde U_{nk_2}) = 1+o(1)$. 
		Therefore, $\cov(\tilde U_{nk_1},\tilde U_{nk_2}) =  1+o(1)$ for each pair $1\le k_1, k_2 \le K$.
		Thus, by \eqref{eq:sum_q_k^2} and the condition $\sum_{k=1}^K q_k=1$, we can derive that 
		\begin{align*}
			E\left(\sum_{k=1}^K q_k \tilde U_{nk}\right)^2 = \sum_{k_1=1}^K \sum_{k_2=1}^K q_{k_1} q_{k_2} \{ 1+o(1)\} = \left(\sum_{k=1}^K q_k\right)^2\{ 1+o(1)\} = 1+o(1).
		\end{align*}
		Further note that $E\tilde V_n^2 = 1$, and $\cov(\tilde U_{nk}, \tilde V_n) = \cov(U_{nk}, V_{n}) / \var(V_n) = 1+o(1)$. Together with above results and \eqref{eq:mE^2}, we should have $ E\mE_n^2 = o(1)$. Then we conclude that $\mE_n = o_p(1)$. This completes the proof the lemma.

	\end{proof}

	\begin{lemma} \label{lemma:fg_nu}
		Under the conditions assumed in Theorem \ref{thm:GPA_nu}, we have
		\begin{align}
			&E \hat f =  f + \frac{\kappa_{\nu+1} }{(\nu+1)! } f^{(\nu+1)}  h^{\nu+1}  + o(h^{\nu+1}) \label{eq:Ef_nu}\\
			&\var(\hat f) = \frac{\nu_0  f}{Nh} + O\left(\frac{1}{N}\right), \label{eq:varf_nu} \\
			&E \hat g = \mu f + \frac{\kappa_{\nu+1} }{(\nu+1)! } \left\{\sum_{s=0}^{\nu+1} \binom{\nu+1}{s} \mu^{(s)} f^{(\nu+1-s)} \right\} h^{\nu+1} + o(h^{\nu+1}), \label{eq:Eg_nu}\\
			&\var(\hat g) = \frac{(\mu^2 + \sigma^2)\nu_0 f }{Nh} + O\left(\frac{1}{N}\right), \label{eq:varg_nu}\\
			&\cov(\hat f, \hat g) = \frac{\nu_0 \mu f }{Nh} +  O\left(\frac{1}{N} \right), \label{eq:covfg_nu}
		\end{align}
	\end{lemma}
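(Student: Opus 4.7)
The plan is to adapt the moment computations of Lemma \ref{lemma:fmgm} to the $(\nu+1)$-th order kernel setting, exploiting the vanishing moments $\kappa_r=0$ for $1\le r\le \nu$ provided by condition \ref{cond:kernel_nu}. The main conceptual difference from Lemma \ref{lemma:fmgm} is that now Taylor expansions must be carried out to order $\nu+1$, because all lower-order terms in the bias integrate to zero against $K$.

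For $E\hat f$, I would apply the change of variables $t=(u-x)/h$ to obtain $E\hat f = \int K(t)f(x+ht)\,dt$, then Taylor expand $f(x+ht) = \sum_{r=0}^{\nu} f^{(r)}(x)(ht)^r/r! + f^{(\nu+1)}(\tilde x)(ht)^{\nu+1}/(\nu+1)!$ under condition \ref{cond:smoothness_nu}. By condition \ref{cond:kernel_nu}, the constant term integrates to $f(x)$, the terms of order $h, h^2,\ldots,h^{\nu}$ vanish, and the remainder term yields $\kappa_{\nu+1} f^{(\nu+1)}(x) h^{\nu+1}/(\nu+1)!+o(h^{\nu+1})$ by the dominated convergence theorem. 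The analogous computation for $E\hat g$ proceeds identically with $g(x+ht)=\mu(x+ht)f(x+ht)$; here the only extra step is to apply the Leibniz rule $g^{(\nu+1)}=\sum_{s=0}^{\nu+1}\binom{\nu+1}{s}\mu^{(s)}f^{(\nu+1-s)}$, which immediately produces the sum appearing in \eqref{eq:Eg_nu}.

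For the variances and covariance, I would write $\var(\hat f) = N^{-1}\{E[K_h^2(X_i-x)] - (E[K_h(X_i-x)])^2\}$ and compute each piece via change of variables. The dominant term $E[K_h^2(X_i-x)] = h^{-1}\int K^2(t) f(x+ht)\,dt = \nu_0 f(x)/h + O(1)$ uses only a zeroth-order Taylor expansion of $f$; and $(E[K_h(X_i-x)])^2 = (f(x)+O(h^{\nu+1}))^2 = O(1)$. Dividing by $N$ gives $\var(\hat f) = \nu_0 f/(Nh) + O(1/N)$. The same scheme yields $\var(\hat g)$ upon using $E[Y_i^2\mid X_i]=\mu^2(X_i)+\sigma^2$ together with independence of $\varepsilon_i$ and $X_i$, and $\cov(\hat f,\hat g)$ upon replacing one $K_h$ factor by $K_h Y_i$ before applying the law of iterated expectation.

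The main obstacle is bookkeeping rather than difficulty. In particular, one must verify carefully that the $o(h^{\nu+1})$ error in the bias expansion genuinely follows from dominated convergence: this requires $\int |t^{\nu+1} K(t)|\,dt<\infty$ (given by condition \ref{cond:kernel_nu}(iii)) together with continuity of $f^{(\nu+1)}$ and $\mu^{(\nu+1)}$ at $x$ (given by condition \ref{cond:smoothness_nu}). For the Leibniz expansion in \eqref{eq:Eg_nu}, I would be careful to collect the $s=0,\ldots,\nu+1$ terms in full rather than splitting off any leading piece, so the compact binomial form emerges directly.
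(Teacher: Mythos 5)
Your proposal is correct and follows essentially the same route as the paper's proof: change of variables plus a $(\nu+1)$-th order Taylor expansion with the vanishing kernel moments and dominated convergence for the bias terms, the Leibniz rule for \eqref{eq:Eg_nu}, and low-order expansions of $E[K_h^2(\cdot)]$ (together with conditioning on $X_i$) for the variance and covariance terms. The only slight imprecision is that obtaining the $O(1/N)$ (rather than $o(1/(Nh))$) remainder in \eqref{eq:varf_nu}--\eqref{eq:covfg_nu} requires a first-order, not zeroth-order, expansion of $f$ (and of $\mu f$), which is available since condition \ref{cond:smoothness_nu} makes these functions at least continuously differentiable.
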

	
	\begin{proof}
		Recall that 
		\begin{align*}
			&\hat f= \hat f(x) = \frac{1}{Nh} \sum_{i=1}^N K\left(\frac{X_i-x}{h} \right),\ \hat g= \hat g(x) = \frac{1}{Nh} \sum_{i=1}^N K\left(\frac{X_i-x}{h} \right)Y_i.
		\end{align*}
		We then proof the lemma as follows.
		
		\noindent\textbf{Proof of \eqref{eq:Ef_nu}}. Note that $E \hat f = h^{-1} E K\{(X_i-x)/h \} $. In fact, 
		\begin{align*}
			&EK\left(\frac{X_i-x}{h} \right) = \int K\left(\frac{u-x}{h}\right) f(u) du  = h \int K(t) f(x+ht) dt \\
			=&h \int \left\{\sum_{s=0}^{\nu} \frac{f^{(s)}(x)}{s!} (ht)^s + \frac{f^{(\nu+1)}(\tilde x)}{(\nu+1)!} (ht)^{\nu+1} \right\} K(t) dt\\
			=& hf + h^{\nu+2} \frac{f^{(\nu+1)}(x) }{(\nu+1)! } \kappa_{\nu+1}  + h^{\nu+2}  \frac{1}{(\nu+1)! } \int \left\{f^{(\nu+1)}(\tilde x) - f^{(\nu+1)}(x) \right\} t^{\nu+1} K(t) dt\\
			=& hf + h^{\nu+2} \frac{f^{(\nu+1)}(x) }{(\nu+1)! } \kappa_{\nu+1}  + o(h^{\nu+2}),
		\end{align*}
		where $\tilde x$ is between $x$ and $x+ht$, and $\kappa_{\nu+1} = \int u^{\nu+1} K(u) du$. Here, we have used the properties of higher order kernel $K(\cdot)$ in the fourth equality and the dominated convergence theorem in the last equality. Then, we can immediately obtain \eqref{eq:Ef_nu}.
		
		\noindent\textbf{Proof of \eqref{eq:varf_p}}. Note that $\var(\hat f)  = N^{-1} h^{-2}E\xi_i^2$, where $\xi_i = K\{(X_i - x) / h\}  - EK\{(X_i - x) / h\}$. In fact, 
		\begin{align*}
			&EK^2\left(\frac{X_i- x}{h} \right) = \int K^2\left(\frac{u- x}{h} \right) f(u)du
			= h\int K^2(t) f(x+ht)dt\\
			=& h\int K^2(t) \left\{f(x)+ \dot f(\tilde x)ht   \right\}dt=\nu_ 0^p f h + O(h^2),
		\end{align*}
		where $\nu_0 = \int K^2(u) du$. By proof of \eqref{eq:Ef_nu}, we know that $\left[ EK\left\{(X_i- x)/h\right\}\right]^2 = O(h^2)$.
		Then we have $E\xi_i^2 = \nu_0 f h  +  O(h^2) $, and thus \eqref{eq:varg_nu} follows.

		\noindent\textbf{Proof of \eqref{eq:Eg_nu}.}
		Note that $E \hat g = h^{-1} E [K \{(X_i-x)/h  \} Y_i ] = h^{-1} E[K \{(X_i-x)/h  \} \mu(X_i) ]  $, since $\varepsilon_i$ is independent of $X_i$. Note that $g = \mu f$ is also $(\nu+1)$-times continuously differentiable under the assumed conditions. Similar to the proof of \eqref{eq:Ef_nu}, we can obtain that 
		\begin{align*}
			&EK\left(\frac{X_i-x}{h} \right) \mu(X_i) = \int K\left(\frac{u-x}{h}\right) \{\mu(u) f(u)\} du  = h \int K(t) g(x+ht) dt \\
			=&h \int \left\{\sum_{s=0}^{\nu} \frac{g^{(s)}(x)}{s!} (ht)^s + \frac{g^{(\nu+1)}(\tilde x)}{(\nu+1)!} (ht)^{\nu+1} \right\} K(t) dt\\
			=& hg + h^{\nu+2} \frac{g^{(\nu+1)}(x) }{(\nu+1)! } \kappa_{\nu+1}  + o(h^{\nu+2}).
		\end{align*}
		By Leibniz rule, we have 
		\begin{align*}
			g^{(\nu+1)} = (\mu f)^{(\nu+1)} = \sum_{s=0}^{\nu+1} \binom{\nu+1}{s} \mu^{(s)} f^{(\nu+1-s)}.
		\end{align*}
		Thus, \eqref{eq:Eg_nu} follows immediately.

		\noindent\textbf{Proof of \eqref{eq:varg_nu}}. Note that $\var(\hat g)  = N^{-1} h^{-2}E\zeta_i^2$, where $\zeta_i = K\{(X_i - x) / h\}Y_i  - EK\{(X_i - x) / h\}Y_i$. Similarly, we can compute that
		\begin{align*}
			&E\left\{K\left(\frac{X_i- x}{h} \right) Y_i \right\}^2 = E\left[K\left(\frac{X_i- x}{h} \right) \Big\{\mu(X_i) + \varepsilon_i \Big\}  \right]^2 \\
			=& E\left\{K^2\left(\frac{X_i- x}{h} \right) \mu^2(X_i)  \right\} +  \sigma^2 EK^2\left(\frac{X_i- x}{h} \right) \\
			=&\int K^2\left(\frac{u- x}{h} \right) \mu^2(u) f(u)du + \sigma^2 \int K^2\left(\frac{u- x}{h} \right) f(u)du \\
			=& h \int K^2(t) \mu^2(x+ht) f(x+ht)dt + h \sigma^2 \int K^2(t) f(x+ht)dt\\
			=& h (\mu^2 + \sigma^2) \nu_0 f  + O(h^2).
		\end{align*} 
		By proof of \eqref{eq:Eg_nu}, we know that $\left[ EK\left\{(X_i- x)/h\right\}\right]^2 = O(h^2)$.
		Then we have $E\zeta_i^2 = (\mu^2 + \sigma^2) \nu_0 f h+ O(h^2)$, and thus \eqref{eq:varf_nu} follows.

		\noindent\textbf{Proof of \eqref{eq:covfg_nu}}. Note that $\cov(\hat f, \hat g) = N^{-1} h^{-2}E \xi_i \zeta_i$, where $\xi_i = K\{(X_i - x) / h\}  - EK\{(X_i - x) / h\}$ and $\zeta_i = K\{(X_i - x) / h\}Y_i  - EK\{(X_i - x) / h\}Y_i$. Similarly, we can compute that
		\begin{align*}
			&EK^2\left(\frac{X_i- x}{h} \right) Y_i =EK^2\left(\frac{X_i- x}{h} \right) \mu(X_i) \\
			=& h \int K^2(t) \mu(x+ht) f(x+ht)dt
			= h \nu_0 \mu f  + O(h^2),
		\end{align*} 
		By proofs of \eqref{eq:Ef_nu} and \eqref{eq:Eg_nu}, we know that $\left[ EK\left\{(X_i- x)/h\right\}\right] \left[ EK\left\{(X_i- x)/h\right\} Y_i\right]  = O(h^2)$.
		Then we have $E\xi_i \zeta_i = \nu_0 \mu f h+ O(h^2)$, and thus \eqref{eq:covfg_nu} follows.

	\end{proof}

	\begin{lemma} \label{lemma:fg_p}
		Under the conditions assumed in Theorem \ref{thm:GPA_multi}, we have
		\begin{align}
			&E \hat f =  f + \frac{\kappa_2 }{2 }  \tr(\ddot f)  h^{2}  + o(h^2) \label{eq:Ef_p}\\
			&\var(\hat f) = \frac{\nu_0^p  f}{Nh^p} + O\left(\frac{1}{Nh^{p-1}}\right), \label{eq:varf_p} \\
			&E \hat g = \mu f + \frac{\kappa_2 }{2 } \left\{\mu \tr(\ddot f) + f \tr(\ddot \mu) + 2 \dot \mu^\top \dot f \right\} h^2 + o(h^{\nu+1}), \label{eq:Eg_p}\\
			&\var(\hat g) = \frac{(\mu^2 + \sigma^2)\nu_0^p f }{Nh^p} + O\left(\frac{1}{N h^{p-1}}\right), \label{eq:varg_p}\\
			&\cov(\hat f, \hat g) = \frac{\nu_0^p \mu f }{Nh^p} +  O\left(\frac{1}{N h^{p-1}} \right), \label{eq:covfg_p}
		\end{align}
	\end{lemma}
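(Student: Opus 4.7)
The plan is to mimic the univariate arguments in Lemma \ref{lemma:fg_nu} but exploit the product structure of $\bK(\bt)=\prod_{s=1}^p K(t_s)$ together with a multivariate (second order) Taylor expansion justified by condition \ref{cond:smoothness_p}. All five identities come from computing $E\bK_h(X_i-\bx)$, $E\bK_h(X_i-\bx)Y_i$, $E\bK_h^2(X_i-\bx)$, $E\bK_h^2(X_i-\bx)Y_i$, and $E\bK_h^2(X_i-\bx)Y_i^2$, after which the means, variances, and covariance follow by elementary algebra and the fact that $\varepsilon_i$ is mean-zero and independent of $X_i$.

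For \eqref{eq:Ef_p} I would change variables $\bt=(\bu-\bx)/h$, writing $E\hat f(\bx)=\int \bK(\bt)f(\bx+h\bt)\,d\bt$, and Taylor expand $f(\bx+h\bt)=f(\bx)+h\bt^\top\dot f(\bx)+(h^2/2)\bt^\top\ddot f(\bx)\bt+r(\bx,h\bt)$ with remainder $o(h^2\|\bt\|^2)$. The linear term drops out since $\int t_i\bK(\bt)\,d\bt=0$ by componentwise symmetry of $K$, and the quadratic term reduces to $(\kappa_2/2)\tr(\ddot f)h^2$ because the product structure gives $\int t_it_j\bK(\bt)\,d\bt=\kappa_2\delta_{ij}$; the remainder is $o(h^2)$ by dominated convergence and condition \ref{cond:smoothness_p}. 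For \eqref{eq:Eg_p} the same argument applied to $g(\bx)=\mu(\bx)f(\bx)$ works, using $\ddot g=\mu\ddot f+f\ddot\mu+\dot\mu\dot f^\top+\dot f\dot\mu^\top$ so that $\tr(\ddot g)=\mu\tr(\ddot f)+f\tr(\ddot\mu)+2\dot\mu^\top\dot f$; independence of $\varepsilon_i$ and $X_i$ lets me replace $Y_i$ by $\mu(X_i)$ inside the expectation.

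For the variances and covariance I would compute $\var\hat f=N^{-1}\{E\bK_h^2(X_i-\bx)-(E\bK_h(X_i-\bx))^2\}$ and similarly for $\hat g$. Another change of variables gives $E\bK_h^2(X_i-\bx)=h^{-p}\int\bK^2(\bt)f(\bx+h\bt)\,d\bt$, and again the linear term in the Taylor expansion vanishes because $\bK^2$ is componentwise symmetric, leaving $h^{-p}\{\nu_0^p f(\bx)+O(h^2)\}$, where $\nu_0^p=\prod_s\int K^2$. Dividing by $N$ isolates the leading $\nu_0^p f/(Nh^p)$; the remainder from Taylor, being $O(h^{2-p}/N)$, together with the subtracted $(E\bK_h)^2=O(1)$ term divided by $N$, is absorbed into the (loose) envelope $O(1/(Nh^{p-1}))$. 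The variance of $\hat g$ is handled identically, using $EY_i^2\mid X_i=\mu^2(X_i)+\sigma^2$ so that the leading constant becomes $(\mu^2+\sigma^2)\nu_0^p f$; the covariance uses $E[\bK_h^2(X_i-\bx)Y_i]=E[\bK_h^2(X_i-\bx)\mu(X_i)]$, yielding the leading $\nu_0^p\mu f/(Nh^p)$.

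There is no serious obstacle here beyond careful bookkeeping: the only non-routine point is recognizing that the Hessian contribution in the multivariate Taylor expansion collapses to a trace because of the product-kernel identity $\int t_it_j\bK(\bt)\,d\bt=\kappa_2\delta_{ij}$, and that the stated remainder $O(1/(Nh^{p-1}))$ is not the sharpest possible rate (the actual second-order correction is $O(1/(Nh^{p-2}))$) but rather a convenient common envelope absorbing both the $O(h^{2-p}/N)$ Taylor remainder and the $O(1/N)$ contribution from $(E\bK_h)^2$.
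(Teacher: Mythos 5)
Your proposal is correct and follows essentially the same route as the paper's proof: change of variables, a multivariate Taylor expansion using the product-kernel moment identities $\int \bt\,\bK(\bt)\,d\bt=0$ and $\int t_{s_1}t_{s_2}\bK(\bt)\,d\bt=\kappa_2\delta_{s_1s_2}$ to collapse the Hessian term to a trace, the Leibniz/chain rule for $\ddot g$, and independence of $\varepsilon_i$ and $X_i$ for the $Y_i$ moments. Your side remark that the $O(1/(Nh^{p-1}))$ envelope is not sharp is a fine observation but immaterial; the paper simply bounds the first-order Taylor remainder of $E\bK^2$ by $O(h^{p+1})$ without invoking the symmetry cancellation.
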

	
	\begin{proof}
		Note that 
		\begin{align*}
			% &\hat f= \hat f(\bx) = \frac{1}{Nh^p} \sum_{i=1}^N \prod_{s=1}^p K\left(\frac{X_{is}-x_s}{h} \right),\ \hat g= \hat g(\bx) = \frac{1}{Nh^p} \sum_{i=1}^N \prod_{s=1}^p K\left(\frac{X_{is}-x_s}{h} \right)Y_i.
			\hat f= \hat f(\bx) = \frac{1}{Nh^p} \sum_{i=1}^N \bK\left(\frac{X_{i}-\bx}{h} \right),\ \hat g= \hat g(\bx) = \frac{1}{Nh^p} \sum_{i=1}^N  \bK\left(\frac{X_{i}-\bx}{h} \right)Y_i,
		\end{align*}
		where $\bK(\bt) = \prod_{s=1}^p K(t_s)$ is the multivariate kernel function.
		Recall that condition \ref{cond:kernel} assumes $K(\cdot)$ is a symmetric probability density function. Thus, we should have
		\begin{align*}
			\int  \bK(\bt)  d\bt =& \prod_{s=1}^p \int K(t_s)   d t_s = 1,\\
			\int \bt \bK(\bt)  d\bt =& \left\{ \int t_s K(t_s)   d t_s,\ 1\le s \le p \right\}^\top = \boldsymbol{0} \in \mR^p, \\
			\int t_{s_1} t_{s_2} \bK(\bt)  d\bt =& 
			\begin{cases} 0 ,& \textup{ if } s_1 \ne s_2\\
				\kappa_2  ,& \textup{ if } s_1 = s_2
			\end{cases}\\
			\int \bK^2(\bt)  d\bt = &\prod_{s=1}^p \int K^2(t_s)   d t_s  = \nu_0^p,
			%\left\{\int K(t_{s_1}) t_{s_1}d t_{s_1}\right\} \left\{\int K(t_{s_2}) t_{s_2} d t_{s_2} \right\} =
		\end{align*}
		where $ \kappa_2= \int t^2 K(t)dt $ and $\nu_0 = \int K^2(t) dt$.
		We then proof the lemma as follows.
		
		\noindent\textbf{Proof of \eqref{eq:Ef_p}}. Note that $E \hat f = h^{-p} E \bK\{(X_{i}-\bx)/h \} $. In fact, 
		\begin{align*}
			% &E \prod_{s=1}^p K\left(\frac{X_{is}-x_s}{h} \right) = \int  \left\{\prod_{s=1}^p K\left(\frac{u_s-x_s}{h}\right)\right\} f(\bu) d\bu  = h^p \int \left\{\prod_{s=1}^p K(t_s)\right\} f(\bx+h \bt) d\bt
			&E \bK\left(\frac{X_{i}-\bx}{h} \right) = \int  \bK\left(\frac{\bu-\bx}{h}\right) f(\bu) d\bu  = h^p \int \bK(\bt) f(\bx+h \bt) d\bt \\
			=&h^p \int \bK(t) \left\{ f(\bx) + h \bt^\top \dot f(\bx) + (h^2/2) \bt^\top \ddot f(\tilde\bx) \bt  \right\} d \bt\\
			=&h^p f(\bx) + (h^{p+2} / 2) \left[ \int \bK(t) \bt^\top \ \ddot f(\bx)  \bt d\bt +  \int \bK(t) \bt^\top \Big\{\ddot f(\tilde\bx) - \ddot f(\bx)  \Big\} \bt d\bt   \right]\\
			=& h^p\left\{ f + \frac{\kappa_2 }{2 }  \tr(\ddot f )  h^{2}  + o(h^2) \right\},
		\end{align*}
		where $\tilde\bx$ is between $\bx$ and $\bx + h \bt$.
		Here, we have used the condition $\ddot f $ is continuous and the dominated convergence theorem in the last equality.  Then \eqref{eq:Ef_p} immediately follows.
		
		\noindent\textbf{Proof of \eqref{eq:varf_p}}. 
		Note that $\var(\hat f) = N^{-1} h^{-2p}E\xi_i^2$, where $\xi_i = \bK\{(X_i - \bx) / h\}  - E \bK\{(X_i - \bx) / h\}$. In fact, 
		\begin{align*}
			&E \bK^2\left(\frac{X_i- \bx}{h} \right) = \int \bK^2\left(\frac{\bu- \bx}{h} \right) f(\bu)d \bu
			= h^p \int \bK^2(\bt) f(\bx+h \bt)d\bt\\
			=& h^p \int \bK^2(\bt) \left\{ f(\bx)+ h\bt^\top \dot f(\tilde \bx)   \right\}dt=\nu_ 0^p f h^p + O(h^{p+1}).
		\end{align*}
		By proof of \eqref{eq:Ef_p}, we know that $\left[ E \bK\left\{(X_i- \bx)/h\right\}\right]^2 = O(h^{2p})$.
		Then we have $E\xi_i^2 = \nu_0^p f h^p  +  O(h^{p+1}) $, and thus \eqref{eq:varf_p} follows.
		
		\noindent\textbf{Proof of \eqref{eq:Eg_p}}. 
		Note that $E \hat g = h^{-p} E [\bK \{(X_i-\bx)/h  \} Y_i ] = h^{-p} E[\bK \{(X_i-\bx)/h  \} \mu(X_i) ]  $, since $\varepsilon_i$ is independent of $X_i$. Note that $g = \mu f$ has continuous second order derivative $\ddot g$ under the assumed conditions. Similar to the proof of \eqref{eq:Ef_p}, we can obtain that 
		\begin{align*}
			&E\bK\left(\frac{X_i-\bx}{h} \right) \mu(X_i) = \int \bK\left(\frac{\bu-\bx}{h}\right) \{\mu(\bu) f(\bu)\} d\bu  = h^{p} \int \bK(t) g(\bx+h\bt) d\bt \\
			=&h^p\left\{ g + \frac{\kappa_2 }{2 }  \tr(\ddot g)  h^{2}  + o(h^2) \right\}
		\end{align*}
		By chain rule, we have $\ddot g =  \mu \ddot f + \ddot \mu f + \dot \mu \dot f^\top + \dot f \dot \mu^\top$.
		Thus, $\tr(\ddot g)  = \mu \tr(\ddot f) + f \tr(\ddot \mu) + 2 \dot \mu^\top \dot f$, and \eqref{eq:Eg_p} follows immediately.
		
		\noindent\textbf{Proof of \eqref{eq:varg_p}}. Note that $\var(\hat g)  = N^{-1} h^{-2p}E\zeta_i^2$, where $\zeta_i = \bK\{(X_i - \bx) / h\}Y_i  - E\bK\{(X_i - \bx) / h\}Y_i$. Similarly, we can compute that
		\begin{align*}
			&E\left\{\bK\left(\frac{X_i- \bx}{h} \right) Y_i \right\}^2 = E\left[\bK\left(\frac{X_i- \bx}{h} \right) \Big\{\mu(X_i) + \varepsilon_i \Big\}  \right]^2 \\
			=& E\left\{\bK^2\left(\frac{X_i- \bx}{h} \right) \mu^2(X_i)  \right\} +  \sigma^2 E\bK^2\left(\frac{X_i- \bx}{h} \right) \\
			=&\int \bK^2\left(\frac{\bu- \bx}{h} \right) \mu^2(\bu) f(\bu)d\bu + \sigma^2 \int \bK^2\left(\frac{\bu- \bx}{h} \right) f(\bu)d\bu \\
			=& h^p \int \bK^2(\bt) \mu^2(\bx+h\bt) f(\bx+h\bt)d\bt + h^p \sigma^2 \int \bK^2(\bt) f(\bx+h\bt)d\bt\\
			=&h^p (\mu^2 + \sigma^2) \nu_0^p f   + O(h^{p+1}).
		\end{align*} 
		By proof of \eqref{eq:Eg_p}, we know that $\left[ E\bK\left\{(X_i- \bx)/h\right\}\right]^2 = O(h^{2p})$.
		Then we have $E\zeta_i^2 = (\mu^2 + \sigma^2) \nu_0^p f h^p+ O(h^{p+1})$, and thus \eqref{eq:varg_p} follows.

		\noindent\textbf{Proof of \eqref{eq:covfg_p}}. Note that $\cov(\hat f, \hat g) = N^{-1} h^{-2p}E \xi_i \zeta_i$, where $\xi_i = \bK\{(X_i - \bx) / h\}  - E\bK\{(X_i - \bx) / h\}$ and $\zeta_i = \bK\{(X_i - \bx) / h\}Y_i  - E\bK\{(X_i - \bx) / h\}Y_i$. Similarly, we can compute that
		\begin{align*}
			&E\bK^2\left(\frac{X_i- \bx}{h} \right) Y_i =E\bK^2\left(\frac{X_i- \bx}{h} \right) \mu(X_i) \\
			=& h^p \int \bK^2(t) \mu(\bx+h\bt) f(\bx+h\bt)d\bt
			= h^p \nu_0^p \mu f  + O(h^{p+1}),
		\end{align*} 
		By proofs of \eqref{eq:Ef_p} and \eqref{eq:Eg_p}, we know that $\left[ E\bK\left\{(X_i- \bx)/h\right\}\right] \left[ EK\left\{(X_i- \bx)/h\right\} Y_i\right]  = O(h^{2p})$.
		Then we have $E\xi_i \zeta_i = \nu_0^p \mu f h^p+ O(h^{p+1})$, and thus \eqref{eq:covfg_p} follows.
		
	\end{proof}

	% \begin{lemma}
		%     Suppose $\bx_1,\dots, \bx_{p+1}$ are $p+1$ linearly independent points in $ \mR^p$, and $\mC = \{\sum_{k=1}^{p+1} w_k\bx_{j} : \sum_{k=1}^{p+1} w_k = 1 \textup{ and } w_k \ge 0 \textup{ for each } k\}$ is a $p$-simplex with $\bx_k$s as vertices. Let $B(\bc,r) = \{\bx: \|\bx-\bc\|\le h \} \subset \mR^p$ be a ball with radius $r$ containing $\mC$. Suppose $L(\bx) = \sum_{k=1}^{p+1} q_k(\bx) y_k$ is a linear interpolation polynomial of $p+1$ points $(\bx_k, y_k),\ 1\le k \le p+1$. Then we have $\sum_{k=1}^p q_k(\bx) = 1$. Furthermore, if $y_k = \mu(\bx_k)$ for each $1\le k \le p+1$ and $\mu(\cdot)$ is a twice differentiable function, then we have 
		%     \begin{align*}
			%         | L(\bx) - \mu(\bx) | \le \frac{1}{2} (r^2 - \|\bx - \bc\|^2) \sup_{\bx\in \mC} \|\ddot\mu(\bx)\|, \quad \forall \bx\in\mC.
			%     \end{align*}
		
		% \end{lemma}

	\section{Extension to Non-compactly Supported Covariate} \label{append:non_compact}
	
	The proposed GPA method can be extended to the case where the covariate $X_i$ is not compactly supported. Without loss of generality, assume that the probability density function of $X_i$ is supported on the whole real line $\mR$. In such cases, we can consider a closed interval with a slowly diverging size, for example, $\mI_N = [-\log N, \log N]$. 
	Similarly, we can set $[2 \log N] J + 1$ equally spaced grid points $x_j^*$ on $\mI_N$, so that the inter-grid point distance remains $\Delta = 2\log N / ([ 2\log N] J) =O (1/J)$. Here, $[ r ]$ denotes the integer part of $r\in\mR$. It is worth noting that for any fixed $x\in\mR$, we should have $x\in \mI_N$ for sufficiently large $N$. 
	In this case, we should have $x \in [x_j^*, x_{j+1}^*]$ for some $j$. We can then conduct linear interpolation at $x$, following \eqref{eq:interpolation}, to obtain the GPA estimator $\hat \mu_{\GPA}(x)$ as
	\begin{align*} \label{eq:interpolation}
		\hat\mu_\textup{GPA}(x) = & \frac{x_{j+1}^* - x}{\Delta} \hat \mu(x_j^* ) + \frac{ x- x_j^*}{\Delta} \hat \mu(x_{j+1}^*).
	\end{align*}
	If $x\not \in \mI_N$, we simply set $\hat\mu_\textup{GPA}(x)  = 0$.
	Then, it can be shown that the asymptotic results in Theorem \ref{thm:GPA_asymptotic} remain valid; the proof is provided at the end of this section.
	This implies that we need $Jh\to\infty$ such that the resulting GPA estimator can be as efficient as the global estimator. Recall that $h = CN^{-1/5}$ for some constant $C>0$. Thus, the number of grid points should satisfy $[2 \log N] J + 1 \gg N^{1/5} \log N$. This means that the required number of grid points is only $\log N$ times that of the case with compact support.
	
	\textbf{Proof of Theorem \ref{thm:GPA_asymptotic} for non-compactly supported covariate.}  The proof is almost the same as that of Theorem \ref{thm:GPA_asymptotic} in Appendix \ref{append:thm:GPA_asymptotic}. Here, we provide a brief outline.
	
	First, note that $f(\cdot)$ and $\mu(\cdot)$ and their first two derivatives are continuous and hence bounded in a neighborhood of $x$ under condition \ref{cond:smoothness}.
	Then we can verify that the conclusions for the global estimator $\hat \mu(x)$ in Theorem \ref{thm:global&OS} (a) remain valid, provided conditions \ref{cond:smoothness}--\ref{cond:bandwidth} and $f(x)>0$. Thus, we should have $\hat \mu(x) - \mu(x) = Q_1 + Q_2 + \mQ$, where $\sqrt{Nh}\Big\{Q_1(x) - B(x)h^2\Big\} \to_d \mN\Big(0, V(x)\Big)$, $E Q_2 = O\left(h/N+h^4\right)$, $\var(Q_2)=O\left\{1/(Nh)^2\right\}$, and $ \mQ = O_p\left\{ 1/ (Nh)^{3/2} \right\}$. 
	
	To prove the asymptotic results in the theorem, it suffices to deal with the case that $N$ is sufficiently large such that $x\in\mI_N=[-\log N, \log N]$.
	Then we can assume that $x\in[x_j^*, x_{j+1}^*]$ for some $j$.
	Since $f(x)>0$, $f(\cdot)$ is continuous, and $\max_{k=j,j+1} |x_j^*-x| \le \Delta\to 0$, we should have $f(x_k^*) > 0$ for $k=j,j+1$ as well. Thus, the conclusions in Theorem \ref{thm:global&OS} (a) also hold for $\hat\mu(x_k^*),\ k=j,j+1$. 
	Then by the linear interpolation formula, we should have 
	$\hat\mu_\textup{GPA}(x) - \mu(x) = \tilde{Q}_0  + \tilde{Q}_1 + \tilde Q_2 + \tilde \mQ $, where 
	$\tilde Q_0 = \tilde{Q}_0(x) = q(x) \mu(x_j^* ) + \{1-q(x)\} \mu(x_{j+1}^*) - \mu(x)$,
	$\tilde Q_1 = \tilde{Q}_1(x) = q(x) Q_1(x_j^* ) + \{1-q(x)\} Q_1(x_{j+1}^*)$, 
	$\tilde Q_2 = \tilde{Q}_2(x) = q(x) Q_2(x_j^* ) + \{1-q(x)\}  Q_2(x_{j+1}^*)$, and 
	$\tilde \mQ   = \tilde \mQ(x)   = q(x)\mQ(x_j^* ) + \{1-q(x)\}  \mQ(x_{j+1}^*) $. Here, $q(x) = (x_{j+1}^* - x) / \Delta \in [0,1]$ and $1-q(x) = (x-x_j^*) / \Delta$ are the linear interpolation coefficients. 
	Recall that $\max_{k=j,j+1}|x_j^*-x| \le \Delta = O(1/J)$ and $Jh\to\infty$. Then, we can use the same arguments in Appendix \ref{append:thm:GPA_asymptotic} to obtain the desired results. This completes the proof.

	\section{Discussion of Basis Expansion Method}
	\label{append:basis_expansion}
	
	Another promising way to deal with prediction issue for nonparametric regression model is basis expansion method \citep{hastie2009elements}. 
	Specifically, we can approximate the mean function $\mu(\cdot)$ by the linear combination of a set of appropriately selected basis functions $B_k(x)$ with $1\le k\le K$. 
	% For example, we can use the spline basis (e.g., B-spline basis or piecewise polynomial basis).
	Then we have $\mu(x) \approx \sum_{k=1}^K \alpha_k B_k(x)$, where $\{\alpha_k\}_{k=1}^K$ is a set of coefficients needs to be estimated. Once $\alpha_k$ is estimated (denoted by $\hat \alpha_k$), we can then predict $\mu(X^*)$ by $\hat \mu(X^*) = \sum_{k=1}^K \hat \alpha_k B_k(X^*)$ for any newly observed $X^*$. In this case, no more estimation is needed.
	
	Nevertheless, we should remark that there is also a significant price paid by this method. That is how to estimate $\balpha = (\alpha_1,\dots,\alpha_K)^\top \in \mR^K$. One natural way is the method of least squares estimation, which is obtained by minimizing the loss function $\mL(\balpha) = \sum_{i=1}^N \big\{Y_i - \balpha^\top B(X_i)  \big\}^2$, where $B(x) = (B_1(x),\dots,B_K(x))^\top \in \mR^K$ . This leads to $\hat  \balpha = (\bB^\top \bB)^{-1} \bB^\top \bY$, where $\bB = \big(B(X_1), \dots, B(X_N)\big)^\top \in \mR^{N\times K}$, and $\bY = (Y_1,\dots,Y_N)^\top \in \mR^N$. 
	If all the data points are placed on one single machine, then the calculation of $\hat \balpha$ is straightforward. However, if the data are distributed on different local machines, then how to compute $\hat \balpha$ becomes a tricky issue. There exist two possible solutions. 
	
	{\sc Solution 1. (Then one-shot type estimator)} We can compute the local estimator on $m$-th machine by minimizing $\mL_m(\balpha) = \sum_{i=1}^N \big\{Y_i - \balpha^\top B(X_i)  \big\}^2$. This leads to the $m$-th local estimator $\hat  \balpha_m = (\bB_m^\top \bB_m)^{-1} \bB_m^\top \bY_m$, where $\bB_m = \{B(X_i):i\in \mS_m \}^\top \in \mR^{n\times K}$ and $\bY_m = \{Y_i:i\in \mS_m \}^\top \in \mR^n$. Subsequently, the central machine collects these local estimators and constructs the one-shot (OS) estimator as $\hat \balpha_{\textup{OS}} = M^{-1} \sum_{m=1}^M \hat  \balpha_m$.
	The merit of this method is low communication cost since only a $K$-dimensional vector $\hat\alpha_m$ needs to be transmitted from each local machine to the central one. However, it requires that the data should be randomly distributed across different machines. Otherwise, the local estimators $\hat\alpha_m\ (1\le m\le M)$ could be seriously biased. This might make the final estimator $\hat\alpha_\textup{OS}$ also seriously biased. In contrast, our method is free from such a problem. 
	
	{\sc Solution 2. (The moment assembling method)} By this method, each local machine should report to the central machine two moment estimators. They are, respectively, $\bB_m^\top \bB_m $ and $\bB_m^\top \bY_m$. Then, the central machine can calculate $\hat \balpha$ as $\hat \balpha = \big(\sum_{m=1}^M \bB_m^\top \bB_m \big)^{-1} \sum_{m=1}^M \bB_m^\top \bY_m$.
	Note that $\bB_m^\top \bB_m$ is a $K\times K$ matrix. Moreover, we need $K\to\infty$ as $N\to \infty$. 
	Suppose we use second-order spline basis to approximate the twice continuously differentiable mean function $\mu(\cdot)$.
	Then by the theory of \cite{shen1998local}, the optimal $K$ should be of the order $K = O(N^{1/5})$. Thus, the communication cost of transmitting $\bB_m^\top \bB_m $ is of the order $O(N^{2/5})$. This leads to an expensive communication cost. In contrast, the communication cost of our GPA method only needs to be any order larger than $O(N^{1/5})$.
	
	%\section{Comparison with Other Methods}

	\section{Additional Numerical Experiments}
	\label{append:comparison}
	
	In this subsection, we compare the GPA methods with other methods.
	Specifically, we consider the one-shot type kernel estimator $\hat\mu_{\OS}$, the linear interpolation based GPA estimator $\hat\mu_{\GPA}$, the cubic polynomial interpolation based estimator $\hat\mu_{\PGPA, 3}$, the one-shot type basis expansion estimator $\hat\mu_{\BE, \OS}$, the moment assembling (MA) type basis expansion estimator $\hat\mu_{\BE, \MA}$, and the RKHS based one-shot type kernel ridge
	regression (KRR) estimator $\hat\mu_{\KRR, \OS}$. 
	For the cubic polynomial interpolation based estimator $\hat\mu_{\PGPA, 3}$, we use the 4th-order kernel function $K(u) = (45/32)(1-7u^2/3)(1-u^2)I(|u|\le 1) $.
	The two types of basis expansion estimators are briefly discussed in the Appendix \ref{append:basis_expansion}. 
	For the one-shot type KRR estimator $\hat\mu_{\KRR, \OS}$, we use the reproducing kernel function $R(s,t) = (-1/4!) B_4(\textup{frac}\{s-t\})$ \citep{wahba1990spline}, where $\textup{frac}\{x\}$ denotes the fractional part of $x$, and $B_4(\cdot)$ is the 4th Bernoulli polynomials \citep{olver2010nist}. 
	We use the pilot sample approach described in Section \ref{subsec:bandwidth} to choose the tuning parameters for each method, those are, the bandwidth for the GPA estimators, the number of basis for the basis expansion estimators, and the regularization parameter for the KRR estimator. 
	
	We generate $X_i$ from the uniform distribution $\text{Unif}(0,1)$, and consider a mean function $\mu_3(x) = 24\sqrt{x(1-x)} \sin\{2.1\pi /(x+0.05)\}$ from \cite{fan1996local}. 
	Once $X_i$ is obtained, $Y_i$ is generated according to $Y_i = \mu(X_i) + \varepsilon_i$ with $\varepsilon_i$ simulated from the standard normal distribution. 
	In this experiment, we consider two different training sample sizes $N = (5\times 10^4, 1\times 10^5)$ and two different testing sample sizes $N^*=(1\times 10^4, 2\times 10^4)$. We fix the number of machines $M=50$, and the pilot sample size $n_0 = 2,000$. We further consider two different local sample allocation strategies, i.e., random partition and nonrandom partition, as in Section \ref{subsec:dist_KE}.
 Once the whole training sample has been partitioned, we employ different methods to make predictions for the testing sample and then compute the corresponding RMSE values. In addition, the computation and communication times for both the training phase (i.e. $T_\textup{cp}$ and $T_\textup{cm}$) and the prediction phase (i.e. $P_\textup{cp}$ and $P_\textup{cm}$) are recorded. We replicate each experiment $B=100$ times. The RMSE values and various time costs over $100$ replications for different sample allocation strategies are then averaged.
 The detailed results are given in Tables \ref{tab:RMSE_Time} and \ref{tab:RMSE_Time_2}.

		% {\sc Strategy 1.} (Random Partition) We randomly and evenly divide the whole sample $\mS$ into $M$ disjoint subsets $\mS_m\ (1\le m\le M)$ such that the data distribution on different machines are the same. This is an ideal scenario for one-shot type estimators.
		
		% {\sc Strategy 2.} (Nonrandom Partition) We assign the samples to different machines according to the predictor value. Specifically, let $X_{(1) } \le X_{(2) } \le \cdots\le X_{(N) } $ be the order statistic of $X_i$s. Thus, $(i)$ represents the index of the $i$-th smallest sample point for each $1\le i\le N$. We then assign $\mS_m = \{(i): (m-1)n+1 \le i \le mn \}$ to the $m$-th local machine, where $n = N / M$ is the local sample size. In this case, the data distributions on different machines are very different.

	\begin{table}[htbp]
		\caption{The averaged RMSE values are computed based $100$ random replications for different estimators by the random partition strategy. $T_\textup{cp}$ and $T_\textup{cm}$ stand for the computation and communication times for the training phase, respectively. $P_\textup{cp}$ and $P_\textup{cm}$ stand for the computation and communication times for the prediction phase, respectively. }\label{tab:RMSE_Time}
		% The averaged RMSE values and the computation and communication time costs in the training and prediction phases for different distributed estimators by sample assignment {\sc Strategy 1}.}
	\centering
	\begin{tabular}{>{\centering\arraybackslash}p{0.15\textwidth}| >{\centering\arraybackslash}p{0.12\textwidth} *{6}{>{\centering\arraybackslash}p{0.08\textwidth}} }
		\toprule
		Sample Size  &  & $\hat\mu_{\OS}$ & $\hat\mu_{\GPA}$ & $\hat\mu_{\PGPA, 3}$ & $\hat\mu_{\BE, \OS}$ & $\hat\mu_{\BE, \MA}$&$\hat\mu_{\KRR, \OS}$\\
		\hline
		&RMSE                                      &  0.172 & 0.167 & 0.144 & 0.298 & 0.266 & 0.240 \\
		$N=5\times10^4$   &$T_\textup{cp}$ (sec.)  &  0.000 & 0.169 & 0.088 & 0.079 & 0.098 & 0.330 \\
		&$T_\textup{cm}$ (sec.)                    &  0.000 & 0.191 & 0.154 & 0.155 & 0.485 & 0.000 \\
		$N^*=1\times10^4$ &$P_\textup{cp}$ (sec.)  &  1.007 & 0.003 & 0.103 & 0.048 & 0.043 & 2.159 \\
		&$P_\textup{cm}$ (sec.)                    &  0.384 & 0.000 & 0.000 & 0.000 & 0.000 & 0.554 \\
		\hline
		&RMSE                                      &  0.131 & 0.121 & 0.110 & 0.185 & 0.172 & 0.118 \\
		$N=1\times10^5$   &$T_\textup{cp}$ (sec.)  &  0.000 & 0.449 & 0.235 & 0.199 & 0.253 & 1.309 \\
		&$T_\textup{cm}$ (sec.)                    &  0.000 & 0.222 & 0.193 & 0.198 & 0.507 & 0.000 \\
		$N^*=2\times10^4$ &$P_\textup{cp}$ (sec.)  &  4.690 & 0.004 & 0.113 & 0.086 & 0.075 & 9.290 \\
		&$P_\textup{cm}$ (sec.)                    &  0.802 & 0.000 & 0.000 & 0.000 & 0.000 & 1.085 \\
		\bottomrule
	\end{tabular}
\end{table}

Table \ref{tab:RMSE_Time} reports the detailed simulation results for the random partition strategy.
By Table \ref{tab:RMSE_Time} we obtain the following interesting findings. 
First, the RMSE values for all distributed estimators decrease as the training sample size increases. This verifies the consistency of these estimators for the random partition strategy.
Second, we find that nonzero time cost due to communication (i.e., $P_\textup{cm}$) is required by the one-shot kernel estimator $\hat\mu_{\OS}$ and the KRR estimator $\hat\mu_{\KRR, \OS}$ for prediction. In contrast, zero communication cost is needed by the other methods (including the two GPA estimators) since efficient prediction can be done solely by the central machine. 
Furthermore, we observe that the communication times of two GPA estimators are lower than that of the basis expansion estimator $\hat\mu_{\BE, \MA}$, which is computed by the moment assembling. 
For instance, when sample size $N=5\times 10^4$, the communication times (i.e., $T_\textup{cm}$) of the two GPA estimators are less than $0.2$ seconds, whereas that of the estimator $\hat\mu_{\BE, \MA}$ is over $0.4$ seconds.
This finding supports our discussion in Appendix \ref{append:basis_expansion}, where we highlight that the moment assembling method could incur higher communication costs than the GPA method.
Lastly, we find that the RMSE values of the cubic polynomial interpolated estimator $\hat\mu_{\PGPA, 3}$ are smaller than that of the linear interpolation based GPA estimator $\hat\mu_{\GPA}$. This also corroborates our theoretical findings in Theorem \ref{thm:GPA_nu} very well. That is the higher order interpolation can lead to a more accurate estimator if the underlying mean function is sufficiently smooth.

Table \ref{tab:RMSE_Time_2} reports the results for the nonrandom partition strategy.
By Table \ref{tab:RMSE_Time_2} we find that NA values are reported for the one-shot kernel estimator $\hat\mu_{\OS}$. 
This is because the support of $X_i$s on the local machine may not cover all $X_i^*$s in the testing sample in this case. Consequently, these machines cannot make predictions for these testing observations, and thus $\hat\mu_{\OS}$ is not well defined.
Moreover, extremely large RMSE values are obtained for both the $\hat\mu_{\BE, \OS}$ and $\hat\mu_{\KRR, \OS}$. This is because both the $\hat\mu_{\BE, \OS}$ and $\hat\mu_{\KRR, \OS}$ are inconsistent with the nonrandom partition strategy.
Lastly, the two GPA estimators and the moment assembling type basis expansion estimator (i.e., $\hat\mu_{\BE, \MA}$) are not affected by the nonrandom partition.
Furthermore, similar to the results in  Table \ref{tab:RMSE_Time}, both GPA estimators require significantly less time cost for communication time during the training phase. 

\begin{table}[htbp]
	\caption{The averaged RMSE values are computed based $100$ random replications for different estimators by the nonrandom partition strategy. $T_\textup{cp}$ and $T_\textup{cm}$ stand for the computation and communication times for the training phase, respectively. $P_\textup{cp}$ and $P_\textup{cm}$ stand for the computation and communication times for the prediction phase, respectively.}\label{tab:RMSE_Time_2}
	% The averaged RMSE values and the computation and communication time costs in the training and prediction phases for different distributed estimators by sample assignment {\sc Strategy 2}.}
\centering
\begin{tabular}{>{\centering\arraybackslash}p{0.15\textwidth}| >{\centering\arraybackslash}p{0.12\textwidth} *{6}{>{\centering\arraybackslash}p{0.08\textwidth}} }
	\toprule
	Sample Size  &  & $\hat\mu_{\OS}$ & $\hat\mu_{\GPA}$ & $\hat\mu_{\PGPA, 3}$ & $\hat\mu_{\BE, \OS}$ & $\hat\mu_{\BE, \MA}$&$\hat\mu_{\KRR, \OS}$\\
	\hline
	&RMSE                                      &  NA & 0.167 & 0.144 & 105.033 & 0.266 & 16.085 \\
	$N=5\times10^4$   &$T_\textup{cp}$ (sec.)  &  NA & 0.175 & 0.096 &   0.081 & 0.130 &  0.321 \\
	&$T_\textup{cm}$ (sec.)                    &  NA & 0.176 & 0.157 &   0.152 & 0.442 &  0.000 \\
	$N^*=1\times10^4$ &$P_\textup{cp}$ (sec.)  &  NA & 0.003 & 0.098 &   0.048 & 0.038 &  2.141 \\
	&$P_\textup{cm}$ (sec.)                    &  NA & 0.000 & 0.000 &   0.000 & 0.000 &  0.552 \\
	\hline
	&RMSE                                      &  NA & 0.121 & 0.110 & 61.323 & 0.172 & 22.069 \\
	$N=1\times10^5$   &$T_\textup{cp}$ (sec.)  &  NA & 0.385 & 0.194 &  0.195 & 0.301 &  1.159 \\
	&$T_\textup{cm}$ (sec.)                    &  NA & 0.227 & 0.193 &  0.201 & 0.516 &  0.000 \\
	$N^*=2\times10^4$ &$P_\textup{cp}$ (sec.)  &  NA & 0.005 & 0.119 &  0.085 & 0.074 &  6.487 \\
	&$P_\textup{cm}$ (sec.)                    &  NA & 0.000 & 0.000 &  0.000 & 0.000 &  0.911 \\
	\bottomrule
\end{tabular}
\end{table}

\end{appendix}
}\fi

\end{document}